\documentclass[12pt]{article}

\usepackage{acro}
\usepackage{amsmath}
\usepackage{amssymb}
\usepackage{amsthm}
\usepackage{bm}
\usepackage{hyperref}
\usepackage{cleveref}
\usepackage{fullpage} 
\usepackage{natbib}
\usepackage{xcolor}
\usepackage{mathtools}
\usepackage{enumerate}  
\usepackage{chngcntr} %
\usepackage{placeins}
\usepackage{setspace} %

\newtheorem{definition}{Definition}
\newtheorem{theorem}{Theorem}
\newtheorem{example}{Example}
\newtheorem{remark}{Remark}
\newtheorem{lemma}{Lemma}
\newtheorem{proposition}{Proposition}

\DeclareMathOperator*{\argmin}{arg\,min}

\DeclareAcronym{GP}{
    short = {GP},
    long  = {Gaussian process},
    long-plural = {es}
}
\DeclareAcronym{GRE}{
    short = {GRE},
    long  = {Gauss--Richardson Extrapolation}
}
\DeclareAcronym{HPC}{
    short = {HPC},
    long  = {high performance computing}
}
\DeclareAcronym{SPRE}{
    short = {SPRE},
    long  = {Sparse Probabilistic Richardson Extrapolation}
}
\DeclareAcronym{PDE}{
    short = {PDE},
    long  = {partial differential equation}
}
\DeclareAcronym{PRE}{
    short = {PRE},
    long  = {Probabilistic Richardson Extrapolation}
}
\DeclareAcronym{RKHS}{
    short = {RKHS},
    long  = {reproducing kernel Hilbert space}
}
\DeclareAcronym{MRE}{
    short = {MRE},
    long  = {multivariate Richardson extrapolation}
}
\DeclareAcronym{LOOCV}{
    short = {LOOCV},
    long  = {leave-one-out cross-validation}
}
\DeclareAcronym{HIFI}{
    short = {HiFi},
    long  = {high fidelity}
}
\DeclareAcronym{LOFI}{
    short = {LoFi},
    long  = {low fidelity}
}
\DeclareAcronym{MFM}{
    short = {MFM},
    long  = {multi-fidelity model}
}
\DeclareAcronym{MuJoCo}{
    short = {MuJoCo},
    long  = {Multi-Joint dynamics with Contact}
}

\newif\ifshowalttext
\showalttextfalse 
\newcommand{\alttext}[1]{%
\ifshowalttext
{\color{green!50!black}Alt text: #1}%
\fi
}

\begin{document}

\title{Sparse Probabilistic Richardson Extrapolation}
\author{Chris. J. Oates$^1$\footnote{Correspondence should be sent to \texttt{chris.oates@ncl.ac.uk}.} , Richard Howey$^2$, Toni Karvonen$^{3}$ \\ \vspace{-5pt}
\footnotesize $^1$ School of Mathematics, Statistics and Physics, Newcastle University, UK \\ \vspace{-5pt}
\footnotesize $^2$ Research Software Engineering, Newcastle University, UK \\ \vspace{-5pt}
\footnotesize $^3$ School of Engineering Sciences, Lappeenranta--Lahti University of Technology LUT, FI 
}

\maketitle

\begin{abstract}
    Almost every numerical task can be cast as extrapolation with respect to the fidelity or tolerance parameters of a consistent numerical method.
    This perspective enables probabilistic uncertainty quantification and optimal experimental design functionality to be deployed, and also unlocks the potential for the convergence of numerical methods to be accelerated.
    Recent work established \emph{Probabilistic Richardson Extrapolation} as a proof-of-concept, demonstrating how parallel multi-fidelity simulation can be used to accelerate simulation from a whole-heart model.
    However, the number of simulations was required to increase super-exponentially in $d$, the number of tolerance parameters appearing in the numerical method.
    This paper develops a refined notion of \emph{extrapolation dimension}, drastically reducing this simulation requirement when multiple tolerance parameters feature in the numerical method.
    Sparsity-exploiting methodology is developed that is simultaneously simpler and more powerful compared to earlier work, and this is accompanied by sharp theoretical guarantees and substantial empirical support.

    \medskip

    \noindent \emph{Keywords:}  computer models, extrapolation, Gaussian process, kernel methods, uncertainty quantification
\end{abstract}

\section{Introduction}

Mathematical models are routinely used to represent mechanisms hypothesised to govern phenomena of interest. 
Inference for the parameters of these models demands a comparison of simulations from the model against a real-world dataset. 
The computational challenge is two-fold; to produce simulations from the model, and to search across the parameter space for values that provide an acceptable fit to the dataset.
As a motivating example, the whole-heart model of \citet{strocchi2023cell} involves dozens of parameters, and each simulation requires hundreds of CPU hours on \ac{HPC} equipment.
To drive progress in this, and other domains where sophisticated mathematical models are used, there is an urgent need for methodology to facilitate simulation and inference at reduced computational cost.
The focus of this paper is on the first challenge; accurate simulation of mathematical models with a limited compute budget.

Almost every numerical task can be cast as extrapolation with respect to the fidelity or tolerance parameters of a consistent numerical method.
Abstractly, we enumerate all of the \emph{discretisation parameters} involved in approximate simulation from a mathematical model as $\mathbf{x}$, such that each component of $\mathbf{x}$ controls the error due to a particular aspect of discretisation; for example, in a physical simulation $x_1$ could be a time step size, $x_2$ could be the width of a spatial mesh, and $x_3$ could be an error tolerance for an adaptive numerical method.
The simulator output is denoted $f(\mathbf{x})$, and we assume that the simulator is consistent, meaning that $f$ is continuous at $\mathbf{0}$ and exact simulation of the mathematical model corresponds to $f(\mathbf{0})$.
The computational cost of simulating $f(\mathbf{x})$ will be denoted $c(\mathbf{x})$, with $\lim_{\mathbf{x} \rightarrow \mathbf{0}} c(\mathbf{x}) = \infty$ being typical.

The \emph{extrapolation} task that we consider in this work is to produce an accurate approximation to $f(\mathbf{0})$, based on a dataset $\{ f(\mathbf{x}_i) \}_{i=1}^n$, such that the computational cost associated with running the simulations in the dataset remains within a prescribed budget.
In the case of a single scalar discretisation parameter, so-called \emph{extrapolation methods} such as Richardson extrapolation \citep{richardson1911ix} (a.k.a. polynomial extrapolation) are a classical yet powerful tool to reduce the cost of simulation, enabling the convergence of foundational (deterministic) numerical methods, such as quadratures and finite difference methods, to be accelerated.
A textbook treatment is provided in \citet{brezinski2013extrapolation}.

However, these tools were not developed with modern computer codes in mind, where multiple continua are discretised and convergence orders are not easily analysed.  
A statistical perspective on multivariate extrapolation offers not just an opportunity to accelerate simulation from complex mathematical models, but also the potential to use established statistical techniques to select a parsimonious regression model, the data-driven learning of convergence orders, integrating extrapolation uncertainty into decision-making, and opening up the potential for experimental design to be used to select the configurations $\{ \mathbf{x}_i \}_{i=1}^n$ at which simulation is to be performed.
More broadly, new statistical insights into extrapolation could be leveraged in wide-ranging applications, such as extrapolation of treatment effects \citep{fauvel2025comparison}, extrapolation of survival curves \citep{kearns2021comparing}, and causal inference \citep{kong2025towards}.

A first step towards statistical extrapolation methods was made in \citet{oates2024probabilistic}, who modelled the data-generating function $f$ using a \emph{numerical analysis-informed} \ac{GP}, leading to a method called \emph{Gauss--Richardson Extrapolation} (\acs{GRE}); we provide further details in \Cref{sec: GRE}.
The effectiveness of \ac{GRE} was demonstrated using the aforementioned whole-heart model of \citet{strocchi2023cell}, where extrapolation of clinical quantities of interest (e.g. the maximum volume of each chamber during a heartbeat) with respect to discretisation parameters $\mathbf{x} = (x_1,x_2)$ of the simulator (in that case, $x_1$ was a time step size and $x_2$ was a spatial mesh width) was considered.
In particular, the authors demonstrated that extrapolation of low-fidelity data can be more accurate than running one high-fidelity simulation at comparable computational cost, a finding that has the potential to change how cardiac simulations are performed.

Despite this initial success, there is a major limitation to \ac{GRE}; its theoretical justification is only meaningful in settings where the dimension $d = \mathrm{dim}(\mathbf{x})$ is extremely small.
Indeed, the established error bounds for \ac{GRE} \citep[][Theorems 2 and 4]{oates2024probabilistic} hold only when a minimum number of data $n_{\min}$ are provided, and this number can be lower-bounded by $(2^d d!)^d$, which is super-exponential in $d$.
For $d = 1$, this lower bound is $n_{\min} \geq 2$, which is best-possible since for extrapolation at least two data points are required.
For $d = 2$, this lower bound is $n_{\min} \geq 64$, which is large but could still be practical.
However, if we consider $d = 4$ discretisation parameters, \ac{GRE} requires a minimum of $2 \times 10^{10}$ simulations to be performed.
Given that even a single simulation of a computational cardiac model can require hundreds of CPU hours on \ac{HPC}, this data requirement is clearly impractical.
We are led to urgently ask the question, \emph{is better dimension dependence possible in statistical extrapolation?}

This paper provides a positive answer in two different respects:
First, we present an alternative approach called \emph{Sparse Probabilistic Richardson Extrapolation} (\acs{SPRE}).
As with \ac{GRE}, our \ac{SPRE} approach is based on \acp{GP}, but here we leverage a new notion of \emph{extrapolation dimension} which enables us to drastically reduce the number of simulations that need to be performed.
Indeed, we establish convergence acceleration guarantees for \ac{SPRE} for which the data requirement is typically polynomial in $d$, as opposed to super-exponential.
Second, through several detailed case studies we identify low extrapolation dimension as a general phenomenon in real-world applications, in the sense that all of the simulators we consider exhibit some form of low effective dimension (or \emph{sparsity}) which can be directly exploited in \ac{SPRE}.
Together, these contributions establish sparsity-exploiting extrapolation methods as a novel frontier for acceleration of complex simulators.

The remainder of the paper is organised as follows:
\Cref{sec: background} sets notation, introduces the concept of extrapolation dimension, and recalls existing extrapolation methods which serve as baselines for \ac{SPRE}.
Our novel methodology is contained in \Cref{sec: methods}, with several detailed case-studies contained in \Cref{sec: experiments}.
A discussion concludes in \Cref{sec: discussion}.
Code to reproduce our experiments can be downloaded from \url{https://github.com/NewcastleRSE/sparse-probabilistic-richardson-extrapolation/}.

\section{Background}
\label{sec: background}

Our set-up is specified in \Cref{subsec: setup}.
As a baseline, we recall a lesser-known multivariate extension to Richardson acceleration in \Cref{subsec: MRE}, while the \ac{GRE} method of \citet{oates2024probabilistic} is recalled in \Cref{sec: GRE}.
 
\subsection{Set-Up}
\label{subsec: setup}

\Cref{subsec: notation} introduces notation, \Cref{subsec: standing assum} clarifies our standing assumptions on the simulator, and the notion of extrapolation dimension is introduced in \Cref{subsec: extrap sparse}.

\subsubsection{Notation}
\label{subsec: notation}

Given $\bm{\alpha} \in \mathbb{N}_0^d$ , the multi-index notation $x^{\bm{\alpha}} = (x^{\alpha_1} , \dots , x^{\alpha_d}) \in \mathbb{R}^d$ for $x \in \mathbb{R}$ and $\mathbf{x}^{\bm{\alpha}} = x_1^{\alpha_1} \cdots x_d^{\alpha_d}$ for $\mathbf{x} \in \mathbb{R}^d$ will be used.
In particular, for a multi-index $\bm{\alpha} \in \mathbb{N}_0^d$ let $|\bm{\alpha}| = |\alpha_1| + \dots + |\alpha_d|$.
For $\mathbf{0} \in A \subset \mathbb{N}_0^d$, let $\mathcal{P}_A$ be the polynomial space of real-valued functions on $\mathbb{R}^d$ spanned by the $\mathbf{x}^{\bm{\alpha}}$ with $\bm{\alpha} \in A$, and let $\mathrm{Lead}(A)$ denote indices in $A \setminus \{\mathbf{0}\}$ corresponding to the \emph{leading order} monomials; i.e. for which $|\bm{\alpha}|$ is minimal.
For $\mathbf{0} \in A = \{\bm{\alpha}_i \}_{i=1}^m \subset \mathbb{N}_0^d$ and $X_n = \{\mathbf{x}_i\}_{i=1}^n \subset \mathbb{R}^d$, define the Vandermonde matrix
\begin{align*}
\mathbf{V}_A(X_n) = \left[ \begin{array}{ccc} \mathbf{x}_1^{\bm{\alpha}_1} & \dots & \mathbf{x}_1^{\bm{\alpha}_m} \\ \vdots & & \vdots \\ \mathbf{x}_n^{\bm{\alpha}_1} & \dots & \mathbf{x}_n^{\bm{\alpha}_m}
  \end{array} \right] ,
\end{align*}
noting the convention with which the elements of $A$ and $X_n$ are ordered.
In particular, we will assume that $\bm{\alpha}_1 = \mathbf{0}$ throughout.
The set $X_n$ is called $\mathcal{P}_A$-\emph{unisolvent} when $\mathbf{V}_A(X_n)$ has full rank, which in turn implies the existence and uniqueness in $\mathcal{P}_A$ of polynomial interpolants to data given at $X_n$ \citep[][Definition 2.6]{wendland2004scattered}.
This paper also employs the shorthand $[\mathbf{a},\mathbf{b}] = [a_1,b_1] \times \dots \times [a_d,b_d]$ and uses $\mathbf{e}_1 = [1,0,\dots,0]^\top$ for the first canonical basis vector of $\mathbb{R}^d$.

\subsubsection{Standing Assumptions on the Simulator}
\label{subsec: standing assum}

For this paper we focus on scalar-valued simulator output $f : \mathcal{X} \rightarrow \mathbb{R}$, but generalisation to multivariate and infinite-dimensional simulator output can be considered in an analogous manner to \citet[][Section 2.9]{oates2024probabilistic}.
It is assumed that $\mathbf{0} \in \mathcal{X} \subset [0,\infty)^d$ and that $\mathbf{0}$ is an accumulation point of $\mathcal{X}$, so that in principle it is possible to run simulations to arbitrary precision for a sufficiently high computational budget.

To formulate our main results we will assume that there exist an index set $\mathbf{0} \in A \subset \mathbb{N}_0^d$ and coefficients $\beta_{\bm{\alpha}}$ for which the residual
\begin{align}
R(\mathbf{x}) = f(\mathbf{x}) - \sum_{ \bm{\alpha} \in A } \beta_{\bm{\alpha}} \mathbf{x}^{\bm{\alpha}}  \label{eq: error}
\end{align}
satisfies $R(\mathbf{x}) = o(f(\mathbf{x})-f(\mathbf{0}))$ as $\mathbf{x} \rightarrow \mathbf{0}$.
The set $A$ will be referred to as the \emph{index} set; note that $A$ must contain more than just the element $\mathbf{0}$ for this property to hold. 
It is not assumed that the coefficients $\beta_{\bm{\alpha}}$ are known at the outset.
This assumption implies that the convergence rate of $f(\mathbf{x})$ to $f(\mathbf{0})$ is sub-optimal and thus can in principle be accelerated.
That is, a rate-optimal numerical method for the numerical task at hand would \emph{not} satisfy \eqref{eq: error}.
So why is extrapolation of interest?
High-order numerical methods are usually less stable compared to tried-and-tested low-order numerical methods; it is the reason that the trapezoidal rule is routinely used to numerically integrate smooth functions, and it is the reason that piecewise linear finite elements are routinely used to discretise \acp{PDE}, even if the (weak) solution to the \ac{PDE} admits higher than first order (weak) derivatives.
Further, since error analysis is usually intractable for complex simulators, the designer of a simulator will typically not risk using a high-order numerical method whose assumptions cannot be established.
Taken together, we can expect \eqref{eq: error} to hold in a broad range of applications where simulators are used.

\subsubsection{Extrapolation Dimension}
\label{subsec: extrap sparse}

The general set-up in \eqref{eq: error} encompasses a variety of expansions, with the cardinality of the index set $A$ being related to the data requirement for the extrapolation task:

\begin{definition}[Extrapolation dimension]
    The \emph{extrapolation dimension} $d_{\mathrm{ext}}(f)$ of a simulator $f$ is the minimal value of $\mathrm{dim}(A)$, i.e. the minimal cardinality of the index set $\mathbf{0} \in A \subset \mathbb{N}_0^d$ for which $f$ has a non-trivial representation \eqref{eq: error}.
\end{definition}

\noindent Put in practical terms, we will see that $d_{\mathrm{ext}}(f)$ is the minimum number of simulations required to accelerate convergence to $f(\mathbf{0})$; i.e. it is the number of coefficients $\beta_{\bm{\alpha}}$ in an expansion of the form \eqref{eq: error} that would need to be estimated.
To illustrate the definition with a toy example, if $f(\mathbf{x}) = f(\mathbf{0}) + \bm{\beta}^\top \mathbf{x}$ is an affine function with $\bm{\beta} \neq \mathbf{0}$, then $d_{\mathrm{eff}} = 1 + \|\bm{\beta}\|_0$, and indeed a minimum number $n_{\min} = 1 + \|\bm{\beta}\|_0$ of distinct data $f(\mathbf{x}_i)$ are required to recover $f(\mathbf{0})$ under this affine model.
(Here $\|\bm{\beta}\|_0$ denote the number of non-zero entries in the vector $\bm{\beta} \in \mathbb{R}^d$.)

A central argument to this paper, which we will substantiate with empirical support, is that \emph{if} a simulator admits an expansion of the form \eqref{eq: error}, then the cardinality of the index set $A$ is typically (relatively) small.
For example, if we consider a continuum $\mathcal{X} \subset [0,\infty)^d$ and a simulator $f$ that is $s$ times continuously differentiable at $\mathbf{0}$, for which not all derivatives vanish at $\mathbf{0}$, then from Taylor's theorem there is a non-trivial expansion \eqref{eq: error} with index set $A = \{\bm{\alpha} : |\bm{\alpha}| \leq s , \; \bm{\alpha} \in \mathbb{N}_0^d \}$ for which $R(\mathbf{x}) = o(\sum_{|\bm{\alpha}| = s} \mathbf{x}^{\bm{\alpha}} )$, and the dimension dependence of $\mathrm{dim}(A) \asymp d^s$ is polynomial.
Informally, we say that the extrapolation task $f$ exhibits \emph{sparsity}, in the sense that only a (relatively) small number of non-zero regression coefficients $\beta_{\bm{\alpha}}$ need to be considered\footnote{The terminology is consistent with variable selection in linear regression, where \emph{sparsity} refers to the number of non-zero regression coefficients, i.e. $\|\bm{\beta}\|_0$ in the notation of the previous paragraph.}.

However, many important numerical methods do not have tolerance parameters defined on a continuum $\mathcal{X}$; for example, the accuracy of certain cubature methods depends on the number of cubature nodes that are used.
This precludes the use of Taylor expansions in general, but nevertheless an expansion of the form \eqref{eq: error} can still hold.
A simple example illustrating the case of a discrete set $\mathcal{X}$ is presented in \Cref{subsec: cubature}.

\begin{example}[Extrapolation sparsity for a simple cubature method]
\label{subsec: cubature}
The Riemann midpoint approximation to the integral $f(\mathbf{0}) = \int_{[\mathbf{0},\mathbf{1}]} g(\mathbf{t}) \; \mathrm{d}\mathbf{t} $ is defined as $f(\mathbf{x}) = \sum_i \mathrm{vol}(C_i) g(\mathbf{t}_i)$ where $\mathbf{t}_i$ are the midpoints of the cells $C_i \subset [\mathbf{0},\mathbf{1}]$ formed by a Cartesian grid.
The side lengths of each cell are $\mathbf{x} = (x_1,\dots,x_d) \in \mathcal{X} := \{\mathbf{0}\} \cup (\{\frac{1}{n} : n \in \mathbb{N}\} )^d$, and $\mathrm{vol}(C_i) = \prod_{i=1}^d x_i$ is the volume of each cell.
If $g \in C^{2s+2}([\mathbf{0},\mathbf{1}])$, then 
\begin{align}
    f(\mathbf{x}) = \sum_{0 \leq |\bm{\alpha}| \leq s} \beta_{\bm{\alpha}} \mathbf{x}^{2\bm{\alpha}} + O(\|\mathbf{x}\|^{2s+2}) \label{eq: Riemann Taylor}
\end{align}
for some coefficients $\beta_{\bm{\alpha}}$ \citep[][Theorem 1.1 in Chapter 3]{liem1995splitting}.
\end{example}

More generally, extrapolation sparsity has been established (albeit not in this terminology) in the numerical analysis literature for many numerical methods, both for cubature \citep[][Chapter 3]{liem1995splitting} and for other fundamental numerical tasks such as the numerical solution of differential equations by finite element methods \citep[][Chapter 5, Section 3]{liem1995splitting}.
On the other hand, instances where extrapolation sparsity does \emph{not} hold appear to be mostly pathological; for example $f_d(\mathbf{x}) = \sum_{|\bm{\alpha}| = s(d)} \beta_{\bm{\alpha}} \mathbf{x}^{\bm{\alpha}}$ where $s(d)$ increases at least as fast as $d / \log(d)$.

However, until now the questions of whether -- or to what extent -- extrapolation sparsity holds for complex simulators has not been studied.
Further, if one was able to establish extrapolation sparsity for a complex simulator in a data-driven way, then it is natural to ask whether this sparsity can be exploited within the extrapolation task.
This sets the scene for \Cref{sec: methods}, but first we recall a lesser-known multivariate version of Richardson extrapolation (\Cref{subsec: MRE}) and the \ac{GRE} method of \citet{oates2024probabilistic} (\Cref{sec: GRE}), which will each serve as baselines for our novel \ac{SPRE} method.

\subsection{Multivariate Richardson Extrapolation}
\label{subsec: MRE}

Here we recall a multivariate generalisation\footnote{Some authors alternatively refer to the case where $x_1$ is varied and $x_2 , \dots , x_d$ are scaled in proportion to $x_1$ as the `multivariate version' of Richardson extrapolation \citep[][p45]{liem1995splitting}, calling what we present here the \emph{splitting extrapolation method} \citep{qun1983splitting,liem1995splitting}.} of the classical extrapolation technique of \citet{richardson1911ix}.
The idea is to exploit \eqref{eq: error}, approximating $f$ using an interpolant from $\mathcal{P}_A$, and then extrapolating this approximating polynomial to $\mathbf{0}$.
Although not a statistical extrapolation method, one of the main motivations for considering \ac{MRE} here is that the number $n_{\min}$ of data needed to implement it is $d_{\mathrm{ext}}(f)$, which is the minimal possible data requirement.
As such, \ac{MRE} serves as a (non-statistical) sparsity-exploiting benchmark for \ac{SPRE}.

As with classical Richardson extrapolation, \ac{MRE} is used to accelerate convergence and obtain a more accurate approximation to the mathematical object of interest.
To illustrate the convergence acceleration functionality of \ac{MRE} we present a general convergence result in \Cref{thm: SEM}.
For simplicity, here we suppose $\mathcal{X}$ is a continuum, rather than a discrete set, but we will allow for the possibility that $\mathcal{X}$ is a discrete set when we present our new methods in \Cref{sec: methods}.

\begin{theorem}[Convergence acceleration for \ac{MRE}]
\label{thm: SEM}
Let $\mathcal{X} = [\mathbf{0},\bm{1}] \subset \mathbb{R}^d$ and let $X_n = \{\mathbf{x}_i\}_{i=1}^n \subset \mathcal{X}$ be $\mathcal{P}_A$-unisolvent where $n = \mathrm{dim}(A)$.
Let $\mathcal{X}_h = [\mathbf{0}, h \mathbf{1}]$ and $X_n^h = \{ h \mathbf{x}_i\}_{i=1}^n$ for $h \in (0,1]$. 
Let $f_n^h \in \mathcal{P}_A$ interpolate $f$ on $X_n^h$.
Then there exists a constant $C$, depending only on $X_n$, such that
\begin{align}
\underbrace{ |  f_n^h(\mathbf{0}) - f(\mathbf{0}) | }_{\text{\normalfont extrapolation error}} \leq \underbrace{ \phantom{|} C }_{\text{\normalfont constant in $h$}} \; \; \underbrace{ \sup_{\mathbf{x} \in \mathcal{X}_h} | R(\mathbf{x}) | }_{\text{\normalfont acceleration}} ,  \label{eq: bound for MRE}
\end{align}
meaning that $f_n^h(\mathbf{0})$ converges faster than the original simulator output $f(\mathbf{x})$, for any $\mathbf{x} \in X_n^h$, as $h \rightarrow 0$ under the standing assumptions of \Cref{subsec: standing assum}.
\end{theorem}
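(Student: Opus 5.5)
The plan is to write the extrapolated value as a fixed linear functional of the data and then use the fact that this functional reproduces polynomials in $\mathcal{P}_A$ exactly. Since $n = \mathrm{dim}(A)$ and $X_n$ is $\mathcal{P}_A$-unisolvent, $\mathbf{V}_A(X_n)$ is square and invertible. Let $p_A(\mathbf{x}) = (\mathbf{x}^{\bm{\alpha}_1},\dots,\mathbf{x}^{\bm{\alpha}_n})^\top$. The interpolant is $f_n^h(\mathbf{x}) = p_A(\mathbf{x})^\top \mathbf{V}_A(X_n^h)^{-1} \mathbf{f}_h$, where $\mathbf{f}_h = (f(h\mathbf{x}_1),\dots,f(h\mathbf{x}_n))^\top$. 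Because $\bm{\alpha}_1 = \mathbf{0}$, we have $p_A(\mathbf{0}) = \mathbf{e}_1$, so
\begin{align*}
f_n^h(\mathbf{0}) = \mathbf{e}_1^\top \mathbf{V}_A(X_n^h)^{-1}\mathbf{f}_h = \sum_{i=1}^n w_i^h f(h\mathbf{x}_i).
\end{align*}

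First I establish exactness on $\mathcal{P}_A$. Interpolation is unique, so for any $q \in \mathcal{P}_A$ we have $\sum_i w_i^h q(h\mathbf{x}_i) = q(\mathbf{0})$. Now decompose $f = q + R$ with $q(\mathbf{x}) = \sum_{\bm{\alpha}\in A}\beta_{\bm{\alpha}}\mathbf{x}^{\bm{\alpha}}$ as in \eqref{eq: error}. Since $\mathbf{0} \in A$ and $\mathbf{0}^{\bm\alpha}=0$ for $\bm\alpha\neq\mathbf{0}$, we get $q(\mathbf{0}) = \beta_{\mathbf{0}}$ and hence $R(\mathbf{0}) = f(\mathbf{0}) - \beta_{\mathbf{0}}$. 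Combining these,
\begin{align*}
f_n^h(\mathbf{0}) - f(\mathbf{0}) = \sum_i w_i^h R(h\mathbf{x}_i) - R(\mathbf{0}).
\end{align*}
I then need $R(\mathbf{0}) = 0$. This follows from the standing assumption: $R(\mathbf{x}) = o(f(\mathbf{x}) - f(\mathbf{0}))$, and by consistency $f$ is continuous at $\mathbf{0}$, so $R(\mathbf{x}) \to 0$. The polynomial part is continuous, so $R$ is continuous at $\mathbf{0}$ and therefore $R(\mathbf{0}) = 0$. (Alternatively, the term $|R(\mathbf{0})|$ is simply absorbed into the supremum, since $\mathbf{0}\in\mathcal{X}_h$.)

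The key step, and the only real obstacle, is showing the weights $w_i^h$ do not depend on $h$. This is a scaling argument. We have $\mathbf{V}_A(X_n^h) = \mathbf{V}_A(X_n)\mathbf{D}_h$ with $\mathbf{D}_h = \mathrm{diag}(h^{|\bm{\alpha}_1|},\dots,h^{|\bm{\alpha}_n|})$, because $(h\mathbf{x}_i)^{\bm\alpha} = h^{|\bm\alpha|}\mathbf{x}_i^{\bm\alpha}$. Hence
\begin{align*}
\mathbf{e}_1^\top \mathbf{V}_A(X_n^h)^{-1} = \mathbf{e}_1^\top \mathbf{D}_h^{-1}\mathbf{V}_A(X_n)^{-1} = h^{-|\bm\alpha_1|}\,\mathbf{e}_1^\top\mathbf{V}_A(X_n)^{-1} = \mathbf{e}_1^\top\mathbf{V}_A(X_n)^{-1},
\end{align*}
using $\bm{\alpha}_1 = \mathbf{0}$. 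So $w^h = w$ is independent of $h$, and it is the first row of $\mathbf{V}_A(X_n)^{-1}$.

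It follows that $|f_n^h(\mathbf{0}) - f(\mathbf{0})| \le \|w\|_1 \sup_{\mathbf{x}\in\mathcal{X}_h}|R(\mathbf{x})|$, because each $h\mathbf{x}_i \in \mathcal{X}_h$. This gives \eqref{eq: bound for MRE} with $C = \|\mathbf{e}_1^\top\mathbf{V}_A(X_n)^{-1}\|_1$, which depends only on $X_n$ (add $1$ to $C$ if one prefers not to invoke $R(\mathbf{0})=0$).

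For the acceleration interpretation, note that $R = o(f - f(\mathbf{0}))$. So $\sup_{\mathcal{X}_h}|R|$ decays faster than the error $|f(h\mathbf{x}) - f(\mathbf{0})|$ along the rays $h\mathbf{x}$, $\mathbf{x}\in X_n$, as $h\to0$.
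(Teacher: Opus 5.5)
Your proof is correct and is essentially the paper's argument. Your weights $w_i$ are exactly the Lagrange values $\ell_i^A(\mathbf{0};X_n)$, so $\|w\|_1=\lambda_A(\mathbf{0};X_n)$. The paper likewise combines $\mathcal{P}_A$-exactness with the $h$-invariance of this Lebesgue constant, which it only asserts and your factorisation $\mathbf{V}_A(X_n^h)=\mathbf{V}_A(X_n)\mathbf{D}_h$ actually proves. The only difference in route is that the paper passes through a best approximation from $\mathcal{P}_A$ rather than the specific polynomial in \eqref{eq: error}. Your use of $R(\mathbf{0})=0$ sharpens the paper's constant $1+\lambda_A(\mathbf{0};X_n)$ to $\lambda_A(\mathbf{0};X_n)$. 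Your closing acceleration remark, like the paper's own gloss, does not follow from what you proved. The pointwise relation $R=o(f-f(\mathbf{0}))$ does not control $\sup_{\mathcal{X}_h}|R|$ relative to the error along a particular ray, and that error can even vanish identically.
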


\noindent A self-contained proof of \Cref{thm: SEM}, and further background on the \ac{MRE}, can be found in \Cref{app: SEM}.
To clarify the content of \Cref{thm: SEM}, here the number $n$ of data is \emph{fixed}\footnote{The total number of simulator evaluations, $n$, will typically need to be small to control total computational cost, so the large $n$ limit is not usually considered.} and we are examining the convergence of approximations based on a dataset $\{f(h \mathbf{x}_i)\}_{i=1}^n$ as the simulator inputs $\{h \mathbf{x}_i\}_{i=1}^n$ are collectively drawn closer to $\mathbf{0}$, via $h \rightarrow 0$.
This result is sharp, in the sense that no further acceleration is possible under the limited assumptions in \eqref{eq: error}.
\ac{MRE} is straight-forward to implement, and in practice the main effort goes into the mathematical analysis required to select the index set $A$ and to establish control on the residual $R(\mathbf{x})$.
However, \ac{MRE} does not unlock the additional functionalities of a statistical extrapolation method (inc. uncertainty quantification, model criticism, and experimental design).

\subsection{Gauss--Richardson Extrapolation}
\label{sec: GRE}

To first statistical extrapolation method to offer provable convergence acceleration was the \ac{GRE} method of \citet{oates2024probabilistic} in which $f(\mathbf{x})$ was modelled as a \ac{GP}.
Roughly speaking, the mean of $f(\mathbf{x})$ was taken to be a constant, assigned an improper flat prior and integrated out, while the covariance between $f(\mathbf{x})$ and $f(\mathbf{x}')$ was taken to be of the form $\epsilon(\mathbf{x}) \epsilon(\mathbf{x}') k(\mathbf{x} , \mathbf{x}')$ where $\epsilon(\mathbf{x})$ was either obtained from numerical analysis, assuming access to a convergence rate $f(\mathbf{x}) - f(\mathbf{0}) = O(\epsilon(\mathbf{x}))$, or was learned, and $k$ was a symmetric positive definite kernel.
This choice ensures that samples $g$ from the \ac{GP} also satisfy $g(\mathbf{x}) - g(\mathbf{0}) = O(\epsilon(\mathbf{x}))$, leading the authors to call this model a \emph{numerical analysis-informed} \ac{GP} \citep[such covariance functions appear also in][]{tuo2014surrogate,teymur2021black,bect2021quantification}.
Based on \eqref{eq: error}, a choice for $\epsilon(\mathbf{x})$ that encodes the correct convergence order could be
\begin{align}
\epsilon(\mathbf{x}) \asymp f(\mathbf{x}) - f(\mathbf{0}) = \sum_{\bm{\alpha} \in A \setminus \{\mathbf{0}\} } \beta_{\bm{\alpha}} \mathbf{x}^{\bm{\alpha}} \asymp \sum_{\bm{\alpha} \in \mathrm{Lead}(A)} \mathbf{x}^{\bm{\alpha}} ,   \label{eq: b}
\end{align}
where in the final expression unit coefficients are used, since the ideal coefficients $\beta_{\bm{\alpha}}$ are unknown in general.
Under this set-up, the authors were able to establish convergence acceleration by assuming that the normalised error $e(\mathbf{x}) := ( f(\mathbf{x}) - f(\mathbf{0}) ) / \epsilon(\mathbf{x})$ is regular enough to belong to the \ac{RKHS} $\mathcal{H}_k(\mathcal{X})$ of real-valued functions on $\mathcal{X}$ for which $k$ is the reproducing kernel \citep[see][for background]{berlinet2011reproducing}.

To state one such result, define the \emph{box fill distance} $\rho_{X_n,\mathcal{X}}$ as the supremum value of $\nu$ such that there is a box of the form $[\mathbf{x} , \mathbf{x} + \nu \mathbf{1}]$ contained in $\mathcal{X}$ for which $X_n \cap [\mathbf{x} , \mathbf{x} + \nu \mathbf{1}] = \emptyset$, and define the constants $\gamma_d$ using the induction $\gamma_d := 2d(1+\gamma_{d-1})$ with base case $\gamma_1 \:= 2$.
The following result can be found as Theorem 2 of \citet{oates2024probabilistic}:

\begin{theorem}[Convergence acceleration for GRE]
\label{thm: GRE}
Let $\mathcal{X} = [\mathbf{0},\bm{1}] \subset \mathbb{R}^d$ and $X_n = \{\mathbf{x}_i\}_{i=1}^n \subset \mathcal{X}$.
Let $\mathcal{X}_h = [\mathbf{0}, h \mathbf{1}]$ and $X_n^h = \{ h \mathbf{x}_i\}_{i=1}^n$ for $h \in (0,1]$.
Assume that $\epsilon$ is a polynomial of total degree $r$, that $e \in \mathcal{H}_k(\mathcal{X})$, and $k \in C^{2s}(\mathcal{X} \times \mathcal{X})$.
Then there exists an explicit $n$- and $h$-independent constant $C_{r,s}$ such that
\begin{align*}
\underbrace{ |f(\mathbf{0}) - \mathbb{E}[f(\mathbf{0}) | \{ f(\mathbf{x}) , \mathbf{x} \in X_n^h \} ] | }_{\text{\normalfont extrapolation error for posterior mean of GP}} \; \leq \; \underbrace{ C_{r,s} \; \rho_{X_n,\mathcal{X}}^s \; \| e \|_{\mathcal{H}_k(\mathcal{X})} }_{\text{\normalfont constant in $h$}} \; \underbrace{ \phantom{|_{|^|}} h^s}_{ \text{\normalfont acceleration}} \; \; \underbrace{ \sup_{\mathbf{x} \in \mathcal{X}_h} |\epsilon(\mathbf{x})| }_{\text{\normalfont error of original method }}
\end{align*}
whenever the box fill distance satisfies $\rho_{X_n , \mathcal{X}} \leq 1 / (\gamma_d (r + 2s))$.
\end{theorem}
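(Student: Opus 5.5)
The estimate reduces to a kernel interpolation error bound for the normalised error $e$, transported to the point $\mathbf{0}$ by the factor $\epsilon$ and rescaled by $h$.

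\emph{Step 1: representer form.} With the constant mean integrated out under a flat prior, the posterior mean of $f(\mathbf{0})$ is given by weights $w_i$ with $\sum_i w_i = 1$. We have $\mathbb{E}[f(\mathbf{0}) \mid \cdot] = \sum_i w_i f(h\mathbf{x}_i)$, where the $w_i$ solve the saddle-point system for the kernel $K_h(\mathbf{x},\mathbf{x}') = \epsilon(\mathbf{x})\epsilon(\mathbf{x}')k(\mathbf{x},\mathbf{x}')$. Write $f(\mathbf{x}) = f(\mathbf{0}) + \epsilon(\mathbf{x}) e(\mathbf{x})$. The constraint $\sum_i w_i = 1$ then gives
\[
f(\mathbf{0}) - \mathbb{E}[f(\mathbf{0}) \mid \cdot] = -\sum_i w_i\, \epsilon(h\mathbf{x}_i)\, e(h\mathbf{x}_i).
\]
Hence the error is a linear functional applied to $e$, and by Cauchy--Schwarz in $\mathcal{H}_k$ it is bounded by $\|e\|_{\mathcal{H}_k} \, \|\sum_i w_i \epsilon(h\mathbf{x}_i) k(\cdot, h\mathbf{x}_i)\|_{\mathcal{H}_k}$. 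Since $\epsilon(\mathbf{0}) = 0$, this norm equals the worst-case error, over the unit ball of the RKHS of $K_h$ together with the constants, of estimating the value at $\mathbf{0}$. I will therefore bound it as a power-function-type quantity $P_h(\mathbf{0})$.

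\emph{Step 2: polynomial reproduction.} The usual route to a power-function bound is a local polynomial reproduction argument, in the style of Wendland's Chapter 11. I will build alternative weights $u_i$, not the GP weights, such that $\sum_i u_i\, p(h\mathbf{x}_i) = p(\mathbf{0})$ for all $p \in \mathcal{P}_{r+2s}$ (total degree $r+2s$), with $\sum_i |u_i|$ bounded independently of $h$. This requires extrapolation to the corner $\mathbf{0}$ rather than interpolation, which is where the box fill distance condition $\rho \le 1/(\gamma_d(r+2s))$ comes in. The condition guarantees that a suitably shrunken box near the corner contains a unisolvent configuration, and the recursion $\gamma_d = 2d(1+\gamma_{d-1})$ reflects an induction on dimension that produces tensor/norming-set points of degree $r+2s$.

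Degree $r+2s$ is needed because $\epsilon$ has degree $r$ and $k$ is $C^{2s}$. Taylor-expanding $k$ to order $2s$ about $(\mathbf{0},\mathbf{0})$ and multiplying by $\epsilon(\mathbf{x})\epsilon(\mathbf{x}')$ shows that the quadratic form $\sum_{i,j} u_i u_j K_h(h\mathbf{x}_i, h\mathbf{x}_j)$ (with the $\mathbf{0}$ point included) annihilates everything except remainder terms. The remainders are of size $\sup|\epsilon|^2$ times $(h\rho)^{2s}$ times $\|k\|_{C^{2s}}$. Since the GP weights minimise this quadratic form among weights satisfying $\sum_i u_i = 1$, optimality transfers the bound to the posterior mean.

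\emph{Step 3: scaling.} Taking square roots gives
\[
P_h(\mathbf{0}) \le C_{r,s}\, \rho^s\, h^s \sup_{\mathcal{X}_h} |\epsilon| ,
\]
and combining with Step 1 yields the claim. $C_{r,s}$ collects the Lebesgue-constant bound from Step 2 and $\|k\|_{C^{2s}}$.

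The main obstacle is Step 2. I need to show that under the box fill condition there exist reproduction weights for degree $r+2s$ with bounded $\ell_1$ norm, uniformly in the point set. I would attempt this first in $d=1$, using points in $s'$ disjoint subintervals and a Lagrange/Markov-inequality argument, and then induct over coordinates using product structure, which gives the constants $\gamma_d$.
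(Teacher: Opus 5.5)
The paper gives no proof of this statement; it quotes it as Theorem~2 of \citet{oates2024probabilistic}. Your outline is correct, and it is the power-function template the paper uses for \ac{SPRE} (\Cref{lem: min norm,lem: power fn bound} and the proof of \Cref{thm: extra smooth}), transplanted to the \ac{GRE} kernel $\epsilon(\mathbf{x})\epsilon(\mathbf{x}')k(\mathbf{x},\mathbf{x}')$. You bound the error by $\|e\|_{\mathcal{H}_k(\mathcal{X})}\,Q(\mathbf{w})^{1/2}$ and use that the kriging weights minimise $Q$ over $\{\sum_i u_i = 1\}$. You then evaluate $Q$ at reproduction weights, which annihilate $\epsilon(\mathbf{x})\epsilon(\mathbf{y})T(\mathbf{x},\mathbf{y})$ because $\epsilon(\mathbf{0})=0$.

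The real difference from the \ac{SPRE} case is where the $\ell_1$ bound comes from. There, unisolvence plus the scaling argument of \Cref{lem: poly repro} suffices. Here you need reproduction of a full polynomial space with an explicit, $n$-independent constant and support near $\mathbf{0}$. That is the Madych--Nelson norming inequality on cubes, which is where the recursion $\gamma_d = 2d(1+\gamma_{d-1})$ comes from, followed by the same Hahn--Banach step as in \Cref{lem: poly repro}. So your ``main obstacle'' is a citable lemma and need not be rebuilt.

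Apply it on the corner cube $Q = [\mathbf{0}, \gamma_d(r+2s)\rho_{X_n,\mathcal{X}}\mathbf{1}]$. This cube lies in $\mathcal{X}$ precisely because of the hypothesis, and $X_n \cap Q$ still has box fill distance at most $\rho_{X_n,\mathcal{X}}$ relative to $Q$. Restricting to $Q$ is what produces the factor $\rho_{X_n,\mathcal{X}}^s$. If you were content with a design-dependent constant, \Cref{lem: poly repro} alone would already give the $h^s$ rate under mere unisolvence of $X_n$ for polynomials of total degree $r+2s-1$. The fill-distance hypothesis is what buys explicitness, uniformity in the design, and the $\rho^s$ factor.

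One correction concerns the degree. Since $k$ is only $C^{2s}$, take $T$ to be the joint Taylor polynomial of degree $2s-1$. Its remainder is bounded by a multiple of $\|k\|_{C^{2s}}\,(\mathrm{diam}\, hQ)^{2s}$. Then $\mathbf{x} \mapsto \epsilon(\mathbf{x})T(\mathbf{x},\mathbf{y})$ has degree at most $r+2s-1$, so reproduction of degree $m = r+2s-1$ is all you need.

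This matters for matching the hypothesis. The norming condition for degree $m$ on a cube of side $b_0$ is box fill distance at most $b_0/(\gamma_d(m+1))$. With $m = r+2s-1$ this is exactly $\rho_{X_n,\mathcal{X}} \le 1/(\gamma_d(r+2s))$. Your degree $r+2s$ would instead require $\rho_{X_n,\mathcal{X}} \le 1/(\gamma_d(r+2s+1))$, which is not assumed.
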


\noindent Compared to \ac{MRE}, the number $n$ of data does not need to be equal to $\mathrm{dim}(A)$.
However, even in the most favourable setting where $r=1$ and $s=0$, the minimum number of data needed to satisfy the box fill distance requirement is $n_{\min} \geq (2^d d!)^d$, which is super-exponential.
This is undesirable for complex simulators with multiple sources of discretisation, since running a large number of simulations can be impractical.
Though it is possible that the data requirement in \Cref{thm: GRE} could be reduced using a sharper argument, at least an exponential dimension dependence appears fundamental to \ac{GRE}, since a kernel method is being used to approximate the normalised error $e(\mathbf{x})$, an unstructured $d$-dimensional functional \citep[such problems require space-filling designs, which manifests as a fill distance requirement; see][]{wendland2004scattered}.
On the other hand, \ac{GRE} unlocks statistical functionalities not available to \ac{MRE}, and empirical results show \ac{GRE} can be effective in dimensions $d \in \{1,2\}$ \citep{oates2024probabilistic}.
The present paper seeks to replicate these functionalities while addressing the super-exponential data requirement of \ac{GRE}, as explained next.

\section{Methods}
\label{sec: methods}

This section contains our novel methodological development, with \ac{SPRE} introduced in \Cref{sec: SPRE}, and theoretical analysis contained in \Cref{subsec: converge accel}.
Remarkably, it is shown that not only does  \ac{SPRE} match the convergence acceleration of \ac{MRE} for the same data requirement (which in turn depends only on the extrapolation dimension), but \ac{SPRE} can unlock \emph{additional} convergence acceleration if appropriate regularity conditions are satisfied.
\Cref{sec: UQ} concerns implementational detail, such as the data-driven identification of the index set $A$ and experimental design to select the simulator configurations $\{\mathbf{x}_i\}_{i=1}^n$ which comprise the dataset.

\subsection{Sparse Probabilistic Richardson Extrapolation}
\label{sec: SPRE}

A major strength of our \ac{SPRE} method is that it is straight-forward to state and to implement.
The idea is to model $f$ as a \ac{GP}, whose mean function
\begin{align*}
    \mathbb{E}[f(\mathbf{x}) | \{\beta_{\bm{\alpha}} : \bm{\alpha} \in A \} ] = \sum_{\bm{\alpha} \in A} \beta_{\bm{\alpha}} \mathbf{x}^{\bm{\alpha}} 
\end{align*}
depends on unknown coefficients $\beta_{\bm{\alpha}}$, which are in turn assigned an improper flat prior and integrated out, while the covariance function $\mathbb{C}[f(\mathbf{x}) , f(\mathbf{x}')] = k(\mathbf{x} , \mathbf{x}')$ is specified in terms of a symmetric and positive definite kernel $k$ to be specified.
Following standard calculations\footnote{See for instance Section 2.7 of \citet{williams2006gaussian}.} the posterior mean and variance can be explicitly computed.
Indeed, letting $\mathbf{V}_A := \mathbf{V}_A(X_n)$, $\mathbf{K} := [k(\mathbf{x}_i,\mathbf{x}_j)]_{i,j = 1}^n$, $\mathbf{f} := [f(\mathbf{x}_i)]_{i=1}^n$, $\mathbf{k}(\mathbf{x})  := [k(\mathbf{x}_i,\mathbf{x})]_{i=1}^n$, and $\mathbf{v}_A(\mathbf{x}) := [\mathbf{x}^{\bm{\alpha}_i}]_{i=1}^m$, we have that
\begin{align}
  \hspace{-10pt}  \mathbb{E}[f(\mathbf{x}^*) | \{ f(\mathbf{x}) , \mathbf{x} \in X_n \} ] & = \mathbf{k}(\mathbf{x}^*)^\top \mathbf{K}^{-1} \mathbf{f} + \mathbf{r}_A(\mathbf{x}^*)^\top \hat{\bm{\beta}}_A \label{eq: SPRE mean}  \\
  \hspace{-10pt}  \mathbb{V}[f(\mathbf{x}^*) | \{ f(\mathbf{x}) , \mathbf{x} \in X_n \} ] & = k(\mathbf{x}^*,\mathbf{x}^*) - \mathbf{k}(\mathbf{x}^*)^\top \mathbf{K}^{-1} \mathbf{k}(\mathbf{x}^*) + \mathbf{r}_A(\mathbf{x}^*)^\top ( \mathbf{V}_A^\top \mathbf{K}^{-1} \mathbf{V}_A )^{-1} \mathbf{r}_A(\mathbf{x}^*) \label{eq: SPRE var}
\end{align}
where $\mathbf{r}_A(\mathbf{x}) := \mathbf{v}_A(\mathbf{x}) - \mathbf{V}_A^\top \mathbf{K}^{-1} \mathbf{k}(\mathbf{x})$ and $\hat{\bm{\beta}}_A := ( \mathbf{V}_A^\top \mathbf{K}^{-1} \mathbf{V}_A )^{-1} \mathbf{V}_A^\top \mathbf{K}^{-1} \mathbf{f}$.
In particular we are interested in prediction at $\mathbf{x}^* = \mathbf{0}$, where under our convention $\mathbf{v}_A(\mathbf{0}) = \mathbf{e}_1$.
Note that $X_n$ is required to be $\mathcal{P}_A$-unisolvent in order for $\mathbf{V}_A^\top \mathbf{K}^{-1} \mathbf{V}_A$ to be inverted.

\begin{remark}[Extrapolation sparsity and SPRE]
    The minimum number of data required for \ac{SPRE} is $d_{\mathrm{ext}}(f)$, identical to \ac{MRE}.
    Thus \ac{MRE} and \ac{SPRE} are able to exploit extrapolation sparsity while \ac{GRE} is not.
    An illustration of \ac{MRE}, \ac{GRE} and \ac{SPRE} applied to the cubature setting of \Cref{subsec: cubature} is presented in \Cref{fig: illus}, where the reduced data requirement for \ac{SPRE} is striking compared to \ac{GRE}. 
    Full details of this experiment are contained in \Cref{app: cubature illustration}.
\end{remark}

\begin{figure}[t!]
\includegraphics[width=\textwidth]{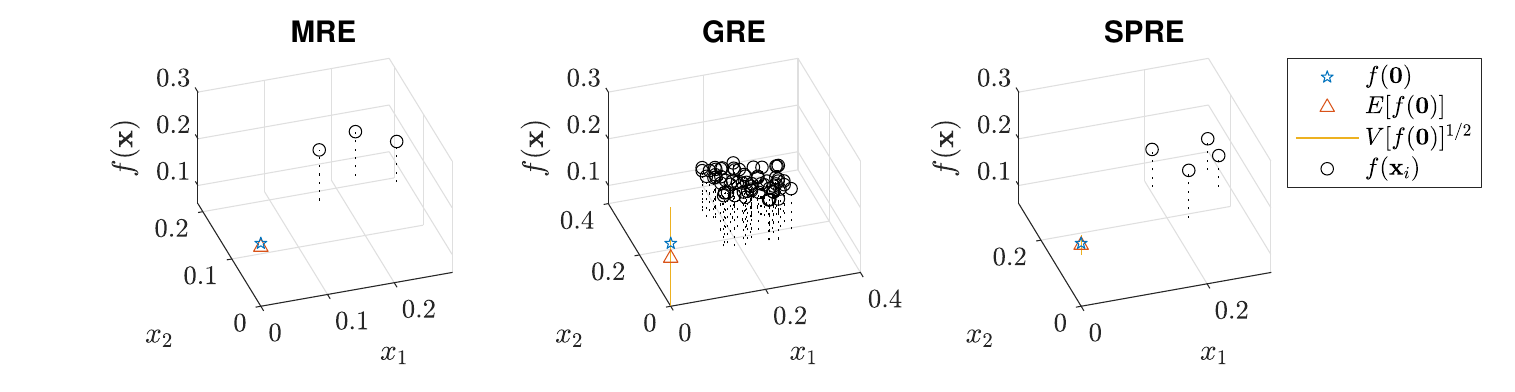}
\caption{Illustration of extrapolation methods to predict $f(\mathbf{0})$ from data $\{f(\mathbf{x}_i)\}$ on the multivariate ($d=2$) extrapolation task described in \Cref{subsec: cubature}.
Here the extrapolation dimension was $d_{\mathrm{ext}}(f) = 3$.
Left: Classical \Acf{MRE} requires a dataset of size $n_{\min} = d_{\mathrm{ext}}(f)$, to which a polynomial is fitted.
Middle: \Acf{GRE} requires a dataset of size $n_{\min} \geq (2^d d!)^d = 64$ to guarantee convergence acceleration, and achieves this by fitting a numerical analysis-informed \ac{GP}.
Right: \Acf{SPRE} applied to a dataset of size $n_{\min} = 1 + d_{\mathrm{ext}}(f)$, where the additional data point enables kernel parameters to be estimated using leave-one-out cross-validation, as described in \Cref{sec: UQ}. 
\alttext{A figure showing MRE, GRE and SPRE applied to a two-dimensional extrapolation task.}
}
\label{fig: illus}
\end{figure}

\begin{remark}[$\mathcal{P}_A$ exactness of SPRE]
\label{rem: exactness}
    Suppose that $f \in \mathcal{P}_A$, so that $\mathbf{f} = \mathbf{V}_A \bm{\beta}_A$ where, under the convention of \Cref{subsec: notation}, the first entry of $\bm{\beta}_A$ corresponds to $f(\mathbf{0})$, the quantity of interest.
    Assuming $X_n$ is $\mathcal{P}_A$-unisolvent, we can substitute this expression into \eqref{eq: SPRE mean} to obtain $\mathbb{E}[f(\mathbf{0}) | \{ f(\mathbf{x}) , \mathbf{x} \in X_n \} ] = \mathbf{e}_1^\top \bm{\beta}_A = f(\mathbf{0})$, so that the posterior mean of the \ac{GP} coincides with the true value of $f(\mathbf{0})$; we say that \ac{SPRE} is $\mathcal{P}_A$ \emph{exact}.
\end{remark}

\begin{remark}[\ac{MRE} as a special case of \ac{SPRE}]
\label{rem: spre as sem}
Suppose that $n = \mathrm{dim}(A)$ and that the set $X_n$ is $\mathcal{P}_A$-unisolvent, so that there is a unique element $p \in \mathcal{P}_A$ that coincides with $f$ on $X_n$.
From \Cref{rem: exactness}, the posterior mean of \ac{SPRE} coincides with $p(\mathbf{0})$, which is also the output of \ac{MRE}. 
In this edge case, a convergence acceleration result for \ac{SPRE} can be deduced from \Cref{thm: SEM}.    
However, unlike \ac{MRE}, \ac{SPRE} naturally extends to settings where $n > \mathrm{dim}(A)$, provides rigorous probabilistic uncertainty quantification for $f(\mathbf{0})$ which can be integrated into downstream inference and decision support, and unlocks cost-constrained experimental design and data-driven model selection functionality (cf. \Cref{sec: UQ}).
\end{remark}

\begin{remark}[Comparing \ac{GRE} and \ac{SPRE}]
    Compared to \ac{GRE}, \ac{SPRE} can be implemented with exponentially fewer data in multiple dimensions when extrapolation sparsity is present (cf. \Cref{thm: convergence SPRE}), ameliorates the need for careful choice of kernel by admitting a well-defined limiting case in which minimal smoothness is assumed (cf. \Cref{lem: equiv norms}), and naively supports discrete domains $\mathcal{X}$, which occur in practice for example when a time step size $x_1$ is required to exactly divide a fixed length time interval $[0,T]$, so that $x_1 \in \{\frac{T}{m} : m \in \mathbb{N}\}$.
\end{remark}

The potential for convergence acceleration noted in \Cref{rem: spre as sem} is certainly not the end of the story for \ac{SPRE}.
Indeed, it would be desirable to make use of all available data even if the number of data $n$ exceeds $d_{\mathrm{ext}}(f)$.
This rules out a trivial proof of convergence acceleration along the lines of \Cref{rem: spre as sem}, and an alternative strategy is required.
In particular, any theoretical analysis will need to account for the effect of the choice of kernel.
Further, there are the important practical matters of how the index set $A$ can be estimated from data, and how the evaluation locations $\{\mathbf{x}_i\}_{i=1}^n$ should be designed to be maximally informative subject to a total computational budget.
These questions will all be addressed, beginning with a novel analysis of convergence acceleration next.

\subsection{Convergence Acceleration}
\label{subsec: converge accel}

Before commencing our theoretical analysis, we recall that often $\mathcal{X}$ is not a continuum but rather a discrete set.
For example, a simple quadrature on $[0,1]$ with uniformly-spaced nodes is only possible when the distance between nodes is of the form $\frac{1}{N}$ for some $N \in \mathbb{N}$.
To accommodate both continuous and discrete scenarios, we define \emph{permissible scalings} to be values of $h$ such that $h \mathbf{x} \in \mathcal{X}$ whenever $\mathbf{x} \in \mathcal{X}$.
As an example, if $\mathcal{X} = \{\frac{1}{N} : N \in \mathbb{N}\}$ then we can take $h = \frac{1}{M}$ for any $M \in \mathbb{N}$.
For our conclusions to be meaningful we assume that the set of permissible scalings has $0$ as an accumulation point.
Now we are in a position to state \Cref{thm: convergence SPRE}, our main convergence result.
The proofs of the following result is contained in \Cref{subsec: SPRE converge proof}.

\begin{theorem}[Convergence acceleration for SPRE]
\label{thm: convergence SPRE}
Let $\mathbf{0} \in \mathcal{X} \subset \mathbb{R}^d$ and let $X_n = \{\mathbf{x}_i\}_{i=1}^n \subset \mathcal{X}$ be $\mathcal{P}_A$-unisolvent.
Let $\mathcal{X}_h = \{h \mathbf{x}\}_{\mathbf{x} \in \mathcal{X}} \subset \mathcal{X}$ and $X_n^h = \{ h \mathbf{x}_i\}_{i=1}^n \subset \mathcal{X}$ for permissible scalings $h \in (0,1]$. 
Let $k : \mathcal{X} \times \mathcal{X} \rightarrow \mathbb{R}$ be a symmetric, positive definite and bounded kernel.
Let $f : \mathcal{X} \rightarrow \mathbb{R}$ be a simulator for which \eqref{eq: error} holds with $R \in \mathcal{H}_k(\mathcal{X})$.
Then for each permissible scaling $h$ there exists $C_{h,k} \in \mathbb{R}$ such that
\begin{align}
\underbrace{ | f(\mathbf{0}) - \mathbb{E}[f(\mathbf{0}) | \{ f(\mathbf{x}) , \mathbf{x} \in X_n^h \} ] | }_{\text{\normalfont extrapolation error for the posterior mean of the GP }} \leq \underbrace{C_{h,k}}_{\text{\normalfont bounded in $h$}} \quad \underbrace{ \| R \|_{\mathcal{H}_k(\mathcal{X}_h)} }_{\text{\normalfont acceleration}} , \label{eq: main bound}
\end{align}
and moreover $\sup_{h \in (0,1]} C_{h,k} < \infty$.
\end{theorem}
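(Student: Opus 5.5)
The plan is to combine the $\mathcal{P}_A$ exactness of \ac{SPRE} (\Cref{rem: exactness}) with a standard worst-case-error (kriging variance) argument in $\mathcal{H}_k(\mathcal{X}_h)$. The observation that drives everything is that the unbiasedness constraint on the kriging weights is invariant under scaling. This gives a fixed feasible weight vector, and hence a bound that is uniform in $h$.

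\emph{Step 1: write the posterior mean as a linear functional.} By \eqref{eq: SPRE mean}, the posterior mean at $\mathbf{0}$ is linear in the data: $\mathbb{E}[f(\mathbf{0}) \mid \cdot\,] = \mathbf{w}_h^\top \mathbf{f}_h$, where $\mathbf{f}_h = [f(h\mathbf{x}_i)]_{i=1}^n$ and the weights are
\[
\mathbf{w}_h = \mathbf{K}_h^{-1}\mathbf{k}_h(\mathbf{0}) + \mathbf{K}_h^{-1}\mathbf{V}_{A,h}\,(\mathbf{V}_{A,h}^\top \mathbf{K}_h^{-1}\mathbf{V}_{A,h})^{-1}\mathbf{r}_{A,h}(\mathbf{0}).
\]
A direct computation gives $\mathbf{V}_{A,h}^\top \mathbf{w}_h = \mathbf{v}_A(\mathbf{0}) = \mathbf{e}_1$. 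Exactness (\Cref{rem: exactness}) then says the polynomial part of $f$ in \eqref{eq: error} is reproduced exactly. Hence
\[
f(\mathbf{0}) - \mathbb{E}[f(\mathbf{0}) \mid \cdot\,] = L_h(R), \qquad L_h := \delta_{\mathbf{0}} - \sum_i w_{h,i}\,\delta_{h\mathbf{x}_i}.
\]
All points involved lie in $\mathcal{X}_h$. Since the restriction of $R$ to $\mathcal{X}_h$ lies in $\mathcal{H}_k(\mathcal{X}_h)$, Cauchy--Schwarz in that RKHS gives
\[
|L_h(R)| \le C_{h,k}\,\|R\|_{\mathcal{H}_k(\mathcal{X}_h)},
\]
with $C_{h,k}^2 := Q_h(\mathbf{w}_h)$, where
\[
Q_h(\mathbf{w}) = k(\mathbf{0},\mathbf{0}) - 2\mathbf{w}^\top \mathbf{k}_h(\mathbf{0}) + \mathbf{w}^\top \mathbf{K}_h \mathbf{w}.
\]
This is the squared dual norm of $L_h$, and it equals the posterior variance \eqref{eq: SPRE var} at $\mathbf{0}$.

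\emph{Step 2: identify $\mathbf{w}_h$ as a constrained minimiser.} The universal-kriging weights are the unique minimiser of $Q_h(\mathbf{w})$ subject to $\mathbf{V}_{A,h}^\top \mathbf{w} = \mathbf{e}_1$. I would verify this by writing the Lagrangian: its stationarity conditions $\mathbf{K}_h\mathbf{w} - \mathbf{k}_h(\mathbf{0}) = \mathbf{V}_{A,h}\bm{\lambda}$ together with the constraint reproduce exactly the formula for $\mathbf{w}_h$ above. Strict convexity follows because $\mathbf{K}_h$ is positive definite, which holds for distinct points since $k$ is positive definite. Consequently $C_{h,k}^2 \le Q_h(\mathbf{w})$ for every feasible $\mathbf{w}$. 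I expect this step to need the most care. The point is that one must not bound $\mathbf{w}_h$ directly: $\mathbf{K}_h$ may become arbitrarily ill-conditioned as $h \to 0$, so the explicit formula gives no uniform control.

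\emph{Step 3: the feasible set is independent of $h$.} Scaling gives $\mathbf{V}_A(X_n^h) = \mathbf{V}_A(X_n)\,\mathbf{D}_h$ with $\mathbf{D}_h = \mathrm{diag}(h^{|\bm{\alpha}_j|})$, which is invertible for $h>0$. Because $\bm{\alpha}_1 = \mathbf{0}$, we have $\mathbf{D}_h^{-1}\mathbf{e}_1 = \mathbf{e}_1$. So the constraint $\mathbf{D}_h \mathbf{V}_A(X_n)^\top \mathbf{w} = \mathbf{e}_1$ is equivalent to $\mathbf{V}_A(X_n)^\top \mathbf{w} = \mathbf{e}_1$, for every $h$. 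Unisolvency of $X_n$ means $\mathbf{V}_A(X_n)$ has full column rank, so $\mathbf{V}_A(X_n)^\top$ is surjective and a fixed $\mathbf{w}_0$ solving this system exists. For instance, take Lagrange weights on a unisolvent subset of $\dim(A)$ points, which is the \ac{MRE} rule.

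\emph{Step 4: conclude.} With $\kappa := \sup_{\mathcal{X}\times\mathcal{X}} |k| < \infty$, bound each term of $Q_h(\mathbf{w}_0)$ by $\kappa$ times the corresponding products of $1$ and the entries of $\mathbf{w}_0$. This gives
\[
C_{h,k}^2 \le Q_h(\mathbf{w}_0) \le \kappa\,(1 + \|\mathbf{w}_0\|_1)^2,
\]
a bound independent of $h$, so $\sup_{h\in(0,1]} C_{h,k} < \infty$.
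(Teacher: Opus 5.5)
Your proof is correct and is essentially the paper's argument. The error is bounded by the power function times $\|R\|_{\mathcal{H}_k(\mathcal{X}_h)}$; the squared power function is the minimum of $Q_h$ over polynomial-reproducing weights; and this is bounded by evaluating $Q_h$ at a feasible vector of $h$-independent $\ell^1$ norm and using boundedness of $k$. You streamline two steps: you apply the polynomial-annihilating error functional directly to $R$ instead of passing through the native-space semi-norm, and you observe that the constraint set $\{\mathbf{w} : \mathbf{V}_A(X_n^h)^\top \mathbf{w} = \mathbf{e}_1\}$ does not depend on $h$, so one fixed $\mathbf{w}_0$ replaces the paper's Hahn--Banach construction of $\tilde{\mathbf{u}}^{(h)}$. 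Your strict-convexity justification of the minimiser is also the sound one, since the block matrix $\left[\begin{smallmatrix} \mathbf{K} & \mathbf{V}_A \\ \mathbf{V}_A^\top & \mathbf{0} \end{smallmatrix}\right]$ that the paper calls positive definite is in fact indefinite.
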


\noindent Thus \ac{SPRE} can accelerate convergence up to a rate determined by the smoothness of the residual $R$ in \eqref{eq: error} as measured using the norm associated with the kernel.
In \Cref{lem: equiv norms} below we show that, for a suitably rough kernel, the norm $\|R\|_{\mathcal{H}_k(\mathcal{X}_h)}$ reduces to the magnitude $\sup_{\mathbf{x} \in \mathcal{X}_h} |R(\mathbf{x})|$ of the residual, so that \Cref{thm: convergence SPRE} recovers the rate for \ac{MRE} in \Cref{thm: SEM}, demonstrating that \eqref{eq: main bound} is sharp in general.
Note that \Cref{thm: convergence SPRE} applies for $n = d_{\mathrm{ext}}(f)$ (the minimum $n$ for which $X_n$ can be $\mathcal{P}_A$-unisolvent), so convergence acceleration is indeed achieved at this minimum size $n_{\min}$ for the dataset.

\begin{figure}[t!]
\includegraphics[width=\textwidth]{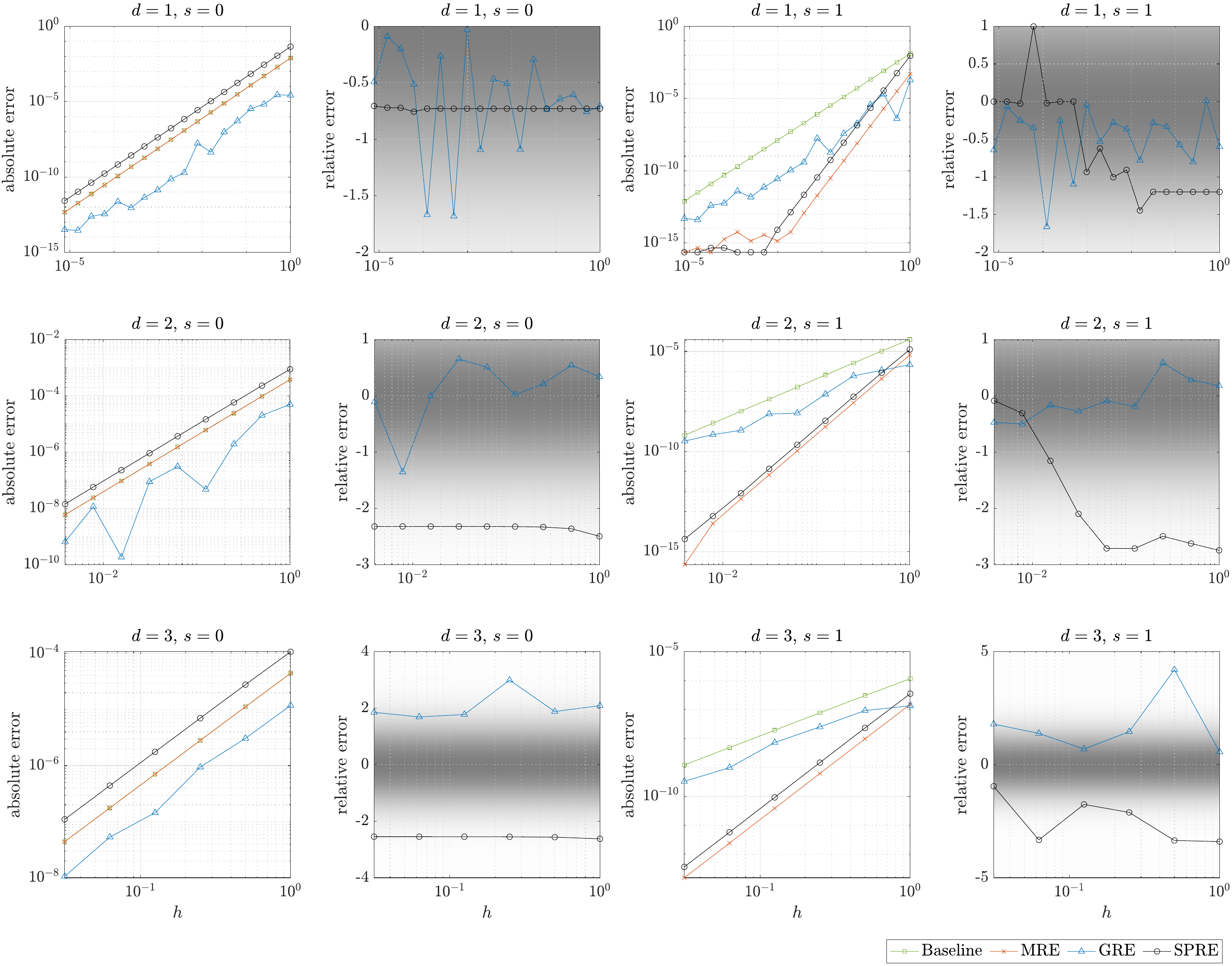}
\caption{Convergence in the $d$-dimensional cubature setting of \Cref{subsec: cubature}. 
Here $s \in \{0,1\}$ controls the smoothness of the extrapolation task; for $s = 0$ there is no smoothness to exploit, while for $s = 1$ there is smoothness that can be exploited.
The absolute error is displayed for \acf{MRE}, \acf{GRE} and \acf{SPRE}.
As a baseline, we also consider the estimator $f(\mathbf{x})$ where $\mathbf{x}$ is the element closest to $\mathbf{0}$ in the dataset $X_n^h$.
The relative error \eqref{eq: rel error} is displayed for \ac{GRE} and \ac{SPRE}, where the shaded region represents the density of a standard normal.
The full experimental protocol is described in \Cref{app: cubature illustration}. 
\alttext{A figure with 16 panels arranged as a $4 \times 4$ grid.  Absolute errors and relative errors are shown as a function of the scaling parameter $h$, as the dimension $d$ is varied in $\{1,2,3\}$ and the smoothness $s$ is varied in $\{0,1\}$.}
}
\label{fig: convergence}
\end{figure}

The convergence acceleration achieved by \ac{SPRE} is demonstrated in the setting of \Cref{subsec: cubature} in \Cref{fig: convergence}.
The index set $A$ was selected according to \eqref{eq: Riemann Taylor}, while all kernel parameters were optimised using cross-validation as described in \Cref{sec: UQ}.
The full experimental protocol is described in \Cref{app: cubature illustration}.
The first and third columns of \Cref{fig: convergence} present the \emph{absolute error} of \ac{MRE}, and the posterior mean estimators for \ac{GRE} and \ac{SPRE}, i.e.
\begin{align}
    \text{absolute error} = |f(\mathbf{0}) - \mathbb{E}[f(\mathbf{0}) | \{ f(\mathbf{x}) , \mathbf{x} \in X_n^h \} ] |  \label{eq: abs error}
\end{align}
As a baseline, we also consider the estimator $f(\mathbf{x})$ where $\mathbf{x}$ is the element closest to $\mathbf{0}$ in $X_n^h$.
For $s = 0$, where there is no regularity to exploit, all methods converge (in $h$) at the same rate as the baseline method, as expected.
That \ac{GRE} improved upon the baseline can be attributed to the numerical analysis-aware \ac{GP}, which exploits knowledge that the Riemann sum midpoint rule is a second-order accurate numerical method.
For $s = 1$, both \ac{MRE} and \ac{SPRE} can be seen to converge faster than the baseline, consistent with \Cref{thm: SEM,thm: convergence SPRE}; the actual convergence rates are clarified below \Cref{lem: equiv norms}.
Note that the fast convergence is observed only until the floating point limit of $10^{-16}$ is reached.
On the other hand, no rate improvement over the baseline is observed with \ac{GRE}.
This can be attributed to the number of data $n$ being too small for \ac{GRE}, or equivalently that \ac{GRE} is unable to exploit extrapolation sparsity in the manner of \ac{MRE} and \ac{SPRE}.
In the second and forth columns of \Cref{fig: convergence} we plot the \emph{relative error}
\begin{align}
    \text{relative error} = \frac{f(\mathbf{0}) - \mathbb{E}[f(\mathbf{0}) | \{ f(\mathbf{x}) , \mathbf{x} \in X_n^h \} ] }{ \sqrt{ \mathbb{V}[f(\mathbf{0}) | \{ f(\mathbf{x}) , \mathbf{x} \in X_n^h \} ] } }  \label{eq: rel error}
\end{align}
which captures whether the statistical error bars are commensurate with the size of the actual error.
It can be seen that both \ac{GRE} and \ac{SPRE} are reasonably well-calibrated as a result of the kernel parameter learning described in \Cref{sec: UQ}; the actual error is usually within $3$ standard deviations of the mean for each method.

\Cref{thm: convergence SPRE} suggests selecting the kernel $k$ to minimise the bound in \eqref{eq: main bound}.
For $k$ too smooth, the term $\|R\|_{\mathcal{H}_k(\mathcal{X}_h)}$ will fail to exist, placing an upper limit on the smoothness of the kernel.
On the other hand, the coefficient $C_{h,k}$ will tend to be larger for a rougher kernel.
Remarkably, this trade-off enables \ac{SPRE} to achieve additional convergence acceleration beyond \ac{MRE}, as clarified by the next result, whose proof is contained in \Cref{app: extra smooth}:

\begin{theorem}[Further convergence acceleration for \ac{SPRE}]
\label{thm: extra smooth}
In the setting of \Cref{thm: convergence SPRE}.
Let $s$ be maximal such that $\{|\bm{\alpha}| < 2 s \} \subseteq A$.
Let $k : \mathcal{X} \times \mathcal{X} \rightarrow \mathbb{R}$ be a symmetric and positive definite kernel of the form $k(\mathbf{x},\mathbf{y}) = \phi(\mathbf{x} - \mathbf{y})$ where $\phi : \mathbb{R}^d \rightarrow \mathbb{R}$ is $2s$ times continuously differentiable at $\mathbf{0}$.
Then $C_{h,k} = O( h^s )$ as $h \rightarrow 0$.
\end{theorem}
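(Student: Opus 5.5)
The plan is to identify the constant $C_{h,k}$ in \Cref{thm: convergence SPRE} with a power function, bound that power function by plugging in $h$-independent Richardson-type weights, and then Taylor-expand the kernel.

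\emph{Step 1: $C_{h,k}$ is a power function.} The posterior mean \eqref{eq: SPRE mean} at $\mathbf{0}$ is linear in $\mathbf{f}$, say $\sum_i w_i^h f(h\mathbf{x}_i)$. By \Cref{rem: exactness} it is $\mathcal{P}_A$ exact. Writing $f = \sum_{\bm{\alpha}\in A}\beta_{\bm{\alpha}}\mathbf{x}^{\bm{\alpha}} + R$, the extrapolation error is therefore $|R(\mathbf{0}) - \sum_i w_i^h R(h\mathbf{x}_i)|$. By the reproducing property this is at most $P_h\,\|R\|_{\mathcal{H}_k(\mathcal{X}_h)}$. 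Here $P_h^2$ is the quadratic form
\begin{align*}
Q_h(\mathbf{w}) = k(\mathbf{0},\mathbf{0}) - 2\sum_i w_i k(\mathbf{0},h\mathbf{x}_i) + \sum_{i,j} w_i w_j k(h\mathbf{x}_i,h\mathbf{x}_j)
\end{align*}
evaluated at $\mathbf{w}^h$. Because $\mathbf{w}^h$ is the GP weight vector, it minimises $Q_h$ over all $\mathbf{w}$ satisfying the exactness constraints $\sum_i w_i (h\mathbf{x}_i)^{\bm{\alpha}} = \delta_{\bm{\alpha},\mathbf{0}}$ for $\bm{\alpha}\in A$. Consistently with \eqref{eq: SPRE var}, $P_h^2$ is the posterior variance at $\mathbf{0}$. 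I take $C_{h,k} = P_h$, which is presumably how \Cref{thm: convergence SPRE} was proved. So it suffices to exhibit one feasible $\mathbf{w}$ with $Q_h(\mathbf{w}) = O(h^{2s})$.

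\emph{Step 2: scale-invariant weights.} By unisolvency, choose a subset of $\mathrm{dim}(A)$ points of $X_n$ that is $\mathcal{P}_A$-unisolvent. Let $\mathbf{u}$ be the corresponding MRE weights, satisfying $\sum_i u_i \mathbf{x}_i^{\bm{\alpha}} = \delta_{\bm{\alpha},\mathbf{0}}$, and set $u_i = 0$ on the remaining points. Since $(h\mathbf{x}_i)^{\bm{\alpha}} = h^{|\bm{\alpha}|}\mathbf{x}_i^{\bm{\alpha}}$ and the right-hand side vanishes unless $\bm{\alpha}=\mathbf{0}$, the same $\mathbf{u}$ is feasible for every $h$. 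Now introduce $\mathbf{x}_0 := \mathbf{0}$ and $u_0 := -1$. Then $Q_h(\mathbf{u}) = \sum_{i,j=0}^n u_i u_j\,\phi(h(\mathbf{x}_i-\mathbf{x}_j))$, and the augmented moments satisfy $\sum_{i=0}^n u_i \mathbf{x}_i^{\bm{\mu}} = 0$ for every $\bm{\mu}\in A$. Since $\{|\bm{\mu}|<2s\}\subseteq A$, these moments vanish for all $|\bm{\mu}|<2s$.

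\emph{Step 3: Taylor expansion (main step).} Since $\phi\in C^{2s}$ at $\mathbf{0}$, write
\begin{align*}
\phi(h\mathbf{z}) = \sum_{|\bm{\gamma}|\le 2s} \frac{\partial^{\bm{\gamma}}\phi(\mathbf{0})}{\bm{\gamma}!}\, h^{|\bm{\gamma}|}\mathbf{z}^{\bm{\gamma}} + o(h^{2s}),
\end{align*}
uniformly over the finitely many $\mathbf{z} = \mathbf{x}_i - \mathbf{x}_j$. Expand $(\mathbf{x}_i-\mathbf{x}_j)^{\bm{\gamma}}$ binomially into terms $\mathbf{x}_i^{\bm{\mu}}\mathbf{x}_j^{\bm{\gamma}-\bm{\mu}}$. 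The double sum then factorises as $\big(\sum_i u_i\mathbf{x}_i^{\bm{\mu}}\big)\big(\sum_j u_j \mathbf{x}_j^{\bm{\gamma}-\bm{\mu}}\big)$.

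For $|\bm{\gamma}|<2s$, both $|\bm{\mu}|$ and $|\bm{\gamma}-\bm{\mu}|$ are less than $2s$, so every term vanishes. In fact for $|\bm{\gamma}|\le 4s-1$ one of $|\bm{\mu}|$, $|\bm{\gamma}-\bm{\mu}|$ is less than $2s$, but only the expansion up to order $2s$ is needed. The terms with $|\bm{\gamma}|=2s$ contribute $O(h^{2s})$, with a constant depending only on $X_n$, $\mathbf{u}$ and the derivatives of $\phi$. The remainder is $o(h^{2s})$ times the finite quantity $(\sum_i|u_i|)^2$.

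Hence $P_h^2 \le Q_h(\mathbf{u}) = O(h^{2s})$, giving $C_{h,k} = O(h^s)$. The main obstacle is the bookkeeping in this step, namely checking that the vanishing-moment conditions cover every cross term with $|\bm{\gamma}|<2s$, including the point $\mathbf{x}_0=\mathbf{0}$. A secondary point is confirming that the constant in \Cref{thm: convergence SPRE} is indeed the power function; if it is a cruder bound there, I would simply redefine $C_{h,k}$ as $P_h$, which is also bounded in $h$ by the same argument with $s=0$.
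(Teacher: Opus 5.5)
Your proof is correct and follows the paper's argument. The extrapolation error is bounded by the power function at $\mathbf{0}$ times $\|R\|_{\mathcal{H}_k(\mathcal{X}_h)}$, the power function is bounded by the quadratic form evaluated at $\mathcal{P}_A$-reproducing weights whose $\ell^1$ norm is uniform in $h$, and the Taylor terms of $\phi$ of order below $2s$ are annihilated by reproduction; the paper phrases this last step as choosing $p$ in $\|k-\bar{p}\|_{\infty,\mathcal{X}_h\times\mathcal{X}_h}$ to be the Taylor polynomial, which replaces its $p=0$ constant just as you replace it by $P_h$.

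Your variants are valid simplifications: fixed scale-invariant Lagrange weights replace the Hahn--Banach weights, and exact moment cancellation at the finitely many points $h(\mathbf{x}_i-\mathbf{x}_j)$ replaces a supremum over $\mathcal{X}_h\times\mathcal{X}_h$, so you need no boundedness of $\mathcal{X}$. The same moment argument also cancels the order-$2s$ terms, so with the Peano remainder you in fact get $C_{h,k}=o(h^s)$.
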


\noindent Informally, the addition of the kernel, which is not part of \ac{MRE}, enables any additional smoothness present in $R$ to be exploited.
For balance, \emph{if} one knew there was additional smoothness at the start, then one could leverage a higher-order expansion in \eqref{eq: error} and improve the performance of \ac{MRE}; \Cref{thm: extra smooth} demonstrates that this knowledge is not needed upfront for \ac{SPRE}.
In practice, kernel learning provides a principled statistical approach to \ac{SPRE} that realises this potential; see \Cref{sec: UQ}.

\begin{remark}[Beyond translation-invariant kernels]
For presentational purposes \Cref{thm: extra smooth} assumes a translation-invariant kernel, but our argument applies to any kernel $k$ that admits a Taylor-type expansion using terms from $\mathcal{P}_A$.
In this case, one can relate $C_{h,k}$ to the square root of the remainder term in the Taylor expansion.
\end{remark}

A conservative choice of kernel could be the \emph{white noise} kernel $k(\mathbf{x} , \mathbf{x}') = \sigma^2  \delta_{\mathbf{x} , \mathbf{x}'}$, for some amplitude $\sigma^2 > 0$, for which minimal regularity of $R$ is required.
Remarkably, even for this kernel \ac{SPRE} performs at least as well as \ac{MRE}.
The proof of the following result is contained in \Cref{app: proof of equiv norms}:

\begin{proposition}[\Cref{thm: convergence SPRE} is sharp]
\label{lem: equiv norms}
Let $k(\mathbf{x} , \mathbf{x}') = \sigma^2  \delta_{\mathbf{x} , \mathbf{x}'}$. 
Consider a discrete state space $\mathcal{X} = \{\mathbf{0}\} \cup \{ \frac{1}{ 2^m}  \frac{1}{ 2^{\bm{\alpha}} }  : m \in \mathbb{N}_0, \; 0 \leq |\bm{\alpha}| \leq s \}$ together with designs of the form $X_n^h =  \{ h \cdot \frac{1}{ 2^{\bm{\alpha}} }  : 0 \leq |\bm{\alpha}| \leq s \}$ for the permissible scalings $h = 1 / 2^M$, $M \in \mathbb{N}_0$.
Suppose that $|R(\mathbf{x})| \asymp \|\mathbf{x}\|^a$ for some $a > 0$ in the $\mathbf{x} \rightarrow \mathbf{0}$ limit.
Then $\|R\|_{\mathcal{H}_k(\mathcal{X}_h)} = O\left( \sup_{\mathbf{x} \in \mathcal{X}_h} |R(\mathbf{x})| \right)$ in the $h \rightarrow 0$ limit.
\end{proposition}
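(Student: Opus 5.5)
The plan is to compute $\|R\|_{\mathcal{H}_k(\mathcal{X}_h)}$ explicitly, since the white noise kernel has a very simple reproducing kernel Hilbert space, and then compare the resulting geometric series with the supremum of $|R|$.

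\textbf{Step 1 (the RKHS).} For $k(\mathbf{x},\mathbf{x}') = \sigma^2 \delta_{\mathbf{x},\mathbf{x}'}$ on a countable set $\mathcal{Y}$, the RKHS $\mathcal{H}_k(\mathcal{Y})$ is the weighted sequence space
\begin{align*}
\mathcal{H}_k(\mathcal{Y}) = \Big\{ g : \mathcal{Y} \to \mathbb{R} \, : \, \sum_{\mathbf{y} \in \mathcal{Y}} g(\mathbf{y})^2 < \infty \Big\}, \qquad \|g\|_{\mathcal{H}_k(\mathcal{Y})}^2 = \sigma^{-2} \sum_{\mathbf{y} \in \mathcal{Y}} g(\mathbf{y})^2 .
\end{align*}
I would verify the reproducing property directly: $k(\cdot,\mathbf{y}) = \sigma^2 \mathbf{1}_{\{\mathbf{y}\}}$ and $\langle g, k(\cdot,\mathbf{y})\rangle = g(\mathbf{y})$. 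The restriction to $\mathcal{Y} = \mathcal{X}_h$ is the same space with the same kernel. Hence
\begin{align*}
\|R\|_{\mathcal{H}_k(\mathcal{X}_h)}^2 = \sigma^{-2} \sum_{\mathbf{x} \in \mathcal{X}_h} R(\mathbf{x})^2 .
\end{align*}
Here $R(\mathbf{0}) = 0$, because $R(\mathbf{x}) = o(f(\mathbf{x}) - f(\mathbf{0}))$ and $R$ is evaluated on $\mathcal{X}$.

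\textbf{Step 2 (upper bound on the norm).} The set $\mathcal{X}_h \setminus \{\mathbf{0}\}$ is contained in $\{ h\, 2^{-m} 2^{-\bm{\alpha}} : m \in \mathbb{N}_0,\ |\bm{\alpha}| \le s\}$. Some of these points may coincide for different pairs $(m,\bm{\alpha})$. Summing over index pairs rather than over distinct points can therefore only overcount, which is harmless for an upper bound.

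Since $|R(\mathbf{x})| \asymp \|\mathbf{x}\|^a$, there are $c_2$ and $\delta$ with $|R(\mathbf{x})| \le c_2 \|\mathbf{x}\|^a$ for $\|\mathbf{x}\| \le \delta$. For $h$ small, every point of $\mathcal{X}_h$ lies in this ball, because $\|h 2^{-m}2^{-\bm{\alpha}}\| \le h\|\mathbf{1}\|$. This gives
\begin{align*}
\sum_{\mathbf{x} \in \mathcal{X}_h} R(\mathbf{x})^2 \le c_2^2\, h^{2a} \sum_{m \ge 0} 2^{-2am} \sum_{|\bm{\alpha}| \le s} \|2^{-\bm{\alpha}}\|^{2a} = C\, h^{2a}.
\end{align*}
The constant $C$ is finite because the first sum is geometric with ratio $2^{-2a} < 1$ (using $a > 0$) and the second is a finite sum. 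Thus $\|R\|_{\mathcal{H}_k(\mathcal{X}_h)} \le \sigma^{-1} C^{1/2} h^a$.

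\textbf{Step 3 (lower bound on the supremum).} The point $h\mathbf{1}$ (the case $m=0$, $\bm{\alpha}=\mathbf{0}$) lies in $\mathcal{X}_h$. Using the lower half of the asymptotic, $|R(h\mathbf{1})| \ge c_1 h^a \|\mathbf{1}\|^a$ for small $h$. Hence $\sup_{\mathbf{x} \in \mathcal{X}_h} |R(\mathbf{x})| \ge c_1 \|\mathbf{1}\|^a h^a$.

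Combining Steps 2 and 3 gives $\|R\|_{\mathcal{H}_k(\mathcal{X}_h)} = O(\sup_{\mathbf{x} \in \mathcal{X}_h}|R(\mathbf{x})|)$.

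\textbf{Main obstacle.} The only delicate point is the summability in Step 2. The RKHS norm sums over infinitely many points, so one needs the structure of $\mathcal{X}_h$ to be geometric in $m$, together with $a > 0$, to make the series converge at the same order $h^a$ as the supremum. Taken together with the bound~\eqref{eq: main bound} of \Cref{thm: convergence SPRE}, this recovers the rate~\eqref{eq: bound for MRE} of \Cref{thm: SEM}.
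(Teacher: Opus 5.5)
Your proposal is correct and follows essentially the same route as the paper. Both write the white-noise RKHS norm as the explicit weighted sum $\sigma^{-2}\sum_{\mathbf{x} \in \mathcal{X}_h} R(\mathbf{x})^2$, bound it by a geometric series in $m$ to get $\|R\|_{\mathcal{H}_k(\mathcal{X}_h)} = O(h^a)$, and use the lower half of $|R(\mathbf{x})| \asymp \|\mathbf{x}\|^a$ to show $\sup_{\mathbf{x} \in \mathcal{X}_h}|R(\mathbf{x})| \gtrsim h^a$. Your bookkeeping is slightly tidier than the paper's: you square the constants correctly, you handle coincident points explicitly, and you treat $R(\mathbf{0}) = 0$ explicitly, which follows from continuity of $f$ at the accumulation point $\mathbf{0}$.
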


\noindent That is, the acceleration term appearing in the bound of \Cref{thm: convergence SPRE} decays at least as rapidly as the acceleration term appearing in \Cref{thm: SEM}.
Since \Cref{thm: SEM} is sharp, it follows that \Cref{thm: convergence SPRE} is sharp as well.
Further, \Cref{lem: equiv norms} offers insight into the convergence rates for the absolute error observed in \Cref{fig: convergence}, for which the white noise kernel was used. 
Indeed, from \eqref{eq: Riemann Taylor} we have $\sup_{\mathbf{x} \in \mathcal{X}_h} |R(\mathbf{x})| = O(h^{2s+2})$ so that for $s = 0$ we expect a convergence rate $O(h^2)$, which is indeed observed in the left column of \Cref{fig: convergence}.
For $s = 1$ we expect an asymptotic convergence rate $O(h^4)$; this can be observed in the centre-right column of \Cref{fig: convergence}.

Having established that convergence acceleration is achieved, our attention now turns to implementational detail.

\subsection{Implementational Detail}
\label{sec: UQ}

This section discusses data-driven selection of the kernel $k$, the index set $A$ and the design $\{\mathbf{x}_i\} \subset \mathcal{X}$.
Careful selection of $k$ enables maximal convergence acceleration (cf. \Cref{thm: extra smooth}), learning $A$ circumvents the mathematical analysis that would be needed to derive an expansion \eqref{eq: error} for settings where the simulator $f$ is complicated, and experimental design enables us to select configurations $\{\mathbf{x}_i\}$ that are maximally informative about $f(\mathbf{0})$ subject to a constrained computational budget.
However, to simultaneously select all three is difficult; data are required to learn $k$ and $A$, but it is difficult to design the set $\{\mathbf{x}_i\}$ of simulation configurations without knowledge of $k$ and $A$.
This can be resolved either by fixing $\{\mathbf{x}_i\}$ (indeed, oftentimes the statistician does not have influence over the design set), or by using a sequential strategy that alternates between learning $k$ and $A$ and selecting which simulations to run next:

\paragraph{Learning $k$ and $A$}

In this work we employed \ac{LOOCV}.
Given a design $X_n = \{\mathbf{x}_i\}_{i=1}^n$, let $X_n^{(i)} := X_n \setminus \{\mathbf{x}_i\}$.
Let $p(\cdot ; \mu , \sigma^2)$ denote the density of $N(\mu,\sigma^2)$.
Then the suitability of a kernel $k$ and a set $A$ can be quantified by the leave-one-out predictive log-density
\begin{align}
L(k,A) = - \sum_{i=1}^n \log p( f_i ; \mu_i , \sigma_i^2 ) , \qquad \begin{array}{ll} \mu_i & := \mathbb{E}_{k,A}[f(\mathbf{x}_i) | \{f(\mathbf{x}) , \mathbf{x} \in X_n^{(i)} \} ] \\ \sigma^2_i & := \mathbb{V}_{k,A}[f(\mathbf{x}_i) | \{f(\mathbf{x}) , \mathbf{x} \in X_n^{(i)} \} ] \end{array}  \label{eq: LOOCV error}
\end{align} 
which we seek to minimise through our choice of $k$ and $A$.
Expressions for the predictive mean and variance can be read off from \eqref{eq: SPRE mean} and \eqref{eq: SPRE var}, and we have indicated their $k$ and $A$ dependence in the subscript.
For the kernel, we fix a parametric family $k_\theta$ and use automatic differentiation to perform gradient-based optimisation on $\theta \mapsto L(k_\theta,A)$ using an adaptive trust region method.
The specific kernels that we used are detailed in \Cref{app: kernels}.
The minimal value of the \ac{LOOCV} error obtained in this manner is denoted 
\begin{align*}
    L(A) = \inf_{\theta} \; L(k_\theta,A) .
\end{align*}
For the set $A$, we perform a stepwise method that is aligned with the notion of extrapolation sparsity, prioritising the search for `small' sets $A$.
Starting with $A_0 = \{\mathbf{0}\}$, at iteration $i$ we consider adding each monomial of order $i$ (i.e. $\mathbf{x}^{\bm{\alpha}}$ with $|\bm{\alpha}| = i$) in turn, to see if the error criterion $L(A_{i-1} \cup \{\mathbf{x}^{\bm{\alpha}}\})$ is reduced compared to $L(A_{i-1})$.  
If there are no qualifying monomials we terminate and return $A_{i-1}$.
Else, we add \emph{all} qualifying monomials of order $i$ to obtain an enlarged model $A_i$.
If the error $L(A_i)$ of this enlarged model is reduced relative to $L(A_{i-1})$ then we increment $i$ and proceed, else we terminate\footnote{The algorithm must terminate because the data $X_n$ will no longer be $\mathcal{P}_{A_i}$-unisolvent; for example, this is guaranteed when $\mathrm{dim}(A_i)$ exceeds $n$.} and return $A_{i-1}$.
The multiple rounds of optimisation entail a computational overhead, but this will typically be negligible with respect to the computational cost of running the simulations $f(\mathbf{x}_i)$ that comprise the dataset.

\paragraph{Experimental design}

Given an existing design $X_n$, we seek to augment this with $m$ new configurations $X_m^*$ such that uncertainty regarding $f(\mathbf{0})$ is minimised.
Let $k$ and $A$ be the kernel and index set learned based on $X_n$ in the manner just described.
Recall that $c(\mathbf{x})$ denotes the cost associated with obtaining the simulation output $f(\mathbf{x})$.
Thus, for a total computational budget $C$, we consider 
\begin{align}
\argmin_{X_m^* \subset \mathcal{X} } \mathbb{V}_{k,A}[f(\mathbf{0}) | \{ f(\mathbf{x}) , \mathbf{x} \in X_n \cup X_m^* \} ] \quad \text{s.t.} \quad \sum_{i=1}^m c(\mathbf{x}_i^*) \leq C .  \label{eq: constrained opt}
\end{align}
The variance objective can be explicitly computed due to \eqref{eq: SPRE var}, which depends on $X_m^*$ but does not depend on the actual values $\{f(\mathbf{x}), \mathbf{x} \in X_m^*\}$.
Solving for $X_m^*$ requires numerical methods for constrained (and possibly discrete) optimisation to be used.

\bigskip

\begin{figure}[t!]
    \centering
    \includegraphics[width=\textwidth]{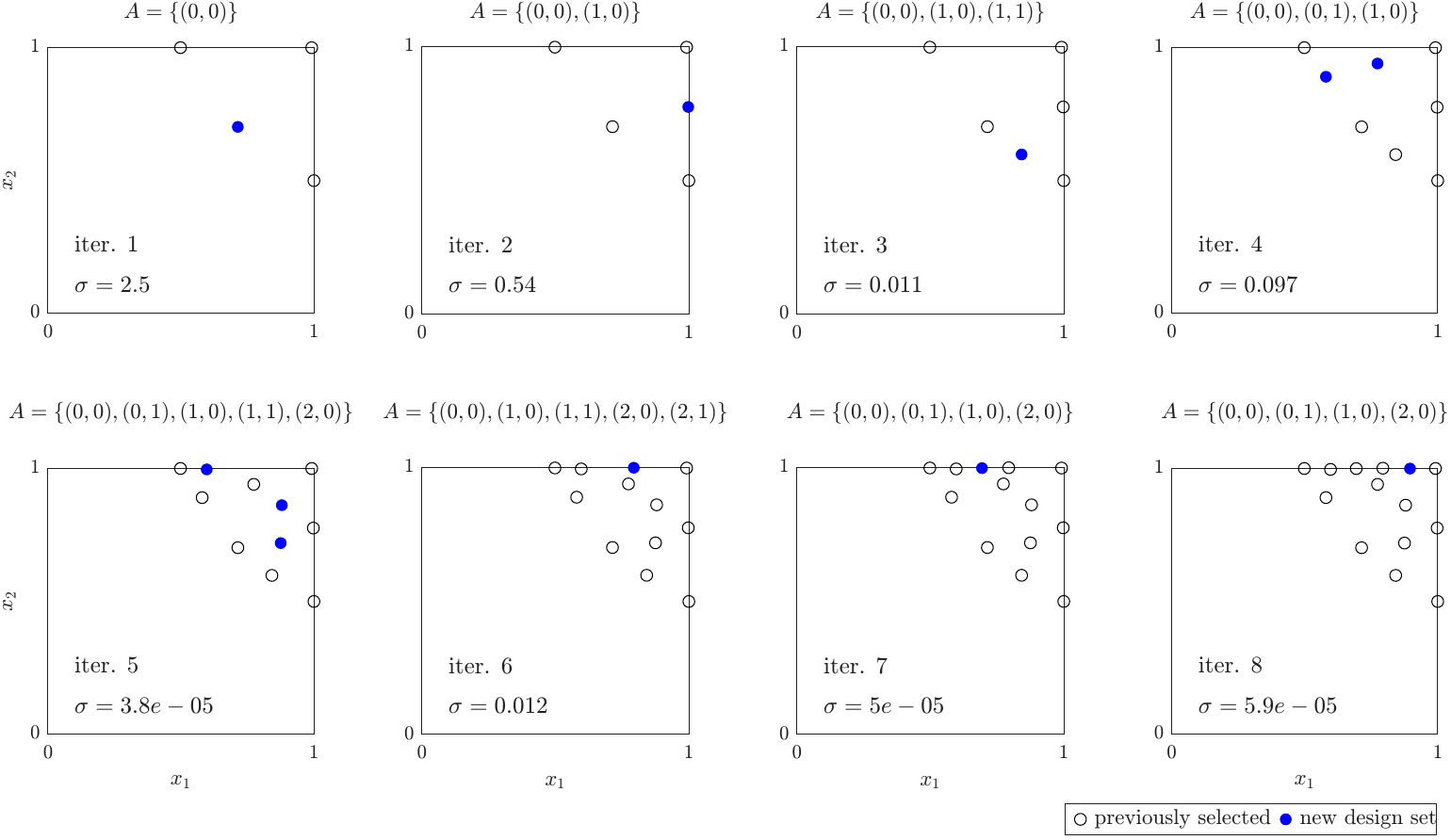}
    \caption{Illustrating iterative learning of $(A,k)$ and experimental design for $\{\mathbf{x}_i\}$.  
    An initial design (hollow circles) is used to estimate the index set $A$ and the kernel $k$; here the white noise kernel $k(\mathbf{x},\mathbf{x}') = \sigma^2 \delta_{\mathbf{x},\mathbf{x}'}$ was used and the parameter $\sigma$ is reported. 
    Experimental design is then used to select new design points (filled circles) which are added to the existing design set (hollow circles).
    The cost function was $c(\mathbf{x}) = \mathbf{x}^{-\mathbf{1}}$ and the true index set $A$ of maximal size, which for this example is $\{(0,0),(1,0),(0,1),(2,0)\}$, is correctly learned.
    \alttext{A figure consisting of 8 panels, each of which shows the experimental design obtained after a certain number of iterations of sequential experimental design have been performed.}
    }
    \label{fig: ED}
\end{figure}

\noindent Alternating between learning the pair $(k,A)$ and selecting configurations $\{\mathbf{x}_i\}$ provides a general framework for \ac{SPRE}.
An illustration is provided in \Cref{fig: ED}, where the true index set $A$ of maximal size is correctly learned after 7 iterations in this framework.
For this example the cost function was $c(\mathbf{x}) = \mathbf{x}^{-\mathbf{1}}$ and the computational budget was $C = 2$.
Full details for this experiment are contained in \Cref{app: ED detail}.

Nevertheless, we acknowledge that in many practical applications the statistician does not have complete freedom to select the design set $\{\mathbf{x}_i\}$, and even if they did, the values $c(\mathbf{x}_i^*)$ taken by the cost function can be unknown at the outset, for instance because the time required to run a simulation is difficult to predict.
Thus in \Cref{sec: experiments}, where we focus on several challenging examples of simulators, we continue to learn $(k,A)$ but we consider the design $\{\mathbf{x}_i\}$ to be fixed.

\section{Empirical Assessment}
\label{sec: experiments}

This section presents three case studies of increasing challenging numerical simulations; two colliding spheres (\Cref{sec:two-spheres}), five colliding shapes (\Cref{subsec: five shapes}), and a deterministic flocking model in which 60 interacting agents are simultaneously simulated (\Cref{subsec: agent}).
To preempt our results, we will see that extrapolation sparsity appears to hold in each case study, suggesting that this phenomenon may be generic in complex numerical simulation.
Accordingly, we will also see evidence of \ac{SPRE} out-performing \ac{GRE} in these settings where extrapolation sparsity can be exploited.

\subsection{Two Spheres 3D Model}
\label{sec:two-spheres}

\begin{figure}[t!]
\centerline{\includegraphics[width=\textwidth]{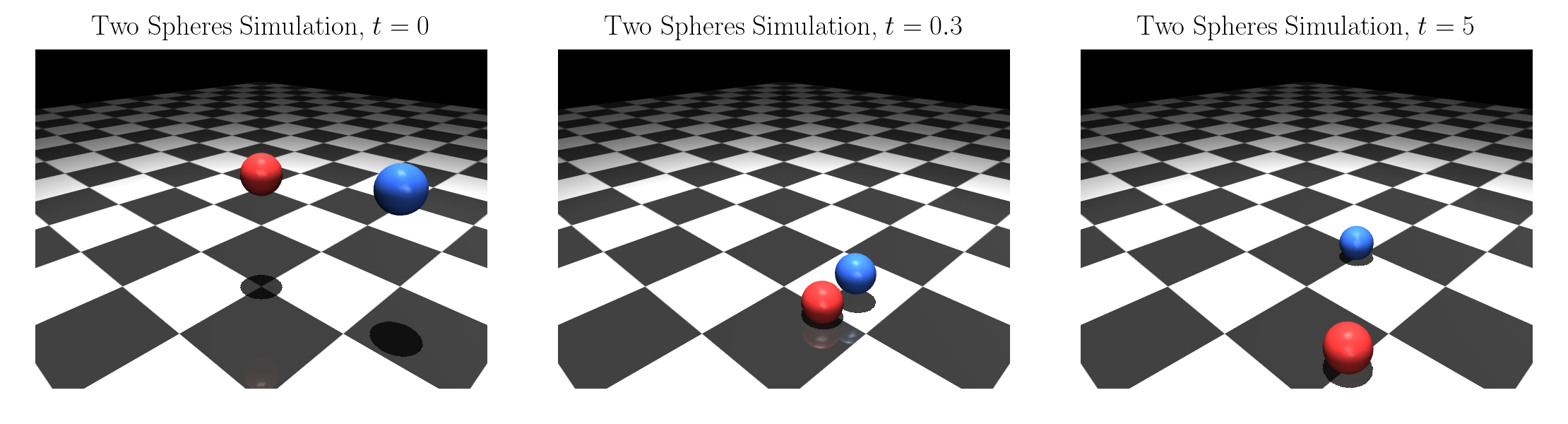}}
\caption{\textit{Two Spheres 3D Model}.
Three frames are displayed from a 3D model simulation with two spheres at times $t = 0$ (left), $t = 0.3$ (middle) and $t = 5$ (right).
The final distance from the origin of the red sphere (closest to viewer) is treated as our quantity of interest.
(A video is available on the GitHub repository.)
\alttext{A figure consisting of 3 panels, each of which contains a snapshot of the simulated system at a certain time $t$, where here $t \in \{0,0.3,5\}$.}
}
\label{fig:two-spheres}
\end{figure}

Our first case study consists of a simulation involving two spheres (one red, one blue), which initially collide, fall to the ground, and then eventually roll to a standstill; see \Cref{fig:two-spheres}. 
Numerical accuracy is controlled by three discretisation parameters $\mathbf{x}=(x_1,x_2,x_3)$: a \emph{time step} $x_1$, a \emph{solver reference} parameter $x_2$, and a \emph{solver impedance} parameter $x_3$, and we are interested in the $\mathbf{x} \rightarrow \mathbf{0}$ limit. 
The time step determines the temporal resolution at which the physical governing equations are discretised, the solver reference parameter specifies the time scale over which constraint violations, such as contact penetration, are corrected by the constraint solver, and the solver impedance parameter controls how constraint forces scale with the magnitude of the violation, effectively tuning the compliance of contacts and constraints.
From a numerical analysis perspective, these simulations are challenging because the dynamics involve contact-rich nonlinear interactions governed by a constraint-based solver; small changes in the time discretisation and solver regularisation parameters can lead to noticeable differences in the resulting trajectories. 
This sensitivity complicates stability analysis, convergence assessment, and the estimation of numerical error; data-driven statistical methods are needed.

For demonstration purposes, the final distance from the origin of the red sphere is treated as our scalar quantity $f(\mathbf{x})$ of interest.
Since the continuum limit is intractable, for assessment purposes we actually consider $f(\mathbf{x} + \mathbf{x}_0)$ for a small offset $\mathbf{x}_0$, so that with sufficient computational effort the limit $f(\mathbf{x}_0)$ can be computed.
Full implementation details are provided in \Cref{app:3d-models-detail}.

\paragraph{Predictive Accuracy of \ac{SPRE}}

\begin{figure}[t!]
\centering
\includegraphics[width=0.6\textwidth]{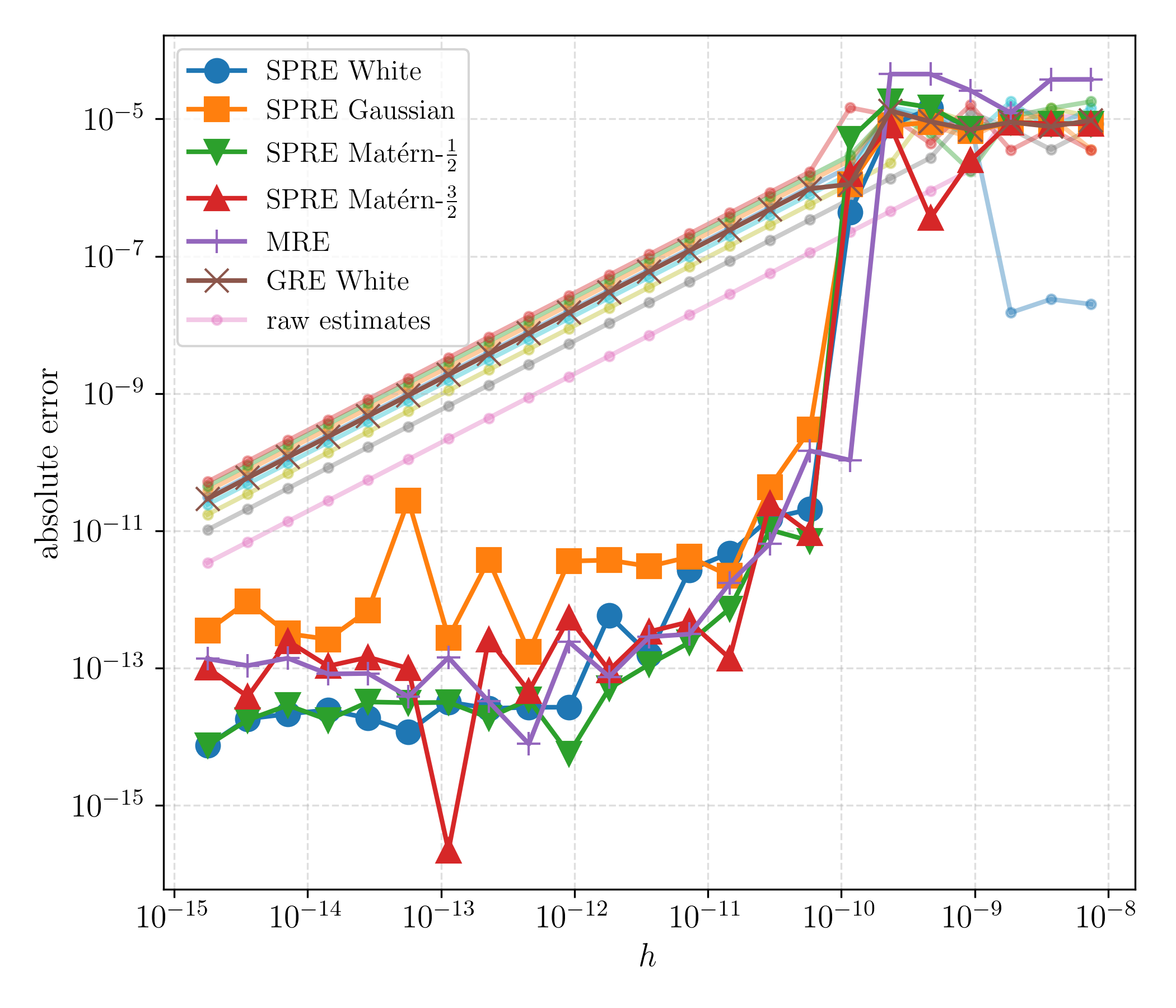}
\caption{\textit{Two Spheres 3D Model}.
Absolute errors of estimates for $f(\mathbf{0})$ as a function of the scaling factor $h$ controlling the design set $X_n^h$. 
The $n=2^d$ raw estimates $\{f(\mathbf{x}) : \mathbf{x} \in X_n^h \}$ are also displayed.
\alttext{A figure showing the absolute error of different extrapolation methods as the global scaling constant $h$ is varied.}
}
\label{fig:two-spheres-abs}
\end{figure}

First we investigated the predictive accuracy of \ac{SPRE} using a sequence of design sets $X_n^h$, obtained by scaling a fixed design set $X_n=\{\mathbf{x}_i\}_{i=1}^n$ of size $n = 2^d$ by a scalar $h\in(0,1]$. 
Full details are provided in \Cref{app:3d-models-detail}.
\Cref{fig:two-spheres-abs} shows the absolute error of \ac{SPRE} for four different choices of kernel $k$, together with the performance of \ac{MRE}, \ac{GRE}, and the raw estimates $\{f(h\mathbf{x}_i)\}_{i=1}^n$.
Several features are apparent:
\begin{enumerate}[(i)]
    \item For $h>10^{-10}$ none of the extrapolation methods improve upon the raw estimates, as the simulations are not yet in the asymptotic regime where convergence to a limit can be predicted. 
    \item For $h<10^{-10}$ all extrapolation methods outperform the raw estimates except \ac{GRE}, rapidly approaching the floating point accuracy limit of around $10^{-13}$. 
    This suggests the theoretical convergence acceleration results we presented in \Cref{subsec: converge accel} may still empirically hold when both the kernel $k$ and the index set $A$ are estimated.
    Among the kernels considered, the estimates produced by the white noise and Mat\'{e}rn-$\frac{1}{2}$ kernels are more accurate and robust.
    The failure of \ac{GRE} to accelerate convergence could be explained by its inability to exploit sparsity, which we argue below is present in this experiment.
    \item \ac{MRE} performs similarly to \ac{SPRE} when $h<10^{-10}$.  However, \ac{MRE} does not provide statistical uncertainty quantification and performs worse than the raw estimates when $h>10^{-10}$.
\end{enumerate}

\paragraph{Predictive Uncertainty of \ac{SPRE}}

\begin{figure}[t!]
\centerline{\includegraphics[width=\textwidth]{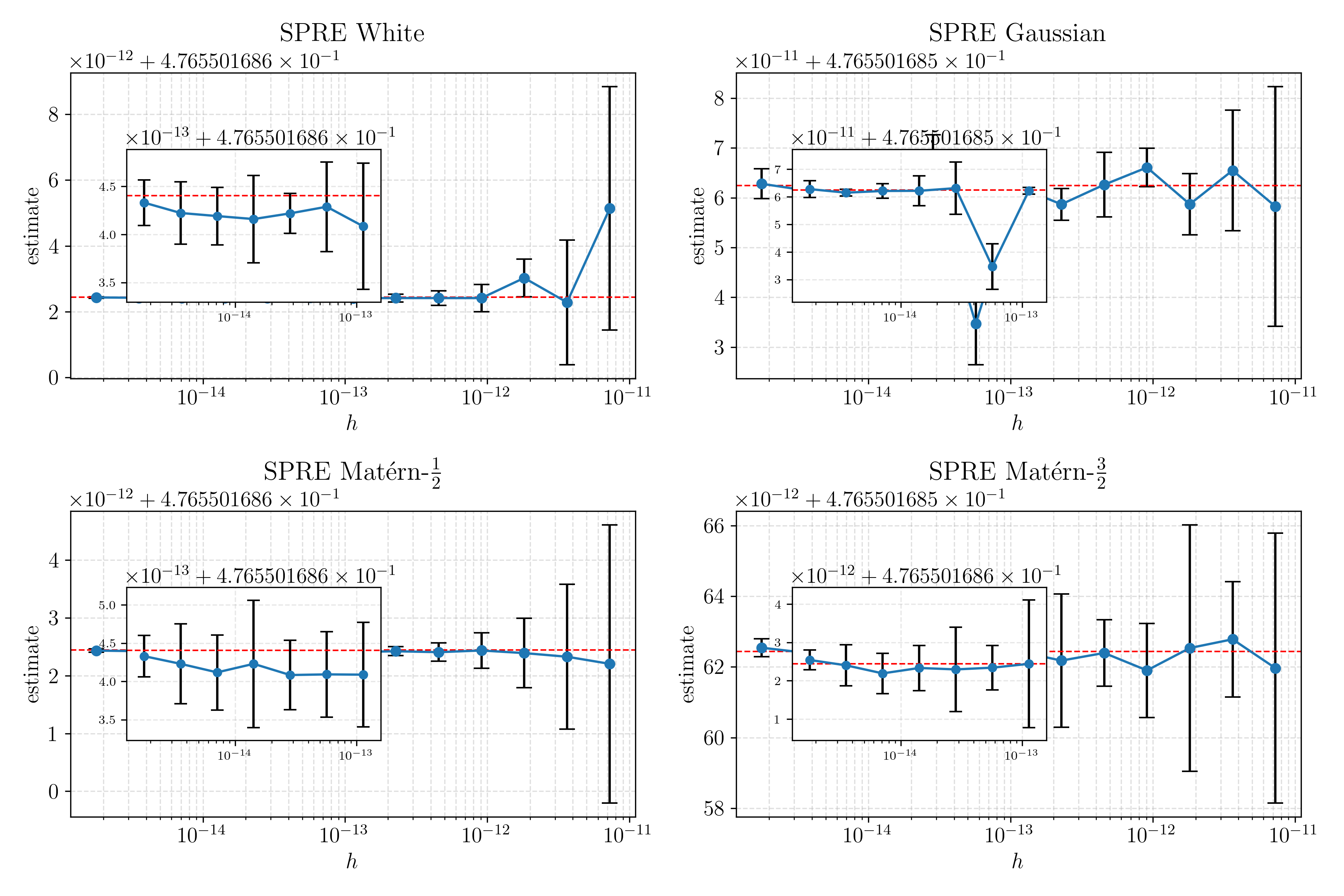}}
\caption{\textit{Two Spheres 3D Model}.
The two standard deviation predictive intervals from \ac{SPRE} are displayed as a function of the scaling factor $h$ controlling the design set $X_n^h$.
The true value $f(\mathbf{0})$ is shown as a red line, dashed.
Top left to bottom right: white noise kernel, Gaussian kernel, Mat\'{e}rn-$\frac{1}{2}$ kernel, and Mat\'{e}rn-$\frac{3}{2}$.
\alttext{A figure containing 4 panels, each of which shows the predictive error bars produced by SPRE as a function of the global scaling parameter $h$.
Each panel contains results for a different type of kernel.}
}
\label{fig:two-spheres-errorbars}
\end{figure}

To evaluate the uncertainty quantification provided by \ac{SPRE} when both $k$ and $A$ are estimated, predictive error bars for $f(\mathbf{0})$ are displayed in \Cref{fig:two-spheres-errorbars}. 
In this case all kernels produced predictive error bars that were reasonably well-calibrated, reducing in size to be commensurate with the size of the absolute error as $h$ is decreased.

\paragraph{Evidence of Extrapolation Sparsity}

To investigate whether extrapolation sparsity is present in this numerical simulation, we examined the behaviour of the estimated index set $A$ as the scaling factor $h$ is varied; results for the white noise kernel are shown in \Cref{fig: two spheres sparsity}.
For $h > 10^{-10}$, the estimated index set $A$ is not stable to variation in $h$, which we attribute to simulations not yet being in the asymptotic regime where convergence can be predicted. 
On the other hand, it can be seen that the elements of $A$ are stably estimated in the $h \rightarrow 0$ limit.
Moreover the cardinality of $A$ remains small in this limit, suggesting $d_{\mathrm{ext}}(f) \approx 2 - 4$.
The elements $(0,0,0)$ and $(1,0,0)$ are always selected, suggesting that error due to the time step $x_1$ is dominant, while the elements $(1,1,0)$ and $(1,0,1)$ are sometimes selected, indicating a possible interaction between the time step and the solver reference and solver impedance parameters. 
The same experiment was repeated for the Mat\'{e}rn-$\frac{1}{2}$ kernel (\Cref{fig: two spheres sparsity matern1/2}), the Mat\'{e}rn-$\frac{3}{2}$ kernel (\Cref{fig: two spheres sparsity matern3/2}), and the Gaussian kernel (\Cref{fig: two spheres sparsity gaussian}).
Similar evidence for extrapolation sparsity was observed for the Mat\'{e}rn-$\frac{1}{2}$ kernel, while for the smoother kernels no clear trend could be seen in the estimated index set $A$.
The latter may be explained by the additional regularity requirement associated with these smoother kernels not being satisfied in this experiment.

\begin{figure}
    \centering
    \includegraphics[width=\textwidth]{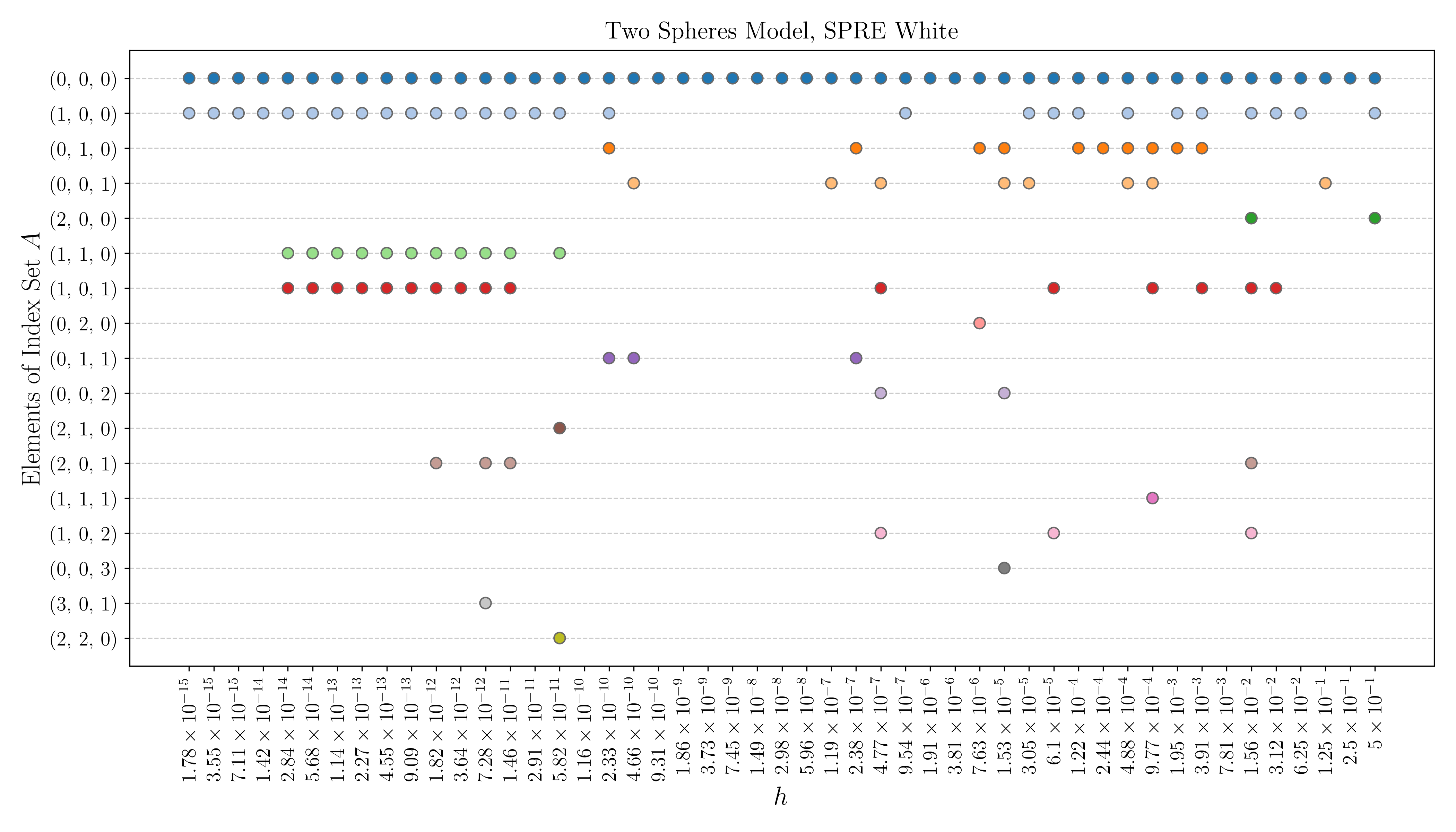}
    \caption{\emph{Two Spheres 3D Model}.  
    The index set $A$ was estimated based on a design of size $n = 2^d$, for a range of values of the scaling factor $h$ controlling the design set $X_n^h$.
    Here the white noise kernel was used.
    \alttext{A figure displaying the estimated index set $A$ as the global scaling parameter $h$ is varied.}
    }
    \label{fig: two spheres sparsity}
\end{figure}

\subsection{Five Shapes 3D Model}
\label{subsec: five shapes}

\begin{figure}[t!]
\centerline{\includegraphics[width=\textwidth]{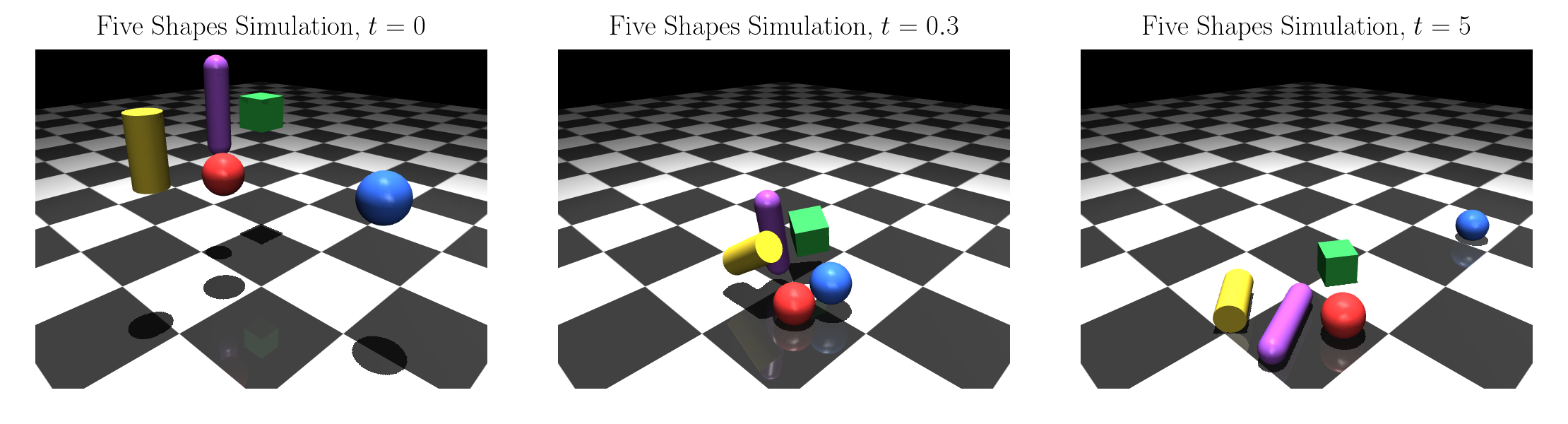}}
\caption{\textit{Five Shapes 3D Model}.
Three frames are displayed from a 3D model simulation with five shapes at times $t = 0$ (left), $t = 0.3$ (middle) and $t = 5$ (right).
The final distance from the origin of the red sphere (closest to viewer) is treated as our quantity of interest.
(A video is available on the GitHub repository.)
\alttext{A figure consisting of 3 panels, each of which contains a snapshot of the simulated system at a certain time $t$, where here $t \in \{0,0.3,5\}$.}
}
\label{fig:five-shapes}
\end{figure}

Next we consider a more challenging numerical simulation using a similar setup to \Cref{sec:two-spheres}, but with five interacting bodies: two spheres, a cylinder, a cube, and a capsule.
A typical simulation is shown in \Cref{fig:five-shapes}.
The shapes collide with one another, fall to the ground, and eventually come to rest. 
As before, the final distance from the origin of the red sphere is used as our scalar quantity of interest.
Full implementation details are provided in \Cref{app:3d-models-detail}.

\paragraph{Predictive Accuracy of \ac{SPRE}}

The same experimental protocol was adopted with a design of size $n = 2^d$, with the predictive accuracy of each method is displayed in \Cref{fig:five-shapes-abs}. 
Two main differences compared with the two-sphere simulation are apparent:
\begin{enumerate}[(i)]
\item Smaller tolerances are needed to reach the asymptotic regime where convergence becomes predictable, here around $h < 10^{-12}$.
For $h<10^{-12}$, all extrapolation methods except \ac{GRE} outperform the raw estimates, reaching accuracies around $10^{-9}$. Although this is less precise than in the two-sphere sphere, the improvement relative to the raw estimates remains clear.
\item There is a smaller window of $h$ values in which convergence can be accelerated before the floating point accuracy limit is reached.
\end{enumerate}
Aside from these differences, the results are broadly consistent with the two-sphere experiment; \ac{SPRE} with either the white noise or Mat\'{e}rn-$\frac{1}{2}$ kernels generally performed best.

\begin{figure}[t!]
\centering 
\includegraphics[width=0.6\textwidth]{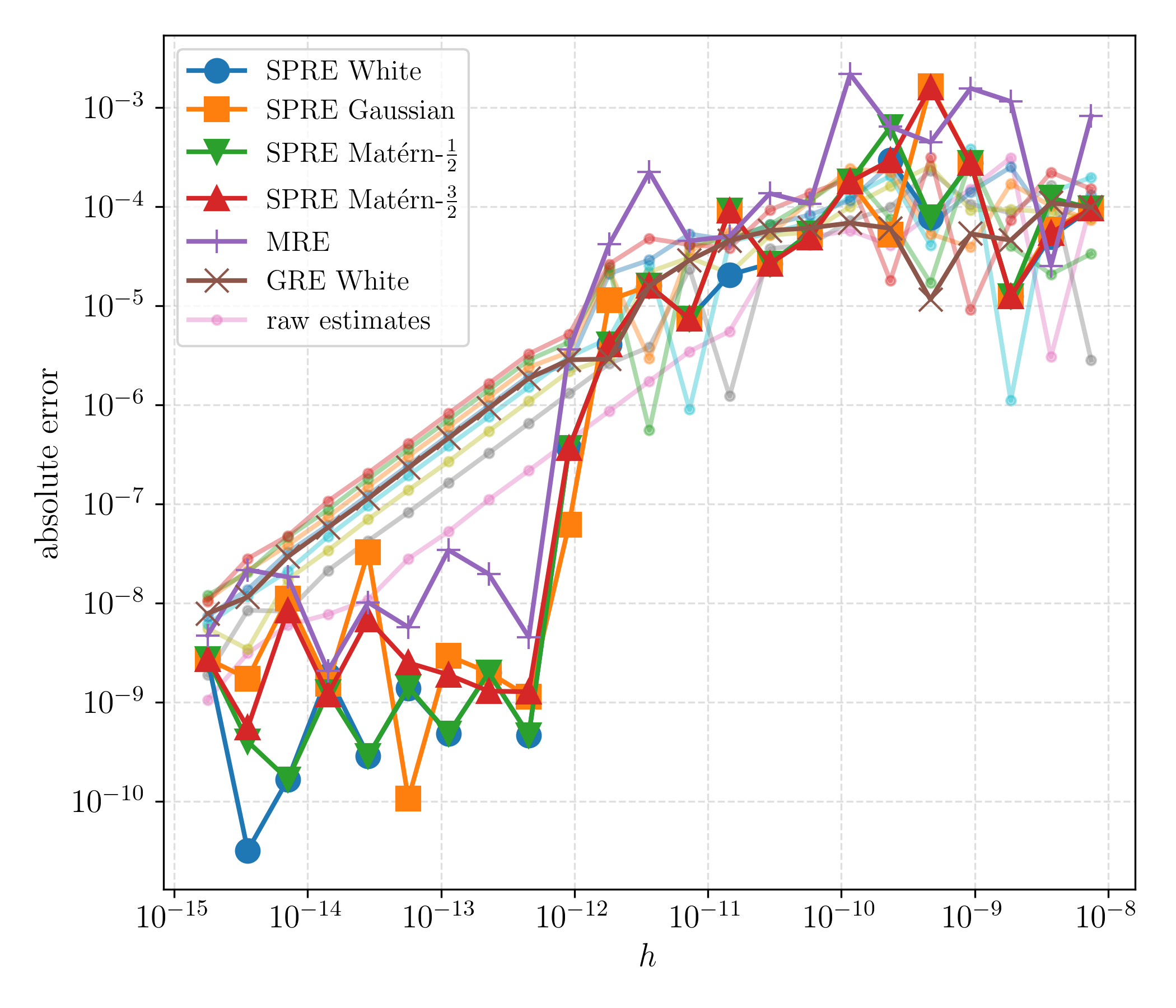}
\caption{\textit{Five Shapes 3D Model}.
Absolute errors of estimates for $f(\mathbf{0})$ as a function of the scaling factor $h$ controlling the design set $X_n^h$. 
The $n=2^d$ raw estimates $\{f(\mathbf{x}) : \mathbf{x} \in X_n^h \}$ are also displayed.
\alttext{A figure showing the absolute error of different extrapolation methods as the global scaling constant $h$ is varied.}
}
\label{fig:five-shapes-abs}
\end{figure}

\paragraph{Predictive Uncertainty of \ac{SPRE}}

Predictive error bars for $f(\mathbf{0})$ are shown in \Cref{fig:five-shapes-errorbars} and again appear to be well-calibrated. 
Compared to the two-spheres experiment, the predictive intervals no longer decrease substantially for $h<10^{-12}$, indicating that the attainable numerical accuracy has been reached.

\paragraph{Evidence of Extrapolation Sparsity}
Again we examined the behaviour of the estimated index set $A$ as the scaling factor $h$ is varied; results for the white noise kernel are shown in \Cref{fig: five shapes sparsity}.
As for the two-spheres experiment the elements $(0,0,0)$ and $(1,0,0)$ are always present in the estimated $A$ when $h$ is sufficiently small, while occasionally additional basis elements such as $(2,0,0)$ are added. 
Thus again we find evidence for an extrapolation dimension $d_{\mathrm{ext}}(f) \approx 2 - 4$ in this experiment.
Results for the Mat\'{e}rn-$\frac{1}{2}$ kernel (\Cref{fig: five shapes sparsity matern1/2}), the Mat\'{e}rn-$\frac{3}{2}$ kernel (\Cref{fig: five shapes sparsity matern3/2}), and the Gaussian kernel (\Cref{fig: five shapes sparsity gaussian}), showed higher variance among the the estimated elements of $A$, which we attribute to the more challenging nature of this task, in which multiple collisions are simulated.

\subsection{Agent-Based Flocking Model}
\label{subsec: agent}

\begin{figure}[t!]
\centerline{\includegraphics[width=\textwidth]{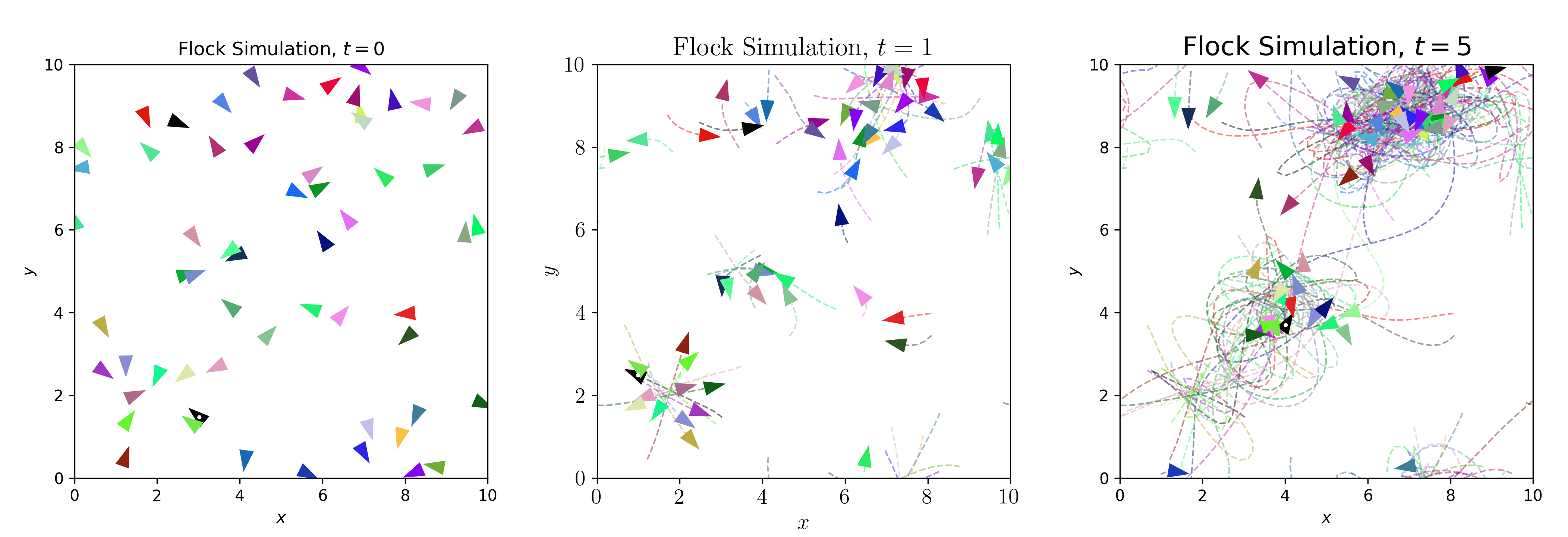}}
\caption{\textit{Agent-Based Flocking Model}.
Three frames are displayed from a multi-agent flock simulation at times $t = 0$ (left), $t = 1$ (middle) and $t = 5$ (right).
The final distance from the origin of the black agent with a white dot is treated as our quantity of interest.
(A video is available on the GitHub repository.)
\alttext{A figure consisting of 3 panels, each of which contains a snapshot of the simulated system at a certain time $t$, where here $t \in \{0,1,5\}$.}
}
\label{fig:flock}
\end{figure}

Our final case study represents a step-change in complexity compared to the first two case studies, where now 60 interacting agents move in a two-dimensional periodic domain to mimic the flocking behaviour of birds in the wild. 
Each agent updates its position and velocity based on pairwise interactions with neighbouring agents, which include short-range repulsion to prevent overlap and longer-range attraction to promote group cohesion. 
A typical simulation is shown in \Cref{fig:flock}.
Such is the complexity of this simulation that it is \emph{a priori} unclear if \emph{any} extrapolation methods can work.
Indeed, from a numerical analysis perspective, the strongly nonlinear, coupled interactions among many agents can produce emergent, sensitive-to-initial-conditions dynamics that render stability, convergence, and error estimation of numerical schemes extremely difficult.

For this experiment, simulation accuracy is controlled by three discretisation parameters; a \emph{time step} $x_1$, a \emph{repulsion softening} parameter $x_2$, and a \emph{cutoff width} $x_3$.
The repulsion softening prevents singular repulsive forces at very small distances, while the cutoff width ensures interaction forces decay smoothly near the interaction radius; both are needed for numerical stability to compensate for using a positive time step $x_1$, while an exact simulation corresponds to a limit in which $\mathbf{x} \rightarrow \mathbf{0}$.
For demonstration purposes, the final distance from the origin of the black agent with a white dot in \Cref{fig:flock} is treated as our scalar quantity of interest.
Full implementation details are provided in \Cref{app:3d-models-detail}.

\paragraph{Predictive Accuracy of \ac{SPRE}}

\begin{figure}[t!]
\centering
\includegraphics[width=0.6\textwidth]{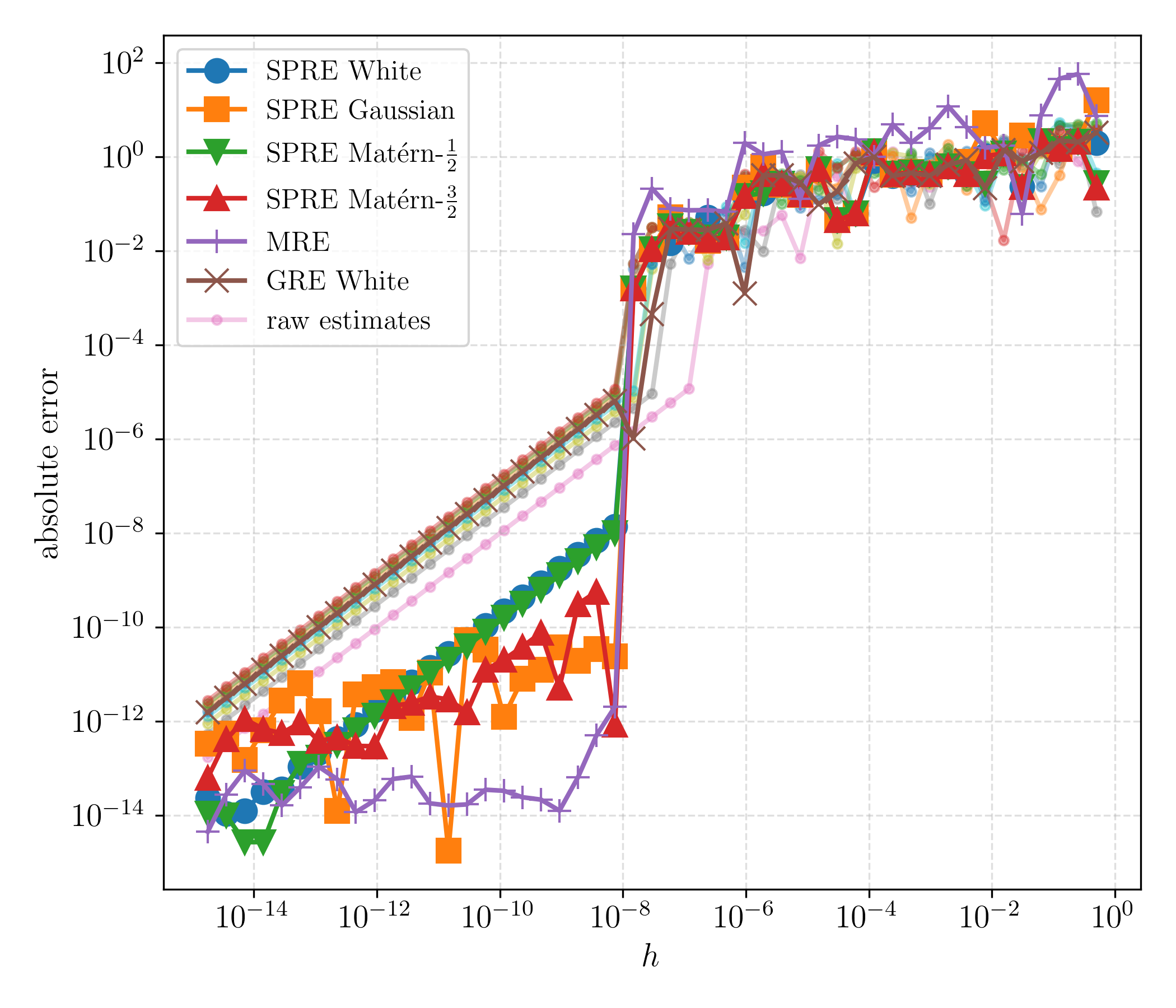}
\caption{\textit{Agent-Based Flocking Model}.
Absolute errors of estimates for $f(\mathbf{0})$ as a function of the scaling factor $h$ controlling the design set $X_n^h$. 
The $n=2^d$ raw estimates $\{f(\mathbf{x}) : \mathbf{x} \in X_n^h \}$ are also displayed.
\alttext{A figure showing the absolute error of different extrapolation methods as the global scaling constant $h$ is varied.}
}
\label{fig:flock-abs}
\end{figure}

Absolute errors are displayed in \Cref{fig:flock-abs}.
Remarkably, \emph{all instances of \ac{SPRE} were able to out-perform the raw estimates} when $h$ is sufficiently small, as was \ac{MRE}; we consider this to be a surprising and highly encouraging result.
Similarly to the five shapes case study, there is a narrow window of $h$ values (here, around $h \approx 10^{-8}$) in which convergence is rapidly accelerated.
However, in contrast to the two earlier case studies (and in contrast to \ac{MRE} in this experiment), the accuracy of \ac{SPRE} continues to improve as $h$ is further decreased, albeit at the same rate as the raw estimates on which the extrapolation estimates are based.
Though the mechanism for this is difficult to isolate, we suspect floating point precision is preventing further convergence acceleration for $h < 10^{-8}$ with \ac{SPRE}.
Nevertheless, the estimates provided by \ac{SPRE} are two orders of magnitude more accurate than simply using the raw estimates to approximate $f(\mathbf{0})$.
Further, the estimates from \ac{SPRE} are seen to be robust to variations in $h$ when either the white noise or Mat\'{e}rn-$\frac{1}{2}$ kernel are used.

\paragraph{Predictive Uncertainty of \ac{SPRE}}

Predictive error bars are shown in \Cref{fig:flock-errorbars}, with results similar to those reported for the two spheres experiment.
That is, the predictive error bars appear to be well-calibrated and decrease in a manner commensurate with the size of the actual prediction error in this experiment.

\paragraph{Evidence of Extrapolation Sparsity}

\begin{figure}[t!]
    \centering
    \includegraphics[width=\textwidth]{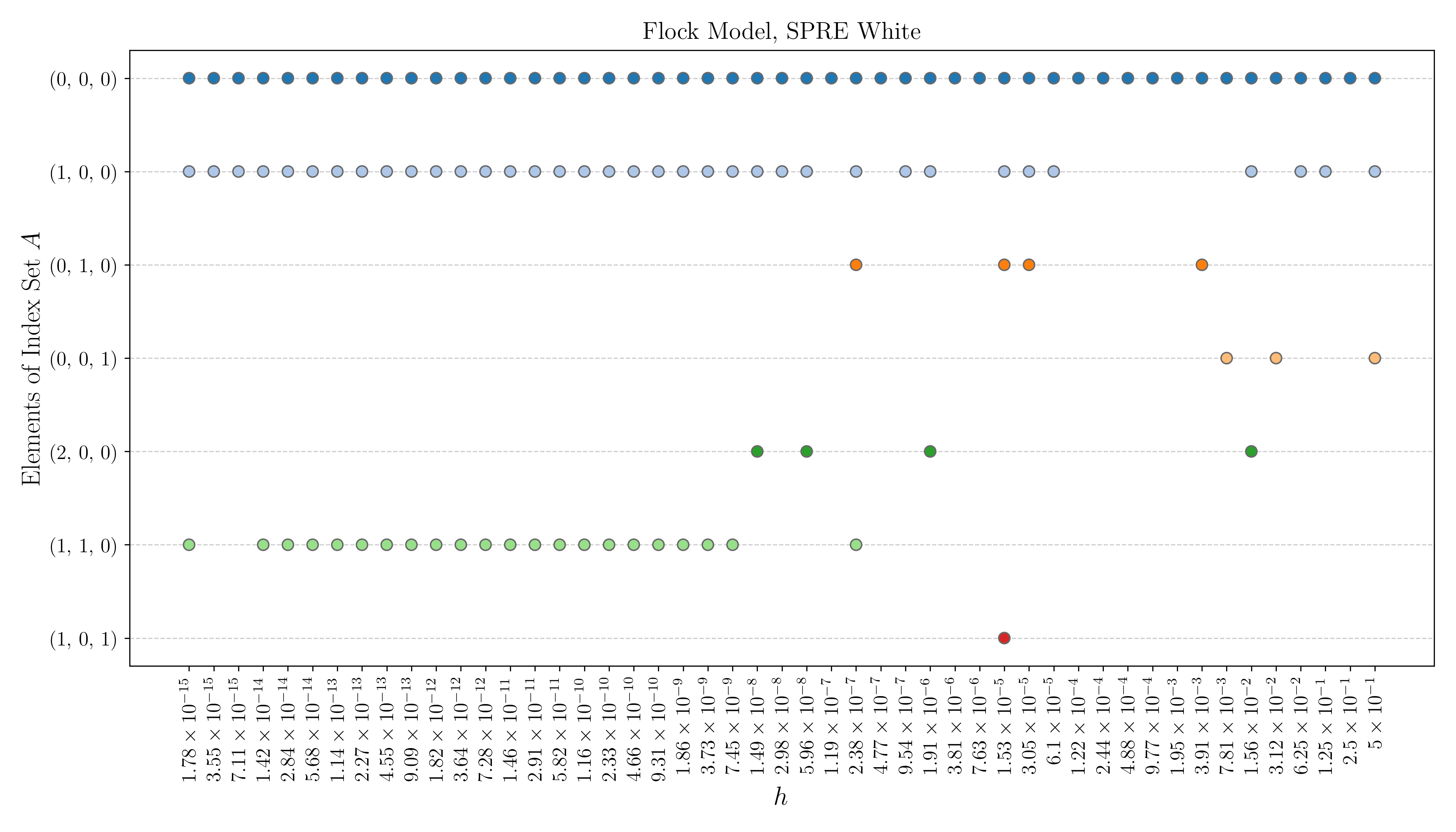}
    \caption{\emph{Agent-Based Flocking Model}.  
    The index set $A$ was estimated based on a design of size $n = 2^d$, for a range of values of the scaling factor $h$ controlling the design set $X_n^h$.
    Here the white noise kernel was used.
    \alttext{A figure displaying the estimated index set $A$ as the global scaling parameter $h$ is varied.}
    }
    \label{fig: flocking sparsity}
\end{figure}

For the white noise kernel, the estimated index set $A$ typically contains $(0,0,0)$, $(1,0,0)$ and $(1,1,0)$ when $h < 10^{-8}$ (cf. \Cref{fig: flocking sparsity}), suggesting a first order expansion in the time step $x_1$ and repulsion softening $x_2$, together with an interaction effect, so that the extrapolation dimension is $d_{\mathrm{ext}}(f) \approx 3$. 
Results for other kernels  are contained in \Cref{fig: flocking sparsity matern1/2,fig: flocking sparsity matern3/2,fig: flocking sparsity gaussian} of \Cref{app: additional}.
Similarly to the first two case studies, the index set $A$ is robustly estimated only when the white noise or Mat\'{e}rn-$\frac{1}{2}$ kernels are used.

\section{Discussion}
\label{sec: discussion}

Accelerating complex simulation is a fundamental computational task, and while statistical methodologies for surrogate modelling and multi-fidelity modelling are well-developed, the potential for convergence acceleration via statistical approaches to extrapolation has been under-explored.
A first step was recently taken in \citet{oates2024probabilistic}, but the \ac{GRE} method proposed in that work suffered from a prohibitive data requirement.
To move forward, we introduced the notion of extrapolation sparsity, and empirical evidence was presented that suggests extrapolation sparsity is typical for complex simulators, representing an exciting opportunity for methodological development.
To exploit this opportunity we developed \ac{SPRE}, which harnesses extrapolation sparsity to accelerate convergence of complex simulators within a principled statistical framework, and with an exponentially reduced data requirement.

\paragraph{Related Work}

Classical extrapolation methods can be classified according to the basis functions used to construct an interpolant; the original polynomial basis functions of \citet{richardson1911ix,lf1927deferred} can be contrasted with the stability-promoting rational functions of \citep{thiele1909interpolationsrechnung,bulirsch1964fehlerabschatzungen,larkin1967some}, as well as a range of other choices documented in \citet{brezinski2013extrapolation}.
However, guidance on selecting a basis is limited, due to the absence of formal approaches to model selection that would be available within a statistical framework.
Despite this limitation, classical extrapolation methods continue to find new applications, including in optimal transport \citep{chizat2020faster}, regularisation and training of machine learning models \citep{bach2021effectiveness}, and sampling with Markov chain Monte Carlo \citep{durmus2016stochastic}.

Surrogate modelling \citep{sacks1989design} and related data-driven techniques such as reduced order modelling \citep{lucia2004reduced} are popular in settings where the simulator involves additional scientific parameters $\bm{\theta}$, and the parameter dependence of the simulator output is of interest.
As such, these approaches treat the discretised model $f_{\bm{\theta}}(\mathbf{x})$ as the target, whereas in reality the continuum limit $f_{\bm{\theta}}(\mathbf{0})$ is of principal interest.
This issue is addressed in a data-driven manner in \ac{MFM}, in which one constructs a hierarchy of models $\bm{\theta} \mapsto f_{\bm{\theta}}(\mathbf{x}_i)$ of different fidelities $\{\mathbf{x}_i\}$ and attempts to leverage simulations from lower-fidelity models to make inferences and predictions regarding higher-fidelity models \citep{peherstorfer2018survey}.
\acp{GP} are often used to facilitate \ac{MFM}, and in specific applications extrapolation of the \ac{GP} model to a continuum quantity of interest $f_{\bm{\theta}}(\mathbf{0})$ has been considered, for example in \citet{thodoroff2023multi} to simulate ice sheets in Antarctic, and in \citet{ji2022conglomerate} to simulate the evolution of the quark-gluon plasma following the Big Bang.
There is also active interest in cost-constrained experimental design in \ac{MFM} \citep[e.g.][]{dixon2025optimally}.
However, to the best of our knowledge, convergence \emph{acceleration} via extrapolation has not been considered in this context.

\paragraph{Future Research Directions}

Our contribution can be cast as a \emph{black box probabilistic numerical method} in the sense of \citet{teymur2021black}, with Bayesian principles used to quantify uncertainty regarding the continuum model of interest \citep{cockayne2019bayesian}.
As such, it would be interesting to compare \ac{SPRE} to other probabilistic numerical methods across a range of foundational numerical tasks. 
Probabilistic numerical methods are typically designed from the ground-up using probabilistic primitives \citep{hennig2015probabilistic}, so it would be interesting to understand if such effort can be circumvented by instead extrapolating output from established numerical methods in the \ac{SPRE} framework.

For our purposes the simulator $f$ was assumed to be deterministic, but stochastic simulators are also widely used.
An important numerical task is then to approximate moments of the simulator output, in particular the expected value $\mathbb{E}[f(\mathbf{0})]$.
Unbiased estimation of $\mathbb{E}[f(\mathbf{0})]$ can be possible using the methodology of \citet{rhee2015unbiased} applied to a sequence $\mathbb{E}[f(\mathbf{x}_i)]$ with $\mathbf{x}_i \rightarrow \mathbf{0}$, while multilevel methods have been combined with Richardson extrapolation in \citet{lemaire2017multilevel,beschle2023quasi}.
As it stands, the techniques used to accelerate convergence of deterministic and stochastic simulators are different; developing a unified framework for convergence acceleration would be an exciting direction for further work.

Although we set out a framework for experimental design with \ac{SPRE}, its theoretical analysis appears to be difficult.
\citet{adcock2025function} studied polynomial extrapolation with noisy data, where the $\{\mathbf{x}_i\}$ were sampled independently from a measure $\mu$, asking how $\mu$ ought to be chosen to promote extrapolation accuracy subject to a constraint of the form $\mathbb{E}[c(\mathbf{x_i})] \leq C$.
Extending this analysis to the case of noise-free data and \ac{GP} regression would be an interesting avenue for future work.

\paragraph{Acknowledgments}

The authors thank Mike Giles and Aretha Teckentrup for discussions which motivated and improved this work.
CJO was supported by the Engineering and Physical Sciences Research Council EP/W019590/1 and a Philip Leverhulme Prize PLP-2023-004.
TK was supported by the Research Council of Finland grants 359183 (``Flagship of Advanced Mathematics for Sensing, Imaging and Modelling'') and 368086 (``Inference and approximation under misspecification'').
TK acknowledges the research environment provided by ELLIS Institute Finland.

\bibliographystyle{abbrvnat}
\bibliography{bibliography}

\newpage
\appendix 
\counterwithin{figure}{section}

\section*{Supplementary Material}

These appendices are electronic supplementary material for the manuscript \emph{Sparse Probabilistic Richardson Extrapolation} by Oates, Howey, Karvonen, and Teckentrup.
\Cref{app: proofs} contains the proofs for all theoretical results reported in the main text.
Protocols used to perform the experiments are described in \Cref{app: protocol}.
Additional experimental results are contained in \Cref{app: additional}.

\section{Proofs}
\label{app: proofs}

This appendix contains proofs for all theoretical results reported in the main text.
\Cref{app: prelim} is dedicated to stating the definitions and technical lemmas that will be used.
The proof of \Cref{thm: SEM} is contained in \Cref{app: SEM}, the proof of \Cref{thm: convergence SPRE} in \Cref{subsec: SPRE converge proof}, the proof of \Cref{thm: extra smooth} in \Cref{app: extra smooth}, and the proof of \Cref{lem: equiv norms} in \Cref{app: proof of equiv norms}.

\subsection{Definitions and Technical Lemmas}
\label{app: prelim}

The purpose of this appendix is to recall certain mathematical properties of polynomial approximation that we will use in the proof of \Cref{thm: SEM}.
In addition, we will derive sparsity-aware versions of particular results on \emph{polynomial reproduction} from \citet{wendland2004scattered}, as these will be required to prove \Cref{thm: convergence SPRE}.
Lastly we recall the technical concept of a \emph{native space} in \Cref{app: native}.

\paragraph{Notation}
For compact $\mathcal{X}$, let $C(\mathcal{X})$ denote the set of continuous functions on $\mathcal{X}$, and let the supremum norm for each $f \in C(\mathcal{X})$ be defined as
$$
\|f\|_{\infty,\mathcal{X}} = \sup_{\mathbf{x} \in \mathcal{X}} |f(\mathbf{x})| .
$$

\subsubsection{Lebesgue Functions and Polynomial Approximation}

The aim of this appendix is to clarify how well one can approximate a continuous function with an element of $\mathcal{P}_A$.
Recall the definition of the Vandermonde matrix $\mathbf{V}_A(X_n)$ in \Cref{subsec: notation}, an $n \times m$ matrix where $m$ is the cardinality of the index set $A$.
For the remainder of this appendix we restrict attention to the case where $m$ and $n$ are equal and $X_n$ is $\mathcal{P}_A$-unisolvent. 
Then, the \emph{Vandermonde determinant} is denoted $\mathrm{VDM}_A(X_n) = \mathrm{det}(\mathbf{V}_A(X_n))$.
The (multivariate) \emph{Lagrange polynomials} $\ell_i^A(\cdot  ; X_n)$ are defined as the elements of $\mathcal{P}_A$ for which $\ell_i^A(\mathbf{x}_j ; X_n) = \delta_{i,j}$.
It is straightforward to check that
\begin{align*}
\ell_i^A(\mathbf{x} ; X_n) = \frac{\mathrm{VDM}_A(\{\mathbf{x}\} \cup (X_n \setminus \{\mathbf{x}_i\}))}{\mathrm{VDM}_A(X_n)} .
\end{align*}
The (multivariate) \emph{Lebesgue function} measures the sum of absolute values of the Lagrange polynomials defined by $X_n = \{\mathbf{x}_i\}_{i=1}^n \subset \mathcal{X}$:
\begin{align}
\lambda_A(\mathbf{x} ; X_n) = \sum_{i=1}^n | \ell_i^A(\mathbf{x} ; X_n) |  \label{eq: mv leb fn}
\end{align}
for each $\mathbf{x} \in \mathcal{X}$.
For compact $\mathcal{X}$, let
\begin{align*}
\Pi_{A,\mathcal{X}} : (C(\mathcal{X}) , \|\cdot\|_{\infty,\mathcal{X}}) & \rightarrow \mathbb{R} \\
f & \mapsto f_n(\mathbf{0})
\end{align*}
where $f_n$ denotes the interpolating polynomial (in $\mathcal{P}_A$) of $f$ on $X_n$.
Note that for each $f \in C(\mathcal{X})$ we can bound $\Pi_{A,\mathcal{X}}(f)$ using the Lebesgue function:
\begin{align}
    |\Pi_{A,\mathcal{X}}(f)| & = \left| \sum_{i=1}^n \ell_i^A(\cdot ; X_n) f(\mathbf{x}_i) \right| 
    \leq \lambda_A(\mathbf{0}; X_n) \|f\|_{\infty,\mathcal{X}}  \label{eq: bound Pi}
\end{align}
Then we have the following technical result, which will be used in the proof of \Cref{thm: SEM}:

\begin{lemma}[Sparse polynomial approximation]
\label{thm: leb}
Let $X_n$ be $\mathcal{P}_A$-unisolvent with $n = |A|$.
Let $f \in C(\mathcal{X})$ where $\mathbf{0} \in \mathcal{X} \subset [0,\infty)^d$ is compact, and let $f_n$ denote the interpolating polynomial (in $\mathcal{P}_A$) of $f$ on $X_n$.
Then 
\begin{align}
| f(\mathbf{0}) - f_n(\mathbf{0}) | \leq (1 + \lambda_{A}(\mathbf{0} ; X_n)) \inf_{p \in \mathcal{P}_A} \| f - p \|_{\infty , \mathcal{X}} . \label{eq: best poly approx}
\end{align}
\end{lemma}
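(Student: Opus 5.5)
This is the standard Lebesgue-constant argument, adapted to the point evaluation at $\mathbf{0}$. The plan has three steps.

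First, the interpolation operator reproduces $\mathcal{P}_A$: if $p \in \mathcal{P}_A$, then its interpolant on $X_n$ is $p$ itself. This follows from unisolvency with $n = |A|$, since $\mathbf{V}_A(X_n)$ is square and invertible, so the interpolant is unique and $p$ is one such interpolant. Hence $\Pi_{A,\mathcal{X}}(p) = p(\mathbf{0})$ for every $p \in \mathcal{P}_A$. We also note that $\Pi_{A,\mathcal{X}}$ is linear, because the interpolant is given by the Lagrange form $f_n = \sum_i f(\mathbf{x}_i)\ell_i^A(\cdot;X_n)$. Both $\mathbf{0}$ and $X_n$ lie in $\mathcal{X}$, so all evaluations are controlled by $\|\cdot\|_{\infty,\mathcal{X}}$.

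Second, fix an arbitrary $p \in \mathcal{P}_A$ and split the error:
\begin{align*}
f(\mathbf{0}) - f_n(\mathbf{0}) = \bigl(f(\mathbf{0}) - p(\mathbf{0})\bigr) + \bigl(\Pi_{A,\mathcal{X}}(p) - \Pi_{A,\mathcal{X}}(f)\bigr) = (f-p)(\mathbf{0}) - \Pi_{A,\mathcal{X}}(f-p).
\end{align*}
The first term is at most $\|f-p\|_{\infty,\mathcal{X}}$ because $\mathbf{0} \in \mathcal{X}$. The second term is at most $\lambda_A(\mathbf{0};X_n)\|f-p\|_{\infty,\mathcal{X}}$ by \eqref{eq: bound Pi} applied to the continuous function $f-p$. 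Together these give $|f(\mathbf{0}) - f_n(\mathbf{0})| \le (1+\lambda_A(\mathbf{0};X_n))\|f-p\|_{\infty,\mathcal{X}}$.

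Third, take the infimum over $p \in \mathcal{P}_A$, which yields \eqref{eq: best poly approx}. The infimum is finite, and is in fact attained because $\mathcal{P}_A$ is finite-dimensional, though attainment is not needed.

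There is no real obstacle. The only step needing care is justifying the reproduction property, and that rests solely on the invertibility of the square Vandermonde matrix.
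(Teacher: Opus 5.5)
Your proposal is correct and follows the paper's argument: it uses the triangle inequality at $\mathbf{0}$, the $\mathcal{P}_A$-reproduction of the interpolation map $\Pi_{A,\mathcal{X}}$, and the Lebesgue-function bound \eqref{eq: bound Pi} applied to $f-p$. The only difference is that the paper fixes a best approximant $p_\star$ and asserts that it exists and is unique, whereas you bound the error for an arbitrary $p \in \mathcal{P}_A$ and take the infimum at the end, which is slightly cleaner because it needs neither attainment nor uniqueness.
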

\begin{proof}
Let $p_\star$ attain the infimum on the right hand side of \eqref{eq: best poly approx}, which exists and is unique since $X_n$ is $\mathcal{P}_A$-unisolvent with $n = |A|$.
Then by the triangle inequality, and the fact that $\Pi_{A,\mathcal{X}}$ is a projection with operator norm at most $\lambda_A(\mathbf{0}; X_n)$ from \eqref{eq: bound Pi},
\begin{align*}
| f(\mathbf{0}) - f_n(\mathbf{0}) | & \leq | f(\mathbf{0}) - p_\star(\mathbf{0}) | + | p_\star(\mathbf{0}) - f_n(\mathbf{0}) |\\
& =  | f(\mathbf{0}) - p_\star(\mathbf{0}) | + | \Pi_{A,\mathcal{X}}( p_\star ) - \Pi_{A,\mathcal{X}} ( f ) | \\
& =  | f(\mathbf{0}) - p_\star(\mathbf{0}) | + | \Pi_{A,\mathcal{X}}( p_\star - f ) | \\
& \leq  \| p_\star - f \|_{\infty , \mathcal{X}} + \lambda_A(\mathbf{0};X_n)  \| p_\star - f \|_{\infty , \mathcal{X}}
\end{align*}
as required.
\end{proof}

\subsubsection{Sparse Polynomial Reproduction}
\label{app: sparse repro}

The aim of this appendix is to generalise results on polynomial reproduction from \citet{wendland2004scattered}; these will be required to prove \Cref{thm: convergence SPRE}.
The need for generalisation comes from two directions; we use a sparse polynomial basis whereas a dense polynomial basis is considered in \citet{wendland2004scattered}, and we remove the requirement in \citet{wendland2004scattered} for $\mathcal{X}$ to be an open set (this was used by Wendland to ensure derivatives are well-defined, but we do not have that requirement).

\begin{lemma}[Polynomial reproduction at $\mathbf{0}$] 
\label{lem: poly repro}
    Suppose that $X_n = \{\mathbf{x}_i\}_{i=1}^n \subset \mathcal{X}$ is $\mathcal{P}_A$-unisolvent where $\mathbf{0} \in \mathcal{X} \subset [0,\infty)^d$ is compact.
    Then for all $h \in (0,1]$ there exists a vector of coefficients $\tilde{\mathbf{u}}^{(h)} \in \mathbb{R}^n$ and a $h$-independent constant $C_{A,X_n} > 0$ such that 
    \begin{enumerate}
        \item $\sum_{i=1}^n \tilde{u}_i^{(h)} p(h \mathbf{x}_i) = p(\mathbf{0})$ for all $p \in \mathcal{P}_A$ (sparse polynomial reproduction)
        \item $\sum_{i=1}^n |\tilde{u}_i^{(h)}| \leq C_{A,X_n}$.
    \end{enumerate}
\end{lemma}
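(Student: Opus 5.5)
The plan is to build $\tilde{\mathbf{u}}^{(h)}$ explicitly from a single $h$-independent vector, using the scaling structure of monomials. First I fix the unscaled design. Since $X_n$ is $\mathcal{P}_A$-unisolvent, $\mathbf{V}_A := \mathbf{V}_A(X_n)$ has full column rank $m = |A|$. Hence $\mathbf{V}_A^\top$ maps $\mathbb{R}^n$ onto $\mathbb{R}^m$, and there exists $\mathbf{u} \in \mathbb{R}^n$ with $\mathbf{V}_A^\top \mathbf{u} = \mathbf{e}_1$. For definiteness I take the minimum-norm solution $\mathbf{u} = \mathbf{V}_A (\mathbf{V}_A^\top \mathbf{V}_A)^{-1} \mathbf{e}_1$. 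Written out, this says $\sum_i u_i \mathbf{x}_i^{\bm{\alpha}} = \delta_{\bm{\alpha},\mathbf{0}}$ for each $\bm{\alpha} \in A$. By linearity, $\sum_i u_i p(\mathbf{x}_i) = p(\mathbf{0})$ for all $p \in \mathcal{P}_A$, using the convention $\bm{\alpha}_1 = \mathbf{0}$ and $\mathbf{v}_A(\mathbf{0}) = \mathbf{e}_1$.

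Next I transfer this to the scaled design $X_n^h$. For $p = \sum_{\bm{\alpha} \in A} c_{\bm{\alpha}} \mathbf{x}^{\bm{\alpha}}$, the homogeneity of monomials gives $p(h\mathbf{x}_i) = \sum_{\bm{\alpha}} c_{\bm{\alpha}} h^{|\bm{\alpha}|} \mathbf{x}_i^{\bm{\alpha}}$. The function $\mathbf{x} \mapsto p(h\mathbf{x})$ therefore again lies in $\mathcal{P}_A$, because scaling preserves the monomial support $A$; this is the point where sparsity causes no difficulty. So I can simply set $\tilde{\mathbf{u}}^{(h)} := \mathbf{u}$ for every $h$. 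Then
\[
\sum_{i=1}^n \tilde{u}_i^{(h)} p(h\mathbf{x}_i) = \sum_{\bm{\alpha} \in A} c_{\bm{\alpha}} h^{|\bm{\alpha}|} \delta_{\bm{\alpha},\mathbf{0}} = c_{\mathbf{0}} = p(\mathbf{0}),
\]
which gives property 1. Property 2 then holds with $C_{A,X_n} := \|\mathbf{u}\|_1$, which is finite and independent of $h$. If $\mathbf{u} = \mathbf{0}$ were possible, I would replace the constant by $\max(\|\mathbf{u}\|_1, 1)$ to keep it strictly positive. In fact $\mathbf{u} \neq \mathbf{0}$ automatically, since $\mathbf{V}_A^\top \mathbf{u} = \mathbf{e}_1 \neq \mathbf{0}$.

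The only delicate step is checking that the scaled polynomial $p(h\,\cdot)$ remains in the \emph{sparse} space $\mathcal{P}_A$. In Wendland's dense setting one instead uses local polynomial reproduction with norming sets, and uniformity in $h$ comes from a cone condition. Here the monomial homogeneity makes the construction exact, and no fill-distance or cone argument is needed. Compactness of $\mathcal{X}$ and the restriction $\mathcal{X} \subset [0,\infty)^d$ play no role beyond ensuring $h\mathbf{x}_i$ is well defined.

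If one prefers coefficients tied to the kernel, the same conclusion can be obtained later by choosing a different solution of $\mathbf{V}_A^\top \mathbf{u} = \mathbf{e}_1$. Any fixed solution works, and that freedom is presumably what the proof of \Cref{thm: convergence SPRE} will exploit.
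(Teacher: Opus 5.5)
Your proof is correct, but it follows a different and more elementary route than the paper.

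The paper argues functional-analytically. For each fixed $h$ it treats $\mathbf{y}\mapsto (T|_{\mathcal{P}_A}^{-1}\mathbf{y})(\mathbf{0})$ as a bounded functional on the subspace $T(\mathcal{P}_A)\subset(\mathbb{R}^n,\|\cdot\|_\infty)$ of sampled polynomials. It then takes a norm-preserving Hahn--Banach extension, which gives an $h$-dependent $\tilde{\mathbf{u}}^{(h)}$ with $\sum_i|\tilde{u}_i^{(h)}| = \sup_{0\neq p\in\mathcal{P}_A}|p(\mathbf{0})|/\max_i|p(h\mathbf{x}_i)|$. Only afterwards does it show that this norming constant is $h$-independent, via the substitution $\tilde{\beta}_{\bm{\alpha}}=h^{|\bm{\alpha}|}\beta_{\bm{\alpha}}$.

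You use the same scaling invariance one level earlier, on the constraints themselves. We have $\mathbf{V}_A(X_n^h)=\mathbf{V}_A(X_n)\,\mathrm{diag}(h^{|\bm{\alpha}_1|},\dots,h^{|\bm{\alpha}_m|})$, and this invertible diagonal matrix fixes $\mathbf{e}_1$ because $\bm{\alpha}_1=\mathbf{0}$. Hence the whole set of reproducing vectors $\{\mathbf{u}:\mathbf{V}_A(X_n^h)^\top\mathbf{u}=\mathbf{e}_1\}$ is independent of $h$, and a single explicit vector works for every $h$.

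Your route gives a constructive two-line proof with an explicit constant. It needs no Hahn--Banach and no $C(\mathcal{X})$ set-up, so compactness and $\mathcal{X}\subset[0,\infty)^d$ are genuinely unused.

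The paper's route gives the sharpest constant. The Hahn--Banach extension is exactly an $\ell^1$-minimal reproducing vector, and that minimal value is what enters $\tilde{C}_{A,X_n}=1+C_{A,X_n}$ in \Cref{lem: power fn bound}. Your minimum-$\ell^2$ choice may have a larger $\ell^1$ norm. The paper's version also phrases the result in Wendland's norming-set language. Your observation recovers the sharp constant too: since the feasible set does not depend on $h$, you can pick an $\ell^1$-minimal solution instead. One exists because $\|\cdot\|_1$ is coercive on the nonempty affine feasible set. That choice reproduces the paper's constant exactly, with $h$-independence immediate.
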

\begin{proof}
    To begin with we fix $h \in (0,1]$.
    Consider the \emph{sampling operator}
    \begin{align*}
        T : (C(\mathcal{X}) , \|\cdot\|_{\infty,\mathcal{X}} ) & \rightarrow \mathbb{R}^n \\
        f & \mapsto (f(h\mathbf{x}_1) , \dots , f(h\mathbf{x}_n)) ,
    \end{align*}
    which is a bounded linear functional, and its restriction $T|_{\mathcal{P}_A} : \mathcal{P}_A \rightarrow T(\mathcal{P}_A)$ is invertible since $X_n^h$ is $\mathcal{P}_A$-unisolvent.
    In particular $T|_{\mathcal{P}_A}^{-1}$ is also a bounded linear functional.
    It follows that $L_{X_n^h} : T(\mathcal{P}_A) \rightarrow \mathbb{R}$, defined via $L_{X_n^h}(\mathbf{y}) = (T|_{\mathcal{P}_A}^{-1}(\mathbf{y}))(\mathbf{0})$, is a bounded linear functional on $T(\mathcal{P}_A)$.
    This functional $L_{X_n^h}$ therefore has a well-defined operator norm
    \begin{align*}
        \| L_{X_n^h} \|_1 = \sup_{ \mathbf{0} \neq \mathbf{y} \in T(\mathcal{P}_A) } \frac{ |L_{X_n^h}(\mathbf{y})| }{ \|\mathbf{y}\|_\infty } = \sup_{0 \neq p \in \mathcal{P}_A} \frac{ |p(\mathbf{0})| }{ \| (p(h \mathbf{x}_1) , \dots , p(h \mathbf{x}_n)) \|_\infty } .
    \end{align*}
    On the other hand, since $L_{X_n^h}$ is a bounded linear functional on $T(\mathcal{P}_A) \subset \mathbb{R}^n$, from the Hahn--Banach theorem there exists a norm-preserving extension $L_{X_n^h} : \mathbb{R}^n \rightarrow \mathbb{R}$ of the form $L_{X_n^h}(y) = \sum_{i=1}^n \tilde{u}_i^{(h)} y_i$ for some coefficient vector $\tilde{\mathbf{u}}^{(h)} \in \mathbb{R}^n$.
    The operator norm of this function is $\| L_{X_n^h} \|_1 = \sum_{i=1}^n |\tilde{u}_i^{(h)}|$, and we call this constant $C_{A,X_n^h}$.    
    
    To conclude the argument we will show that $C_{A,X_n^h}$ is in fact $h$-independent.
    To see this, let $p(\mathbf{x}) = \sum_{\bm{\alpha} \in A} \beta_{\bm{\alpha}} \mathbf{x}^{\bm{\alpha}}$ parametrise a generic element of $\mathcal{P}_A$.
    Then the operator norm can be expressed as
    \begin{align*}
        \| L_{X_n^h} \|_1 = \sup_{\bm{0} \neq \bm{\beta} \in \mathbb{R}^m } \frac{ |\beta_{\mathbf{0}}| }{ \max \{ | \beta_{\mathbf{0}} + \sum_{ \bm{\alpha} \in A \setminus \{\mathbf{0}\} } \beta_{\bm{\alpha}} ( h \mathbf{x}_i )^{\bm{\alpha}} | \, : \, i = 1,\dots,n \} }    %
    \end{align*}
    which is indeed $h$-independent (as can be seen from the change of variables $\tilde{\beta}_{\bm{\alpha}} = h^{|\bm{\alpha}|} \beta_{\bm{\alpha}}$).
\end{proof}

The content of \Cref{lem: poly repro} is closely related to the topic of \emph{norming sets}, where one seeks $X_n$ such that $\|p\|_{\infty,\mathcal{X}}$ is controlled by $\|p\|_{\infty , X_n}$; see \citet{kroo2011optimal}.
For our purposes, we need only control $|p(\mathbf{0})|$ by $\|p\|_{\infty , X_n}$.

\subsubsection{Native Spaces}
\label{app: native}

This appendix introduces the concept of a \emph{native space}, a space of functions that are formed as sums of elements from $\mathcal{P}_A$ and $\mathcal{H}_k(\mathcal{X})$.
(Recall that $\mathcal{H}_k(\mathcal{X})$ denotes the \ac{RKHS} of real-valued functions on $\mathcal{X}$ for which $k$ is the reproducing ; \citealp[see][for background]{berlinet2011reproducing}.)
The standard construction, for example as described in Chapter 10 of \citet{wendland2004scattered}, is unnecessarily technical for our purposes, and we therefore present a more explicit definition of the native space and derive the properties of the native space that will be required in our work.

Let $\mathcal{N}_k(\mathcal{X})$ denote the vector space of functions $f : \mathcal{X} \rightarrow \mathbb{R}$ of the form $f = p + h$ with $p \in \mathcal{P}_A$ and $h \in \mathcal{H}_k(\mathcal{X})$.
This space can be equipped with a semi-norm
\begin{align*}
    |f|_{\mathcal{N}_k(\mathcal{X})} := \inf \{ \|h\|_{\mathcal{H}_k(\mathcal{X})} : f = p + h , p \in \mathcal{P}_A , h \in \mathcal{H}_k(\mathcal{X}) \}
\end{align*}
and we can define the projection $\Pi_{\mathcal{P}_A} : \mathcal{N}_k(\mathcal{X}) \rightarrow \mathcal{P}_A$ onto $\mathcal{P}_A$ such that $|\Pi_{\mathcal{P}_A}(f)|_{\mathcal{N}_k(\mathcal{X})} = 0$ and $|(I - \Pi_{\mathcal{P}_A})(f)|_{\mathcal{N}_k(\mathcal{X})} = |f|_{\mathcal{N}_k(\mathcal{X})}$ for all $f \in \mathcal{N}_k(\mathcal{X})$.
Thus we have a direct sum $\mathcal{N}_k(\mathcal{X}) = \mathcal{P}_A \oplus (I - \Pi_{\mathcal{P}_A}) \mathcal{H}_k(\mathcal{X})$.

For $f \in \mathcal{N}_k(\mathcal{X})$, let $h_f$ denote the element of $h \in \mathcal{H}_k(\mathcal{X})$ which minimises $\|h\|_{\mathcal{H}_k(\mathcal{X})}$ subject to $f = p + h$ for some $p \in \mathcal{P}_A$.
Then in particular $(I - \Pi_{\mathcal{P}_A}) f = (I - \Pi_{\mathcal{P}_A}) h_f$ and $|f|_{\mathcal{N}_k(\mathcal{X})} = \|h_f\|_{\mathcal{H}_k(\mathcal{X})}$.
Using the direct sum representation of $\mathcal{N}_k(\mathcal{X})$, and then using the reproducing property of $\mathcal{H}_k(\mathcal{X})$, we can write
\begin{align}
    f(\mathbf{x}) & = (\Pi_{\mathcal{P}_A} f)(\mathbf{x}) + ((I - \Pi_{\mathcal{P}_A})(f))(\mathbf{x}) \nonumber \\
    & = (\Pi_{\mathcal{P}_A} f)(\mathbf{x}) + ((I - \Pi_{\mathcal{P}_A})(h_f))(\mathbf{x}) \nonumber \\
    & = (\Pi_{\mathcal{P}_A} f)(\mathbf{x}) + (I - \Pi_{\mathcal{P}_A})_{\mathbf{x}} \langle h_f , k(\cdot,\mathbf{x}) \rangle_{\mathcal{H}_k(\mathcal{X})} \nonumber \\
    & = (\Pi_{\mathcal{P}_A} f)(\mathbf{x}) +  \langle h_f ,  (I - \Pi_{\mathcal{P}_A})_{\mathbf{x}} k(\cdot,\mathbf{x}) \rangle_{\mathcal{H}_k(\mathcal{X})}  \label{eq: native rep prop}
\end{align}
where $(I - \Pi_{\mathcal{P}_A})_{\mathbf{x}} k(\cdot , \mathbf{x})$ denotes the action of $I - \Pi_{\mathcal{P}_A}$ on the $\mathbf{x}$-argument.
Equation \eqref{eq: native rep prop} can be viewed as a generalisation of the usual reproducing property for functions $f$ in the native space $\mathcal{N}_k(\mathcal{X})$, which accounts for the presence of a polynomial basis $\mathcal{P}_A$ in addition to the usual reproducing kernel Hilbert space $\mathcal{H}_k(\mathcal{X})$.
These mathematical properties will be used in the proof of \Cref{thm: convergence SPRE}.

\subsection{Proof of \Cref{thm: SEM}}
\label{app: SEM}

The following result establishes convergence acceleration for \ac{MRE}.

\begin{proof}[Proof of \Cref{thm: SEM}]
First note that $f_n^h$ exists and is unique by the assumption that $X_n$ is $\mathcal{P}_A$-unisolvent, since this in turn implies that $X_n^h$ is $\mathcal{P}_A$-unisolvent.
Then, from \Cref{thm: leb},
\begin{align*}
| f(\mathbf{0}) - f_n^h(\mathbf{0}) | \leq (1 + \lambda_{A}(\mathbf{0}; X_n^h)) \inf_{p \in \mathcal{P}_A} \| f - p \|_{\infty , \mathcal{X}_h} ,
\end{align*}
where $\lambda_A(\cdot ; X_n^h)$ denotes the (multivariate) Lebesgue function defined in \eqref{eq: mv leb fn}.
Since $\lambda_{A}(\mathbf{0}; h X_n) = \lambda_{A} (\mathbf{0}; X_n)$ for all $h > 0$, we have that
\begin{align*}
| f(\mathbf{0}) - f_n^h(\mathbf{0}) | \leq (1 + \lambda_{A}(\mathbf{0}; X_n)) \inf_{p \in \mathcal{P}_A} \| f - p \|_{\infty , \mathcal{X}_h} .
\end{align*}
The first term is constant in $h$.
For the second term, recalling the definition of the residual $R(\mathbf{x})$ in \eqref{eq: error},
\begin{align*}
\inf_{p \in \mathcal{P}_A} \| f - p \|_{\infty , \mathcal{X}_h}  & \leq \sup_{\mathbf{x} \in \mathcal{X}_h} \left| f(\mathbf{x}) - \sum_{\bm{\alpha} \in A} \beta_{\bm{\alpha}} \mathbf{x}^{\bm{\alpha}} \right| = \sup_{\mathbf{x} \in \mathcal{X}_h} | R(\mathbf{x}) | ,
\end{align*}
which completes the argument.
\end{proof}

\subsection{Proof of \Cref{thm: convergence SPRE}}
\label{subsec: SPRE converge proof}

Our approach to proving \Cref{thm: convergence SPRE} leverages the characterisation of the posterior mean in \ac{SPRE} as a minimal semi-norm interpolant:

\begin{lemma}[Characterisation as a minimal semi-norm interpolant]
\label{lem: min norm}
    Let $X_n \subset \mathcal{X}$ be $\mathcal{P}_A$-unisolvent and let $k : \mathcal{X} \times \mathcal{X} \rightarrow \mathbb{R}$ be a symmetric and positive definite kernel.
    Then for any simulator $f : \mathcal{X} \rightarrow \mathbb{R}$ there is a unique solution to the variational problem
    \begin{align}
        \argmin \left\{ |s|_{\mathcal{N}_k(\mathcal{X})} : s(\mathbf{x}) = f(\mathbf{x}) \; \forall \; \mathbf{x} \in X_n, s \in \mathcal{N}_k(\mathcal{X}) \right\} \label{eq: seminorm variational}
    \end{align}
    which we denote $s_{X_n}[f]$.
    Further, $s_{X_n}[f]$ coincides with the posterior mean of \ac{SPRE}.
\end{lemma}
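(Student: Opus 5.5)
The plan is to reduce \eqref{eq: seminorm variational} to a finite-dimensional problem by a representer-type argument, and then to identify the minimiser with \eqref{eq: SPRE mean}. Any $s \in \mathcal{N}_k(\mathcal{X})$ can be written $s = p + g$ with $p \in \mathcal{P}_A$ and $g \in \mathcal{H}_k(\mathcal{X})$. Taking the infimum defining $|s|_{\mathcal{N}_k(\mathcal{X})}$ over all such decompositions, the problem becomes
\[
\min \Big\{ \|g\|_{\mathcal{H}_k(\mathcal{X})} : p \in \mathcal{P}_A , \; g \in \mathcal{H}_k(\mathcal{X}), \; p(\mathbf{x}_i) + g(\mathbf{x}_i) = f(\mathbf{x}_i), \; i=1,\dots,n \Big\}.
\]
Fix $p$, parametrised by its coefficients $\bm{\beta} \in \mathbb{R}^m$. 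By the standard orthogonal decomposition of $\mathcal{H}_k(\mathcal{X})$ into $\mathrm{span}\{k(\cdot,\mathbf{x}_i)\}$ and its orthogonal complement (functions vanishing on $X_n$), the minimal-norm $g$ matching the data $\mathbf{f} - \mathbf{V}_A\bm{\beta}$ is $g = \mathbf{k}(\cdot)^\top \mathbf{K}^{-1}(\mathbf{f} - \mathbf{V}_A\bm{\beta})$. Positive definiteness of $k$ makes $\mathbf{K}$ invertible, and the squared norm of this $g$ is $(\mathbf{f}-\mathbf{V}_A\bm{\beta})^\top \mathbf{K}^{-1}(\mathbf{f}-\mathbf{V}_A\bm{\beta})$.

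Next I would minimise over $\bm{\beta}$. This is a generalised least squares problem. Since $X_n$ is $\mathcal{P}_A$-unisolvent, $\mathbf{V}_A$ has full column rank, so $\mathbf{V}_A^\top\mathbf{K}^{-1}\mathbf{V}_A$ is positive definite. The objective is then strictly convex in $\bm{\beta}$, with unique minimiser $\hat{\bm{\beta}}_A$ as defined below \eqref{eq: SPRE var}. Substituting back gives
\[
s = \mathbf{v}_A(\cdot)^\top \hat{\bm{\beta}}_A + \mathbf{k}(\cdot)^\top\mathbf{K}^{-1}(\mathbf{f} - \mathbf{V}_A \hat{\bm{\beta}}_A) = \mathbf{k}(\cdot)^\top \mathbf{K}^{-1}\mathbf{f} + \mathbf{r}_A(\cdot)^\top \hat{\bm{\beta}}_A ,
\]
which is exactly \eqref{eq: SPRE mean}.

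The step I expect to be the main obstacle is uniqueness of the interpolant $s$ itself, as opposed to uniqueness of the pair $(p,g)$. If $\mathcal{P}_A \cap \mathcal{H}_k(\mathcal{X}) \neq \{0\}$, different pairs can represent the same $s$. Moreover, the seminorm's infimum over decompositions need not coincide with the value attained at the decomposition I constructed. I would handle this by arguing directly in terms of $s$, in two steps.

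\emph{Lower bound.} Any feasible $s = p + g$ satisfies $|s|_{\mathcal{N}_k(\mathcal{X})} \ge \inf_{(p,g)} \|g\|_{\mathcal{H}_k(\mathcal{X})}$, where the infimum runs over decompositions satisfying the interpolation constraints. This infimum equals the generalised least squares optimum computed above.

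\emph{Uniqueness.} Suppose two minimisers $s_1, s_2$ exist. Then $(s_1+s_2)/2$ is also feasible. Strict convexity of the squared Hilbert norm forces the optimal $g$-parts to coincide modulo $\mathcal{P}_A$. Together with the uniqueness of $\hat{\bm{\beta}}_A$ and unisolvency, which makes $s - \mathbf{k}(\cdot)^\top\mathbf{K}^{-1}\mathbf{f}$ determined on $X_n$ within $\mathcal{P}_A$, this pins down $s$. Where the overlap $\mathcal{P}_A \cap \mathcal{H}_k(\mathcal{X})$ complicates this, I would pass to the quotient seminorm via $\Pi_{\mathcal{P}_A}$ as set up in \Cref{app: native}. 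There the problem becomes minimising a genuine Hilbert norm over an affine set, which has a unique solution.
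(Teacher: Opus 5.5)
Your proposal is correct and follows the paper's proof. You write $s=p+g$, apply the representer theorem for fixed $p$, and solve the resulting finite-dimensional quadratic problem; the paper writes down the Lagrange saddle-point system in $(\mathbf{c},\mathbf{b})$, whereas you eliminate $\mathbf{c}=\mathbf{K}^{-1}(\mathbf{f}-\mathbf{V}_A\mathbf{b})$ and solve the equivalent generalised least squares problem. Your extra care over uniqueness of $s$, as opposed to uniqueness of the pair $(p,g)$, addresses a point the paper passes over, and your quotient argument closes it: two minimisers agree modulo $\mathcal{P}_A$, and an element of $\mathcal{P}_A$ vanishing on the unisolvent $X_n$ is zero. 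By contrast, your intermediate remark that $s-\mathbf{k}(\cdot)^\top\mathbf{K}^{-1}\mathbf{f}$ is ``determined within $\mathcal{P}_A$'' is imprecise, since that function vanishes on $X_n$ and is generally not a polynomial, and it can simply be dropped.
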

\begin{proof}
    From the definition of the native space semi-norm, a solution $s \in \mathcal{N}_k(\mathcal{X})$ to the variational problem has the form $s = p + g$ where $(p,g) \in \mathcal{P}_A \times  \mathcal{H}_k(\mathcal{X})$ is a solution to
    \begin{align}
        \argmin\{ \|g\|_{\mathcal{H}_k(\mathcal{X})} : g(\mathbf{x}) = f(\mathbf{x}) - p(\mathbf{x}) \; \forall \; \mathbf{x} \in X_n, p \in \mathcal{P}_A, g \in \mathcal{H}_k(\mathcal{X}) \} . \label{eq: variational 2}
    \end{align}
    Let $p(\mathbf{x}) = \sum_{i=1}^m b_i \mathbf{x}^{\bm{\alpha}_i}$ for some coefficients $\mathbf{b} = (b_1 , \dots , b_m)^\top \in \mathbb{R}^m$.
    From the representer theorem, for fixed $p$, there exists a minimal $\|g\|_{\mathcal{H}_k(\mathcal{X})}$ norm interpolant of $f - p$ on $X_n$ that takes the form $g(\mathbf{x}) = \sum_{i=1}^n c_i k(\mathbf{x} , \mathbf{x}_i)$ for some coefficients $\mathbf{c} = (c_1 , \dots , c_n)^\top \in \mathbb{R}^n$, and we have $\|g\|_{\mathcal{H}_k(\mathcal{X})}^2 = \mathbf{c}^\top \mathbf{K} \mathbf{c}$.
    Thus \eqref{eq: variational 2} becomes
    \begin{align*}
        \argmin \{ \mathbf{c}^\top \mathbf{K} \mathbf{c} : \mathbf{K} \mathbf{c} = \mathbf{f} - \mathbf{V}_A \mathbf{b} , \mathbf{b} \in \mathbb{R}^m , \mathbf{c} \in \mathbb{R}^n \} ,
    \end{align*}
    whose solutions can be deduced (by the method of Lagrange multipliers) to be solutions of the linear system
    \begin{align*}
        \left[ \begin{array}{cc} \mathbf{K} & \mathbf{V}_A \\ \mathbf{V}_A^\top & \mathbf{0} \end{array} \right] \left[ \begin{array}{c} \mathbf{c} \\ \mathbf{b} \end{array} \right] = \left[ \begin{array}{c} \mathbf{f} \\ \mathbf{0} \end{array} \right] .
    \end{align*}
    This block matrix is invertible since $X_n$ is $\mathcal{P}_A$-unisolvent, so the solution pair $(\mathbf{b},\mathbf{c})$ is unique in $\mathbb{R}^m \times \mathbb{R}^n$ and we can uniquely identify the associated element $s$ as the minimal semi-norm interpolant $s_{X_n}[f]$.
    Calculating the inverse of this block matrix, we obtain 
    \begin{align*}
        \left[ \begin{array}{c} \mathbf{c} \\ \mathbf{b} \end{array} \right] = \left[ \begin{array}{c} \mathbf{K}^{-1} \mathbf{f} - \mathbf{k}(\mathbf{x})^\top \mathbf{K}^{-1} \mathbf{V}_A (\mathbf{V}_A^\top \mathbf{K}^{-1} \mathbf{V}_A)^{-1} \mathbf{V}_A^\top \mathbf{K}^{-1} \mathbf{f} \\ (\mathbf{V}_A^\top \mathbf{K}^{-1} \mathbf{V}_A)^{-1} \mathbf{V}_A^\top \mathbf{K}^{-1} \mathbf{f} \end{array}  \right] = \left[ \begin{array}{c} \mathbf{K}^{-1} \mathbf{f} - \mathbf{K}^{-1} \mathbf{V}_A \hat{\bm{\beta}}_A \\ \hat{\bm{\beta}}_A \end{array} \right]
    \end{align*}
    so that
    \begin{align}
        s_{X_n}[f](\mathbf{x}) = \mathbf{v}_A(\mathbf{x})^\top \mathbf{b} + \mathbf{k}(\mathbf{x})^\top \mathbf{c}
        = \mathbf{k}(\mathbf{x})^\top \mathbf{K}^{-1} \mathbf{f} + \mathbf{r}_A(\mathbf{x})^\top \hat{\bm{\beta}}_A . \label{eq: derive interp}
    \end{align}
    which is the posterior mean of \ac{SPRE} as given in \eqref{eq: SPRE mean}.    
\end{proof}

\Cref{lem: min norm} establishes the posterior mean of \ac{SPRE} in \eqref{eq: SPRE mean} as a minimal semi-norm interpolant.
Unpacking the terms $\mathbf{r}_A(\mathbf{x})$ and $\hat{\bm{\beta}}_A$ in \eqref{eq: derive interp}, we see that the solution to \eqref{eq: seminorm variational} has the form
\begin{align}
    s_{X_n}[f](\mathbf{x}) = \sum_{i=1}^n u_i^*(\mathbf{x}) f(\mathbf{x}_i) \label{eq: sf as linear}
\end{align}
with coefficient vector 
\begin{align}
     \mathbf{u}^*(\mathbf{x}) = \left[ \begin{array}{c} u_1^*(\mathbf{x}) \\ \vdots \\ u_n^*(\mathbf{x}) \end{array} \right] =  \mathbf{K}^{-1} \mathbf{k}(\mathbf{x}) + \mathbf{K}^{-1} \mathbf{V}_A (\mathbf{V}_A^\top \mathbf{K}^{-1} \mathbf{V}_A)^{-1} [ \mathbf{v}_A(\mathbf{x}) - \mathbf{V}_A^\top \mathbf{K}^{-1} \mathbf{k}(\mathbf{x}) ] .   \label{eq: def u star}
\end{align}
In particular $\mathbf{V}_A^\top \mathbf{u}^*(\mathbf{x}) = \mathbf{v}_A(\mathbf{x})$, which is to say that $s_{X_n}[p](\mathbf{x}) = p(\mathbf{x})$ for each $p(\mathbf{x}) = \mathbf{x}^{\bm{\alpha}}$ with $\bm{\alpha} \in A$.
Since $s_{X_n}$ is linear it follows that $s_{X_n}[p] = p$ for all $p \in \mathcal{P}_A$, which could be described as \emph{sparse polynomial exactness} of the minimal semi-norm interpolant.
This exactness is a key ingredient in the following Lemma, which itself is a slight generalisation of \citet[][Theorem 11.4]{wendland2004scattered} to remove the requirement in that work for $\mathcal{X}$ to be an open set:

\begin{lemma}
\label{lem: gen bound}
    Assume that $k$ is positive definite and $X_n$ is $\mathcal{P}_A$-unisolvent.
    Let $\mathbf{u}^*(\mathbf{x})$ be as defined in \eqref{eq: def u star} and let $\tilde{k}(\cdot,\mathbf{x}) = (I - \Pi_{\mathcal{P}_A})_{\mathbf{x}} k(\cdot,\mathbf{x})$.
    Then
    \begin{align*}
        | f(\mathbf{x}) - s_{X_n}[f](\mathbf{x}) | \leq |f|_{\mathcal{N}_k(\mathcal{X})} \left\| \tilde{k}(\cdot,\mathbf{x}) - \sum_{i=1}^n u_i^*(\mathbf{x}) \tilde{k}(\cdot , \mathbf{x}_i)  \right\|_{\mathcal{H}_k(\mathcal{X})} .
    \end{align*}
\end{lemma}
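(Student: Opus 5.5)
The plan is to write the error at $\mathbf{x}$ as a single application of the generalised reproducing property \eqref{eq: native rep prop}, and then use Cauchy--Schwarz in $\mathcal{H}_k(\mathcal{X})$.

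First, by \eqref{eq: sf as linear}, $s_{X_n}[f](\mathbf{x}) = \sum_{i=1}^n u_i^*(\mathbf{x}) f(\mathbf{x}_i)$. Apply \eqref{eq: native rep prop} both at $\mathbf{x}$ and at each $\mathbf{x}_i$. The polynomial contributions cancel:
\begin{align*}
(\Pi_{\mathcal{P}_A} f)(\mathbf{x}) - \sum_{i=1}^n u_i^*(\mathbf{x}) (\Pi_{\mathcal{P}_A} f)(\mathbf{x}_i) = 0 .
\end{align*}
This holds because $\mathbf{V}_A^\top \mathbf{u}^*(\mathbf{x}) = \mathbf{v}_A(\mathbf{x})$, i.e. by the sparse polynomial exactness of the weights recorded just after \eqref{eq: def u star}.

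What remains is
\begin{align*}
f(\mathbf{x}) - s_{X_n}[f](\mathbf{x}) = \Big\langle h_f , \tilde{k}(\cdot,\mathbf{x}) - \sum_{i=1}^n u_i^*(\mathbf{x}) \tilde{k}(\cdot,\mathbf{x}_i) \Big\rangle_{\mathcal{H}_k(\mathcal{X})} ,
\end{align*}
using linearity of the inner product in its second argument. Cauchy--Schwarz then bounds this by $\|h_f\|_{\mathcal{H}_k(\mathcal{X})}$ times the stated norm, and $\|h_f\|_{\mathcal{H}_k(\mathcal{X})} = |f|_{\mathcal{N}_k(\mathcal{X})}$ by the definition of $h_f$. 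If $f \notin \mathcal{N}_k(\mathcal{X})$ the semi-norm is $+\infty$ by convention and there is nothing to prove.

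The main obstacle is justifying the native-space bookkeeping rather than the estimate itself. Two points need care.

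\emph{Existence of $h_f$.} A minimiser exists because $\{h \in \mathcal{H}_k : f - h \in \mathcal{P}_A\}$ is a translate of the finite-dimensional subspace $\mathcal{P}_A \cap \mathcal{H}_k$. It is therefore closed and convex, and the minimal-norm element exists.

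\emph{Legitimacy of \eqref{eq: native rep prop}.} One must check that $(I-\Pi_{\mathcal{P}_A})_{\mathbf{x}} k(\cdot,\mathbf{x})$ lies in $\mathcal{H}_k(\mathcal{X})$ and that $I-\Pi_{\mathcal{P}_A}$ may be passed through the inner product. The cleanest way is to realise $\Pi_{\mathcal{P}_A}$ concretely as interpolation on a fixed $\mathcal{P}_A$-unisolvent subset $\Xi \subset X_n$ of size $m$, with Lagrange basis $\ell_j$. Then $(I - \Pi_{\mathcal{P}_A})_{\mathbf{x}} k(\cdot,\mathbf{x}) = k(\cdot,\mathbf{x}) - \sum_j \ell_j(\mathbf{x}) k(\cdot,\boldsymbol{\xi}_j)$, a finite combination of kernel sections and hence in $\mathcal{H}_k(\mathcal{X})$. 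The interchange with the inner product is then simply finite linearity.

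With this concrete choice, the polynomial cancellation above holds for this $\Pi_{\mathcal{P}_A}$. This is the only place where openness of $\mathcal{X}$ was needed in Wendland's Theorem 11.4, and no derivatives are required here. This removes that hypothesis.
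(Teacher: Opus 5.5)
Your proposal is correct and follows the paper's proof. You expand $f(\mathbf{x})$ and $s_{X_n}[f](\mathbf{x}) = \sum_{i=1}^n u_i^*(\mathbf{x}) f(\mathbf{x}_i)$ via \eqref{eq: native rep prop}, cancel the polynomial parts using $\mathbf{V}_A^\top \mathbf{u}^*(\mathbf{x}) = \mathbf{v}_A(\mathbf{x})$, and finish with Cauchy--Schwarz and $\|h_f\|_{\mathcal{H}_k(\mathcal{X})} = |f|_{\mathcal{N}_k(\mathcal{X})}$, exactly as the paper does; your extra bookkeeping (existence of $h_f$ as the minimal-norm element of a closed affine set, and realising $\Pi_{\mathcal{P}_A}$ as interpolation on a unisolvent subset so that $\tilde{k}(\cdot,\mathbf{x}) \in \mathcal{H}_k(\mathcal{X})$) only fills in details the paper leaves implicit, and matches the concrete form of $\Pi_{\mathcal{P}_A}$ the paper itself adopts in the subsequent power-function lemma.
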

\begin{proof}

    From the representation of $s_{X_n}[f]$ in \eqref{eq: sf as linear} and the native space version of the reproducing property in \eqref{eq: native rep prop}, we have
    \begin{align*}
        f(\mathbf{x}) & = \Pi_{\mathcal{P}_A}[f](\mathbf{x}) + \langle h_f , \tilde{k}(\cdot, \mathbf{x}) \rangle_{\mathcal{H}_k(\mathcal{X})} \\
        s_{X_n}[f](\mathbf{x}) & = \sum_{i=1}^n u_i^*(\mathbf{x}) \{ \Pi_{\mathcal{P}_A}[f](\mathbf{x}_i) + \langle h_f , \tilde{k}(\cdot, \mathbf{x}_i) \rangle_{\mathcal{H}_k(\mathcal{X})} \} .
    \end{align*}
    From sparse polynomial exactness of the minimal semi-norm interpolant $s_{X_n}[f]$, we have that
    \begin{align*}
        \sum_{i=1}^n u_i^*(\mathbf{x}) \Pi_{\mathcal{P}_A}[f](\mathbf{x}) = \Pi_{\mathcal{P}_A}[f](\mathbf{x})
    \end{align*}
    and thus
    \begin{align*}
        f(\mathbf{x}) - s_{X_n}[f](\mathbf{x}) & = 
        \underbrace{ \Pi_{\mathcal{P}_A}[f](\mathbf{x}) - \Pi_{\mathcal{P}_A}[f](\mathbf{x}) }_{ = 0} + \left\langle h_f , \tilde{k}(\cdot,\mathbf{x}) - \sum_{i=1}^n u_i^*(\mathbf{x}) \tilde{k}(\cdot, \mathbf{x}_i) \right\rangle_{\mathcal{H}_k(\mathcal{X})} .
    \end{align*}
    From Cauchy--Schwarz we thus obtain
    \begin{align*}
        | f(\mathbf{x}) - s_{X_n}[f](\mathbf{x}) | & \leq \| h_f \|_{\mathcal{H}_k(\mathcal{X})} \left\| \tilde{k}(\cdot,\mathbf{x}) - \sum_{i=1}^n u_i^*(\mathbf{x}) \tilde{k}(\cdot, \mathbf{x}_i) \right\|_{\mathcal{H}_k(\mathcal{X})} .
    \end{align*}
    Recognising $\|h_f\|_{\mathcal{H}_k(\mathcal{X})}$ as $|f|_{\mathcal{N}_k(\mathcal{X})}$ completes the argument.
\end{proof}

From the last result, we see that it will be useful to more explicitly calculate the so-called \emph{power function}
\begin{align}
    P_{k,X_n}(\mathbf{x}) & := \left\| \tilde{k}(\cdot,\mathbf{x}) - \sum_{i=1}^n u_i^*(\mathbf{x}) \tilde{k}(\cdot , \mathbf{x}_i)  \right\|_{\mathcal{H}_k(\mathcal{X})} , \label{eq: def Q} 
\end{align}
and we can facilitate this using the native space reproducing property in \eqref{eq: native rep prop}, as the following lemma explains:

\begin{lemma}
Again with $\tilde{k}(\cdot,\mathbf{x}) = (I - \Pi_{\mathcal{P}_A})_{\mathbf{x}} k(\cdot,\mathbf{x})$, the power function in \eqref{eq: def Q} satisfies
\begin{align*}
    P_{k,X_n}(\mathbf{x})^2 
    & = k(\mathbf{x},\mathbf{x}) - 2 \sum_{i=1}^n u_i^*(\mathbf{x}) k(\mathbf{x} , \mathbf{x}_i) + \sum_{i=1}^n \sum_{j=1}^n u_i^*(\mathbf{x}) u_j^*(\mathbf{x}) k(\mathbf{x}_i , \mathbf{x}_j) .
\end{align*}
\end{lemma}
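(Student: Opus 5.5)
The plan is to expand the squared $\mathcal{H}_k(\mathcal{X})$-norm in \eqref{eq: def Q} and evaluate each inner product using the reproducing property, with the polynomial parts removed by sparse polynomial exactness of $\mathbf{u}^*(\mathbf{x})$. Write $\tilde{k}(\cdot,\mathbf{y}) = k(\cdot,\mathbf{y}) - q_{\mathbf{y}}$, where $q_{\mathbf{y}} := (\Pi_{\mathcal{P}_A})_{\mathbf{y}} k(\cdot,\mathbf{y})$. For each fixed first argument, $q_{\mathbf{y}}$ is a polynomial in $\mathbf{y}$ from $\mathcal{P}_A$.

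The key step is the observation that the projection terms cancel in the combination $\tilde{k}(\cdot,\mathbf{x}) - \sum_i u_i^*(\mathbf{x}) \tilde{k}(\cdot,\mathbf{x}_i)$. Fix the first argument $\mathbf{z}$. The map $\mathbf{y} \mapsto (\Pi_{\mathcal{P}_A})_{\mathbf{y}} k(\mathbf{z},\mathbf{y})$ lies in $\mathcal{P}_A$. Sparse polynomial exactness, namely $\mathbf{V}_A^\top \mathbf{u}^*(\mathbf{x}) = \mathbf{v}_A(\mathbf{x})$, therefore gives
\begin{align*}
\sum_{i=1}^n u_i^*(\mathbf{x}) \, (\Pi_{\mathcal{P}_A})_{\mathbf{y}} k(\mathbf{z},\mathbf{y})\big|_{\mathbf{y}=\mathbf{x}_i} = (\Pi_{\mathcal{P}_A})_{\mathbf{y}} k(\mathbf{z},\mathbf{y})\big|_{\mathbf{y}=\mathbf{x}} .
\end{align*}
It follows that $\tilde{k}(\cdot,\mathbf{x}) - \sum_i u_i^*(\mathbf{x})\tilde{k}(\cdot,\mathbf{x}_i) = k(\cdot,\mathbf{x}) - \sum_i u_i^*(\mathbf{x}) k(\cdot,\mathbf{x}_i)$ as functions of the first argument. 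This is the same identity used in the proof of \Cref{lem: gen bound}, applied pointwise in the free variable.

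Once the tildes are gone, the function inside the norm is a finite linear combination of kernel sections in $\mathcal{H}_k(\mathcal{X})$. Expanding $\|k(\cdot,\mathbf{x}) - \sum_i u_i^* k(\cdot,\mathbf{x}_i)\|^2_{\mathcal{H}_k(\mathcal{X})}$ and applying $\langle k(\cdot,\mathbf{a}), k(\cdot,\mathbf{b})\rangle = k(\mathbf{a},\mathbf{b})$ together with symmetry of $k$ yields the three displayed terms.

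The main obstacle is justifying the cancellation step. This requires the $\mathbf{x}$-argument projection to act linearly and to commute with evaluation at the points $\mathbf{x},\mathbf{x}_1,\dots,\mathbf{x}_n$, which is how $(I-\Pi_{\mathcal{P}_A})_{\mathbf{x}}$ was used in \eqref{eq: native rep prop}. The other requirement is that $\tilde{k}(\cdot,\mathbf{y})$ be treated as an element of $\mathcal{H}_k(\mathcal{X})$, so that the norm in \eqref{eq: def Q} makes sense. I would state explicitly that the cancellation occurs at the level of functions before any norm is taken, so that only genuine kernel sections remain inside the norm, and that this needs only linearity of the projection in its argument.
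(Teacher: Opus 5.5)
Your proof is correct, and it reorganises the paper's argument more economically. The paper fixes a concrete projection, namely Lagrange interpolation at an auxiliary $\mathcal{P}_A$-unisolvent set $\{\bm{\xi}_r\}_{r=1}^m$. It computes $\langle \tilde{k}(\cdot,\mathbf{x}), \tilde{k}(\cdot,\mathbf{y})\rangle_{\mathcal{H}_k(\mathcal{X})}$ explicitly as in \eqref{eq: in prod tilde k}, and applies exactness separately to the single and double sums that arise from expanding the squared norm. It then checks by bookkeeping that every $\bm{\xi}$-dependent term cancels.

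You instead apply exactness once, at the level of functions, to identify $\tilde{k}(\cdot,\mathbf{x}) - \sum_i u_i^*(\mathbf{x})\tilde{k}(\cdot,\mathbf{x}_i)$ with $k(\cdot,\mathbf{x}) - \sum_i u_i^*(\mathbf{x}) k(\cdot,\mathbf{x}_i)$ before any norm is taken. After that, the expansion is a one-line application of the reproducing property.

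Your route is shorter and needs no auxiliary unisolvent set. It makes it evident that $P_{k,X_n}(\mathbf{x})$ does not depend on the choice of projection onto $\mathcal{P}_A$. It also shows that the norm in \eqref{eq: def Q} is the norm of a finite combination of kernel sections, so it is automatically well defined.

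The step you flag as the main obstacle is in fact immediate. The cancellation only requires that, for each fixed $\mathbf{z}$, the function $\mathbf{y} \mapsto q_{\mathbf{y}}(\mathbf{z})$ belongs to $\mathcal{P}_A$, and that is exactly what it means to project $k(\mathbf{z},\cdot)$ onto $\mathcal{P}_A$. The paper's longer computation yields an explicit inner-product formula for the reduced kernel sections as a by-product, but that formula is not used anywhere else in the paper.
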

\begin{proof}
    Let $\{\bm{\xi}_i\}_{i=1}^m \subset \mathcal{X}$ be $\mathcal{P}_A$-unisolvent and let $\{p_i\}_{i=1}^m \subset \mathcal{P}_A$ be the corresponding Lagrange basis, so that $p_i(\bm{\xi}_j) = \delta_{i,j}$. 
    Then we can write the projection $\Pi_{\mathcal{P}_A} : \mathcal{N}_k(\mathcal{X}) \rightarrow \mathcal{P}_A$ explicitly as
    $$
    (\Pi_{\mathcal{P}_A} f)(\mathbf{x}) = \sum_{i=1}^m f(\bm{\xi}_i) p_i(\mathbf{x})
    $$
    for all $f \in \mathcal{N}_k(\mathcal{X})$ and all $\mathbf{x} \in \mathcal{X}$.
    Let us now use the usual reproducing property of $\mathcal{H}_k(\mathcal{X})$ to compute
    \begin{align}
        \langle \tilde{k}(\cdot , \mathbf{x}) , \tilde{k}(\cdot , \mathbf{y}) \rangle_{\mathcal{H}_k(\mathcal{X})} 
        & = \left\langle k(\cdot , \mathbf{x}) - \sum_{r=1}^m k(\cdot , \bm{\xi}_r) p_r(\mathbf{x}) , k(\cdot , \mathbf{y}) - \sum_{s=1}^m k(\cdot , \bm{\xi}_s) p_s(\mathbf{y})\right\rangle_{\mathcal{H}_k(\mathcal{X})} \nonumber \\
        & = k(\mathbf{x} , \mathbf{y}) - \sum_{r=1}^m k(\mathbf{x} , \bm{\xi}_r) p_r(\mathbf{y}) - \sum_{r=1}^m k(\mathbf{y} , \bm{\xi}_r) p_r(\mathbf{x}) \nonumber \\
        & \qquad + \sum_{r=1}^m \sum_{s=1}^m p_r(\mathbf{x}) k(\bm{\xi}_r , \bm{\xi}_s) p_s(\mathbf{y}) . \label{eq: in prod tilde k}
    \end{align}
    Now we use sparse polynomial exactness of the interpolant and the expression in \eqref{eq: in prod tilde k} to establish that
    \begin{align*}
        \sum_{i=1}^n u_i^*(\mathbf{x}) \langle \tilde{k}(\cdot , \mathbf{x}) , \tilde{k}(\cdot , \mathbf{x}_i) \rangle_{\mathcal{H}_k(\mathcal{X})}
        & = \sum_{i=1}^n u_i^*(\mathbf{x}) \left\{  k(\mathbf{x} , \mathbf{x}_i) - \sum_{r=1}^m k(\mathbf{x} , \bm{\xi}_r) p_r(\mathbf{x}_i) - \sum_{r=1}^m k(\mathbf{x}_i , \bm{\xi}_r) p_r(\mathbf{x}) \right. \\
        & \qquad \qquad \qquad \left. + \sum_{r=1}^m \sum_{s=1}^m p_r(\mathbf{x}) k(\bm{\xi}_r , \bm{\xi}_s) p_s(\mathbf{x}_i)  \right\} \\
        & = \sum_{i=1}^n u_i^*(\mathbf{x}) k(\mathbf{x} , \mathbf{x}_i) - \sum_{r=1}^m k(\mathbf{x},\bm{\xi}_r) p_r(\mathbf{x}) \\
        & \; - \sum_{i=1}^n u_i^*(\mathbf{x}) \sum_{r=1}^m k(\mathbf{x}_i , \bm{\xi}_r) p_r(\mathbf{x}) + \sum_{r=1}^m \sum_{s=1}^m p_r(\mathbf{x}) k(\bm{\xi}_r , \bm{\xi}_s) p_s(\mathbf{x})
    \end{align*}
and similarly
\begin{align*}
    \sum_{i=1}^n \sum_{j=1}^n u_i^*(\mathbf{x}) u_j^*(\mathbf{x}) \langle \tilde{k}(\cdot , \mathbf{x}_i) , \tilde{k}(\cdot , \mathbf{x}_j) \rangle_{\mathcal{H}_k(\mathcal{X})}
    & = \sum_{i=1}^n \sum_{j=1}^n u_i^*(\mathbf{x}) u_j^*(\mathbf{x}) k(\mathbf{x}_i , \mathbf{x}_j) \\
    & \qquad - 2 \sum_{i=1}^n u_i^*(\mathbf{x}) \sum_{r=1}^m k(\mathbf{x}_i , \bm{\xi}_r) p_r(\mathbf{x}) \\
    & \qquad + \sum_{r=1}^m \sum_{s=1}^m p_r(\mathbf{x}) k(\bm{\xi}_r,\bm{\xi}_s) p_s(\mathbf{x})
\end{align*}
Squaring and expanding \eqref{eq: def Q} we obtain
\begin{align*}
    P_{k,X_n}(\mathbf{x})^2 & = \langle \tilde{k}(\cdot , \mathbf{x}) , \tilde{k}(\cdot , \mathbf{x}) \rangle_{\mathcal{H}_k(\mathcal{X})} - 2 \sum_{i=1}^n u_i^*(\mathbf{x}) \langle \tilde{k}(\cdot , \mathbf{x}) , \tilde{k}(\cdot , \mathbf{x}_i) \rangle_{\mathcal{H}_k(\mathcal{X})} \\
    & \qquad + \sum_{i=1}^n \sum_{j=1}^n u_i^*(\mathbf{x}) u_j^*(\mathbf{x}) \langle \tilde{k}(\cdot , \mathbf{x}_i) , \tilde{k}(\cdot , \mathbf{x}_j) \rangle_{\mathcal{H}_k(\mathcal{X})}
\end{align*}
and upon substituting in the above expressions we obtain the stated result.    
\end{proof}

The following Lemma is a slight generalisation of \citet[][Theorem 11.9]{wendland2004scattered} which allows for $\mathcal{P}_A$ to be a sparse polynomial basis:

\begin{lemma} \label{lem: power fn bound}
    Suppose that $X_n = \{\mathbf{x}_i\}_{i=1}^n \subset \mathcal{X}$ is $\mathcal{P}_A$-unisolvent.
    Assume that $k : \mathcal{X} \times \mathcal{X} \rightarrow \mathbb{R}$ is a symmetric, positive definite and bounded kernel.
    Given $p \in \mathcal{P}_A$, let $\bar{p} : \mathcal{X} \times \mathcal{X} \rightarrow \mathbb{R}$ be defined as $\bar{p}(\mathbf{x} , \mathbf{x}') = p(\mathbf{x} - \mathbf{x}')$.
    Then there exists a $h$-independent constant $\tilde{C}_{A,X_n} > 0$ such that
    $$
    P_{k,X_n^h}(\mathbf{0}) \leq \tilde{C}_{A,X_n} \| k - \bar{p} \|_{\infty , \mathcal{X}_h \times \mathcal{X}_h}^{1/2}
    $$
    for all $\mathbf{x} \in \mathcal{X}$, $p \in \mathcal{P}_A$, and $h \in (0,1]$.
\end{lemma}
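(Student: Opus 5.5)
We follow the strategy of \citet[][Theorem 11.9]{wendland2004scattered}: the power function at $\mathbf{0}$ is the error of an optimal weighted recovery, so it is bounded by the error of \emph{any} weight vector that reproduces $\mathcal{P}_A$ at $\mathbf{0}$. We then choose the weights supplied by \Cref{lem: poly repro}.

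\emph{Step 1 (optimality).} First establish that $\mathbf{u}^*(\mathbf{0})$ minimises the quadratic form
\begin{align*}
Q(\mathbf{u}) := k(\mathbf{0},\mathbf{0}) - 2\sum_{i=1}^n u_i k(\mathbf{0},h\mathbf{x}_i) + \sum_{i,j=1}^n u_i u_j k(h\mathbf{x}_i,h\mathbf{x}_j)
\end{align*}
subject to $\mathbf{V}_A(X_n^h)^\top \mathbf{u} = \mathbf{v}_A(\mathbf{0}) = \mathbf{e}_1$. This follows from the Lagrange conditions $\mathbf{K}\mathbf{u} + \mathbf{V}_A\bm{\lambda} = \mathbf{k}(\mathbf{0})$ and $\mathbf{V}_A^\top\mathbf{u} = \mathbf{e}_1$, whose unique solution (by unisolvency and positive definiteness) is exactly \eqref{eq: def u star}. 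The preceding lemma identifies $Q(\mathbf{u}^*(\mathbf{0})) = P_{k,X_n^h}(\mathbf{0})^2$. Hence $P_{k,X_n^h}(\mathbf{0})^2 \le Q(\tilde{\mathbf{u}}^{(h)})$, where $\tilde{\mathbf{u}}^{(h)}$ are the weights from \Cref{lem: poly repro}. Those weights satisfy the constraint, because $\sum_i \tilde u_i^{(h)} p(h\mathbf{x}_i) = p(\mathbf{0})$ for all $p\in\mathcal{P}_A$.

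\emph{Step 2 (subtracting $\bar p$).} Write $k = \bar p + (k-\bar p)$ inside $Q(\tilde{\mathbf{u}}^{(h)})$. Introduce $u_0 := -1$ and $\mathbf{y}_0 := \mathbf{0}$, so that $Q(\mathbf{u}) = \sum_{i,j=0}^n u_iu_j k(\mathbf{y}_i,\mathbf{y}_j)$ with $\mathbf{y}_i = h\mathbf{x}_i$. The contribution of $\bar p$ is $\sum_{i,j\ge0} u_iu_j\, p(\mathbf{y}_i-\mathbf{y}_j)$. We want this to vanish by reproduction, applying it in one variable with the other fixed: for fixed $\mathbf{y}_j$, the map $\mathbf{z}\mapsto p(\mathbf{z}-\mathbf{y}_j)$ should lie in $\mathcal{P}_A$, which gives $\sum_{i\ge0} u_i p(\mathbf{y}_i - \mathbf{y}_j) = 0$.

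\emph{Main obstacle.} The shift-invariance in Step 2 can fail. When $A$ is sparse, $\mathcal{P}_A$ need not be closed under translation, so $p(\cdot - \mathbf{y}_j)$ may leave $\mathcal{P}_A$. Three fixes are available, in order of preference:
\begin{enumerate}
\item Read the hypothesis as requiring $p$ for which $\bar p(\cdot,\mathbf{y})\in\mathcal{P}_A$ for each $\mathbf{y}$. This holds, for instance, when $A$ is a lower set, the case needed for \Cref{thm: extra smooth}.
\item Handle a general $\bar p$ that is a finite sum $\sum_\ell a_\ell(\mathbf{x})b_\ell(\mathbf{x}')$ with $a_\ell,b_\ell\in\mathcal{P}_A$.
\item Note that the bound is trivial for $p=0$. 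Since the statement asks for it to hold for all $p$, I would adopt the first route and check that it suffices where the lemma is later used.
\end{enumerate}

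\emph{Step 3 (conclusion).} Once the $\bar p$ part is annihilated,
\begin{align*}
P_{k,X_n^h}(\mathbf{0})^2 \le \Big(\sum_{i=0}^n|u_i|\Big)^2 \|k-\bar p\|_{\infty,\mathcal{X}_h\times\mathcal{X}_h} \le (1+C_{A,X_n})^2\|k-\bar p\|_{\infty,\mathcal{X}_h\times\mathcal{X}_h}.
\end{align*}
This uses $\mathbf{0}, h\mathbf{x}_i\in\mathcal{X}_h$ and the $h$-independent bound $\sum_{i=1}^n|\tilde u_i^{(h)}|\le C_{A,X_n}$ from \Cref{lem: poly repro}. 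Taking square roots gives the claim with $\tilde C_{A,X_n} = 1 + C_{A,X_n}$.
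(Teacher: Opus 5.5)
Your argument is the paper's own: characterise $P_{k,X_n^h}(\mathbf{0})^2$ as the minimum of the quadratic form over the reproducing weights, and evaluate it at the weights $\tilde{\mathbf{u}}^{(h)}$ of \Cref{lem: poly repro}. You then subtract $\bar p$ and bound what remains by $(1+C_{A,X_n})^2\|k-\bar p\|_{\infty,\mathcal{X}_h\times\mathcal{X}_h}$. One small point in Step 1: justify minimality by strict convexity of $Q$ on the affine constraint set, which follows from $\mathbf{K}$ being positive definite. Do not appeal to the bordered matrix, which is indefinite, despite what the paper asserts.

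The obstacle you flag is genuine, and it is exactly where the paper's proof is incomplete. The paper asserts $Q_{\bar p,X_n,\mathbf{0}}(\tilde{\mathbf{u}})=0$ ``by the polynomial reproduction property''. That step silently uses that $\mathbf{z}\mapsto p(\mathbf{z}-\mathbf{y})$ stays in $\mathcal{P}_A$. Without that closure the statement itself is false, so a restriction like your route 1 is necessary, not an artefact of the proof.

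Here is a counterexample. Take $d=1$, $\mathcal{X}=[0,1]$, $A=\{0,2\}$, $X_n=\{1,\tfrac12\}$, $k(x,y)=e^{-(x-y)^2}$ and $p(z)=1-z^2\in\mathcal{P}_A$.
\begin{itemize}
\item Since $n=|A|$, the reproducing weights are unique, and $\mathbf{u}^*=(-\tfrac13,\tfrac43)$ for every $h$.
\item A direct computation gives $Q_{\bar p,X_n^h,0}(\mathbf{u}^*)=2\bigl(h\sum_i u_i^* x_i\bigr)^2=\tfrac29 h^2\neq 0$.
\item Hence $P_{k,X_n^h}(0)^2=\tfrac29h^2+O(h^4)$.
\item But $\|k-\bar p\|_{\infty,[0,h]^2}\le h^4/2$, so no $h$-independent $\tilde C_{A,X_n}$ can work.
\end{itemize}

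Your restricted version does cover both places the lemma is used.
\begin{itemize}
\item In \Cref{thm: convergence SPRE}, $p=0$.
\item In \Cref{thm: extra smooth}, the chosen $p$ has total degree $<2s$ and $\{|\bm{\alpha}|<2s\}\subseteq A$. So every translate lies in $\mathcal{P}_{\{|\bm{\alpha}|<2s\}}\subseteq\mathcal{P}_A$, even though $A$ itself need not be a lower set.
\end{itemize}
Your route 3 alone would only cover the first of these.
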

\begin{proof}
    To simplify the presentation we present the main argument for $h = 1$, so that $\mathcal{X}_h = \mathcal{X}$.
    Then we will retrospectively argue that the constants we introduced were $h$-independent.
    
    First note that the vector $\mathbf{u}^*(\mathbf{x})$ in \eqref{eq: def u star} minimises the quadratic form 
    \begin{align*}
        Q_{k,X_n,\mathbf{x}}(\mathbf{u}) := k(\mathbf{x},\mathbf{x}) - 2 \sum_{i=1}^n u_i k(\mathbf{x} , \mathbf{x}_i) + \sum_{i=1}^n \sum_{j=1}^n u_i u_j k(\mathbf{x}_i , \mathbf{x}_j) 
    \end{align*}
    over $M_{\mathbf{x}} := \{ \mathbf{u} \in \mathbb{R}^n : \sum_{i=1}^n u_i p(\mathbf{x}_i) = p(\mathbf{x}) \text{ for all } p \in \mathcal{P}_A \}$, and the minimum value is $P_{k,X_n}(\mathbf{x})^2$ for the power function defined in \eqref{eq: def Q}.
    Indeed, one can prove this by constructing the Lagrangian 
    \begin{align*}
    \mathcal{L}(\mathbf{u},\bm{\lambda}) = \frac{1}{2} Q_{k,X_n,\mathbf{x}}(\mathbf{u}) + \bm{\lambda}^\top \underbrace{ (\mathbf{V}_A^\top \mathbf{u} - \mathbf{v}_A(\mathbf{x})) }_{ \mathbf{u} \in M }
    \end{align*}
    where $\bm{\lambda}$ is a vector of Lagrange multipliers, and then differentiate to obtain
    \begin{align*}
        \nabla_{\mathbf{u}} \mathcal{L} & = - \mathbf{k}(\mathbf{x}) + \mathbf{K} \mathbf{u} + \mathbf{V}_A \bm{\lambda} \\
        \nabla_{\bm{\lambda}} \mathcal{L} & = \mathbf{V}_A^\top \mathbf{u} - \mathbf{v}_A(\mathbf{x}) .
    \end{align*}
    One uniquely obtains \eqref{eq: def u star} by simultaneously solving $\nabla_{\mathbf{u}} \mathcal{L} = \mathbf{0}$ and $\nabla_{\bm{\lambda}} \mathcal{L} = \mathbf{0}$.
    Further, the Hessian is
    \begin{align*}
        \left[ \begin{array}{cc} \nabla_{\mathbf{u},\mathbf{u}} \mathcal{L} & \nabla_{\mathbf{u},\bm{\lambda}} \mathcal{L} \\ \nabla_{\bm{\lambda},\mathbf{u}} \mathcal{L} & \nabla_{\bm{\lambda},\bm{\lambda}} \mathcal{L} \end{array} \right] = \left[ \begin{array}{cc} \mathbf{K} & \mathbf{V}_A \\ \mathbf{V}_A^\top & \mathbf{0}  \end{array} \right]
    \end{align*}
    which is a positive definite matrix, so that $\mathbf{u}^*$ is indeed a minimum of $Q_{k,X_n,\mathbf{x}}$ on $M_{\mathbf{x}}$.

    Now we take $\mathbf{x} = \mathbf{0}$.
    To upper-bound $P_{k,X_n}(\mathbf{0})^2$ it suffices to evaluate $Q_{k,X_n,\mathbf{0}}(\mathbf{u})$ on any element $\mathbf{u} \in M_{\mathbf{0}}$.
    For this purpose we pick the element $\tilde{\mathbf{u}} \in \mathbb{R}^n$ defined in \Cref{lem: poly repro}, for which $Q_{\bar{p},X_n,\mathbf{0}}(\tilde{\mathbf{u}}) = 0$ by the polynomial reproduction property.
    Then
    \begin{align*}
        Q_{k,X_n,\mathbf{0}}(\tilde{\mathbf{u}}) & = Q_{k,X_n,\mathbf{0}}(\tilde{\mathbf{u}}) - Q_{\bar{p},\mathbf{0}}(\tilde{\mathbf{u}})
        = Q_{k - \bar{p},X_n,\mathbf{0}}(\tilde{\mathbf{u}}) \\
        & = (k - \bar{p})(\mathbf{0} , \mathbf{0}) - 2 \sum_{i=1}^n \tilde{u}_i (k - \bar{p})(\mathbf{0} , \mathbf{x}_i) + \sum_{i=1}^n \sum_{j=1}^n \tilde{u}_i \tilde{u}_j (k - \bar{p})(\mathbf{x}_i , \mathbf{x}_j) \\
        & \leq |(k - \bar{p})(\mathbf{0} , \mathbf{0})| + 2 \sum_{i=1}^n | \tilde{u}_i | \| k - \bar{p}\|_{\infty , \mathcal{X} \times \mathcal{X}} + \sum_{i=1}^n \sum_{j=1}^n | \tilde{u}_i \tilde{u}_j | \| k - \bar{p} \|_{\infty , \mathcal{X} \times \mathcal{X}} \\
        & \leq \left( 1 + \sum_{i=1}^n |\tilde{u}_i| \right)^2 \| k - \bar{p} \|_{\infty , \mathcal{X} \times \mathcal{X}} \\
        & \leq (1 + C_{A,X_n} )^2 \| k - \bar{p} \|_{\infty , \mathcal{X} \times \mathcal{X}}
    \end{align*}
    as required with $\tilde{C}_{A,X_n} = 1 + C_{A,X_n}$.
    Finally, we notice that the above argument in fact holds for all $h \in (0,1]$, as by \Cref{lem: poly repro} there exist appropriate vectors $\tilde{\mathbf{u}}^{(h)} \in \mathbb{R}^n$ and the constant $C_{A,X_n}$ can be taken to be $h$-independent.
\end{proof}

At last we are ready to prove \Cref{thm: convergence SPRE}:

\begin{proof}[Proof of \Cref{thm: convergence SPRE}]
From \Cref{lem: min norm} we have that $\mathbb{E}[f(\mathbf{0}) | \{ f(\mathbf{x}) , \mathbf{x} \in X_n^h \} ]$ coincides with $s_{X_n^h}[f](\mathbf{0})$, where $s_{X_n^h}[f]$ is a minimal semi-norm interpolant.
From \Cref{lem: gen bound} and \eqref{eq: def Q} we have a bound 
\begin{align}
| f(\mathbf{0}) - s_{X_n^h}[f](\mathbf{0}) | \leq P_{k,X_n^h}(\mathbf{0})  | f |_{\mathcal{N}_k(\mathcal{X}_h)} . \label{eq: goal}
\end{align}
From \eqref{eq: error} we have $f(\mathbf{x}) = p_f(\mathbf{x}) + R(\mathbf{x})$ where $p_f(\mathbf{x}) := \sum_{\bm{\alpha} \in A} \beta_{\bm{\alpha}} \mathbf{x}^{\bm{\alpha}}$, so we can bound the semi-norm in \eqref{eq: goal} via $|f|_{\mathcal{N}_k(\mathcal{X}_h)} = |R + p_f|_{\mathcal{N}_k(\mathcal{X}_h)} \leq \| R \|_{\mathcal{H}_k(\mathcal{X}_h)}$.
Note that the final term is finite since $\| R \|_{\mathcal{H}_k(\mathcal{X}_h)} \leq \| R \|_{\mathcal{H}_k(\mathcal{X})}$ follows from $\mathcal{X}_h \subset \mathcal{X}$ \citep[][Theorem 6]{berlinet2011reproducing}.
For the power function, from \Cref{lem: power fn bound} we obtain 
\begin{align}
    P_{k,X_n^h}(\mathbf{0}) 
    & \leq \tilde{C}_{A,X_n} \| k - \bar{p} \|_{\infty , \mathcal{X}_h \times \mathcal{X}_h }^{1/2}  \label{eq: pow fn bd}
\end{align}
where the bound is valid for any $\bar{p} : \mathcal{X} \times \mathcal{X} \rightarrow \mathbb{R}$ of the form $\bar{p}(\mathbf{x} , \mathbf{x}') = p(\mathbf{x} - \mathbf{x}')$ where $p \in \mathcal{P}_A$.
To obtain the stated result we can take $p = 0$ to obtain 
$$
C_{h,k} := \tilde{C}_{A,X_n} \| k \|_{\infty , \mathcal{X}_h \times \mathcal{X}_h}^{1/2}, 
$$
for which $\sup_{h \in (0,1]} C_{h,k} < \infty$ since $k$ is bounded.
\end{proof}

\subsection{Proof of \Cref{thm: extra smooth}}
\label{app: extra smooth}

\begin{proof}[Proof of \Cref{thm: extra smooth}]
    The proof proceeds as in the proof of \Cref{thm: convergence SPRE}, to the point of obtaining \eqref{eq: pow fn bd}.
    Instead of plugging $p = 0$ into \eqref{eq: pow fn bd}, we will now make a more intelligent choice of $p \in \mathcal{P}_A$.
    Since $\phi$ is $2s$ times continuously differentiable at $\mathbf{0}$, from Taylor's theorem
    \begin{align*}
        k(\mathbf{x},\mathbf{y}) = \sum_{|\bm{\alpha}| < 2s} \frac{ \partial_{\bm{\alpha}} \phi(\mathbf{0}) }{\bm{\alpha}!} (\mathbf{x} - \mathbf{y})^{\bm{\alpha}} + O(\|\mathbf{x} - \mathbf{y}\|^{2s}) .
    \end{align*}
    Since $\{|\bm{\alpha}| < 2 s\} \subseteq A$, we can pick $p \in \mathcal{P}_A$ such that 
    \begin{align*}
        \bar{p}(\mathbf{x},\mathbf{y}) = \sum_{|\bm{\alpha}| < 2 s} \frac{ \partial_{\bm{\alpha}} \phi(\mathbf{0}) }{\bm{\alpha}!} (\mathbf{x} - \mathbf{y})^{\bm{\alpha}}
    \end{align*}
    to obtain that
    \begin{align*}
        \| k - \bar{p} \|_{\infty , \mathcal{X}_h \times \mathcal{X}_h} = O\left( \sup_{\mathbf{x},\mathbf{y} \in \mathcal{X}_h} \|\mathbf{x} - \mathbf{y}\|^{2s} \right) = O(h^{2s}) 
    \end{align*}
    as $h \rightarrow 0$.
    The claim then follows from \eqref{eq: goal} and \eqref{eq: pow fn bd}.
\end{proof}

\subsection{Proof of \Cref{lem: equiv norms}}
\label{app: proof of equiv norms}

The edge case where $k(\mathbf{x} , \mathbf{x}') = \sigma^2  \delta_{\mathbf{x} , \mathbf{x}'}$ is a \emph{white noise} kernel recovers the straight-forward approach of modelling
\begin{align}
f(\mathbf{x}) = \sum_{ \bm{\alpha} \in A } \beta_{\bm{\alpha}} \mathbf{x}^{\bm{\alpha}} + \epsilon(\mathbf{x})  , \qquad \epsilon(\mathbf{x}) \stackrel{\text{iid}}{\sim} N(0,\sigma^2)  \label{eq: minimal assum}
\end{align}
where the coefficients $\beta_{\bm{\alpha}}$ are assigned a flat prior and inferred in the Bayesian framework.
This choice of kernel might be appealing in that it renders the matrix $\mathbf{K} = \sigma^2 \mathbf{I}$ diagonal and the computational cost $O(n)$ rather than $O(n^3)$.

If the set $\mathcal{X}$ is countably infinite, then we may enumerate its elements as $\{\mathbf{x}_n\}_{n \in \mathbb{N}}$.
In the case, $\mathcal{H}_k(\mathcal{X})$ contains any function $R$ for which 
\begin{align}
\|R\|_{\mathcal{H}_k(\mathcal{X})} = \frac{1}{\sigma} \left( \sum_{n \in \mathbb{N}} R(\mathbf{x}_n)^2 \right)^{1/2} \label{eq: norm discrete}
\end{align}
is finite.
This explicit form for the norm $\|\cdot\|_{\mathcal{H}_k(\mathcal{X})}$ is the key tool in establishing \Cref{lem: equiv norms}.
Let $\|\mathbf{x}\|_\infty := \max\{|x_1| , \dots , |x_d|\}$ be the supremum norm on $\mathbb{R}^d$.

\begin{proof}[Proof of \Cref{lem: equiv norms}]
Note that the size of the design is $n = \sum_{0 \leq |\bm{\alpha}| \leq s} 1$.
Fixing $h = 1 / 2^M$, in terms of the notation in \Cref{thm: convergence SPRE}, we have
\begin{align*}
\mathcal{X}_h = \{\mathbf{0}\} \cup \left\{ \frac{1}{ 2^m}  \frac{1}{ 2^{\bm{\alpha}} }  : m \in \mathbb{N}_0, \; m \geq M, \; 0 \leq |\bm{\alpha}| \leq s \right\} .
\end{align*}
Our assumption implies that there exists $C_1, C_2, M_0$ such that, for all $\mathbf{x} \in \mathcal{X}_{h_0}$ with $h_0 = 1 / 2^{M_0}$, $C_1 \|\mathbf{x}\|_\infty^a \leq | R(\mathbf{x}) | \leq C_2 \|\mathbf{x}\|_\infty^a$.
Then, from the explicit form of the norm in \eqref{eq: norm discrete},
$$
\|R\|_{\mathcal{H}_k(\mathcal{X}_{h_0})}^2 \leq \frac{C_2}{\sigma^2} \sum_{m \geq M_0 } \sum_{0 \leq |\bm{\alpha}| \leq s} \left( \frac{1}{ 2^m} \right)^{2a}
= \frac{C_2 n}{\sigma^2} \sum_{m \geq M_0} \left( \frac{1}{2^{a}} \right)^{2m} =: (\star)
$$
which is finite for all $a > 0$.
Similarly, for $h = 1 / 2^{M + M_0}$, $M \in \mathbb{N}_0$, we get
$$
\|R\|_{\mathcal{H}_k(\mathcal{X}_{h'})}^2 \leq \left( \frac{1}{2^{a}} \right)^{2M} \cdot (\star) .
$$
Rearranging this last equation, we get
$$
(\star)^{-1} C_1 \|R\|_{\mathcal{H}_k(\mathcal{X}_h)} \leq C_1  \left(  \frac{1}{2^a} \right)^{M}  = C_1 \left(  \frac{1}{2^M} \right)^{a}  = C_1 \sup_{\mathbf{x} \in \mathcal{X}_h } \|\mathbf{x}\|_{\infty}^a \leq \sup_{\mathbf{x} \in \mathcal{X}_h} |R(\mathbf{x})| ,
$$
for all permissible scalings $h \leq h_0$, as required.
\end{proof}

\section{Experimental Protocol}
\label{app: protocol}

This appendix contains full details required to reproduce the empirical results that we presented in the main text.
\Cref{app: kernels} formally defines the kernels $k$ which were considered.
Full details for the cubature experiment reported in \Cref{sec: methods} are contained in \Cref{app: cubature illustration}, and full details for the experimental design experiment reported in \Cref{sec: UQ} are contained in \Cref{app: ED detail}.
Implementational details pertaining to the three case studies are contained in \Cref{app:3d-models-detail}.

\subsection{Kernels}
\label{app: kernels}

This appendix contains precise definitions of the parametric kernels $k_\theta$ that we used in our experiments.
In what follows we let $\mathrm{softplus}(z) = \log(1 + \exp(x))$, which transforms unconstrained $z \in \mathbb{R}$ into a positive output.

\begin{itemize}
    \item The \emph{white noise} kernel is
    $$
    k(\mathbf{x},\mathbf{x}') = \sigma^2 \delta_{\mathbf{x},\mathbf{x}'}
    $$
    where $\sigma^2 > 0$ is a learnable parameter and $\delta_{\mathbf{x},\mathbf{x}'} = 1$ if $\mathbf{x} = \mathbf{x}'$ and otherwise $\delta_{\mathbf{x},\mathbf{x}'} = 0$.
    For optimisation we let $\sigma^2 = \mathrm{softplus}(\theta)$ and initialise at $\theta = 1$.
    \item The \emph{Mat\'{e}rn}-$\frac{1}{2}$ kernel is
    $$
    k(\mathbf{x},\mathbf{x}') = \sigma^2 \exp\left( - \frac{\|\mathbf{x}-\mathbf{x}'\|}{\ell} \right)
    $$
    where $\sigma^2 > 0$ and $\ell > 0$ are learnable parameters.
    For optimisation we let $\sigma^2 = \mathrm{softplus}(\theta_1)$ and $\ell = \mathrm{softplus}(\theta_2)$ initialise at $\bm{\theta} = [1,1]$.
    \item The \emph{Mat\'{e}rn}-$\frac{3}{2}$ kernel is
    $$
    k(\mathbf{x},\mathbf{x}') = \sigma^2 \left( 1 + \frac{\sqrt{3}\|\mathbf{x}-\mathbf{x}'\|}{\ell} \right) \exp\left( - \frac{\sqrt{3}\|\mathbf{x}-\mathbf{x}'\|}{\ell} \right)
    $$
    where $\sigma^2 > 0$ and $\ell > 0$ are learnable parameters.
    For optimisation we let $\sigma^2 = \mathrm{softplus}(\theta_1)$ and $\ell = \mathrm{softplus}(\theta_2)$ initialise at $\bm{\theta} = [1,1]$.
    \item The \emph{Gaussian} kernel is
    $$
    k(\mathbf{x},\mathbf{x}') = \sigma^2 \exp\left( - \frac{\|\mathbf{x}-\mathbf{x}'\|^2}{\ell^2} \right)
    $$
    where $\sigma^2 > 0$ and $\ell > 0$ are learnable parameters.
    For optimisation we let $\sigma^2 = \mathrm{softplus}(\theta_1)$ and $\ell = \mathrm{softplus}(\theta_2)$ initialise at $\bm{\theta} = [1,1]$.
\end{itemize}

\noindent For optimisation of kernel parameters $\bm{\theta}$, in all experiments we used automatic differentiation of the \ac{LOOCV} criterion \eqref{eq: LOOCV error} in combination with 10 iterations of an adaptive trust region method.

\subsection{Cubature Experiment}
\label{app: cubature illustration}

The cubature experiments reported in \Cref{sec: methods} of the main text were based on the setting described in \Cref{subsec: cubature}.
To construct an integrand $g \in C^{2s+2}([\mathbf{0},\mathbf{1}])$ we first took the scalar function $\phi_0(z) = |z - \frac{1}{2}|$ and computed indefinite integrals $2s+2$ times; i.e. $\phi_i(z) = \int^z \phi_{i-1}(z') \mathrm{d}z'$ for $i = 1,\dots,2s+2$.
Then we set $g(\mathbf{t}) = 1 + \phi_{2s+2}((t_1 + \cdots + t_d)/d) / \|\phi_{2s+2}\|_\infty$.
Symbolic computation in Matlab R2024a was used to calculate the true value of the integral $\int_{[\mathbf{0},\mathbf{1}]} g(\mathbf{t}) \mathrm{d}\mathbf{t}$.

For the illustration in \Cref{fig: illus} we took $d = 2$ and $s = 1$.
The $n$ inputs $X_n$ were sampled independently and uniformly at random from $[\mathbf{0.5},\mathbf{1}]$; for \ac{MRE} $n = 3$, for \ac{GRE} $n = 64$, and for \ac{SPRE} $n = 4$.
For \ac{GRE} we took $k$ to be the Mat\'{e}rn-$\frac{3}{2}$ kernel, with $\mathrm{Lead}(A) = \{(1,1),(2,0),(0,2)\}$ the leading order terms in \eqref{eq: Riemann Taylor}, learning the kernel length scale and amplitude parameters using \ac{LOOCV} as described in \Cref{sec: UQ}.
Note that this is a departure from the maximum likelihood approach proposed for \ac{GRE} in \citet{oates2024probabilistic}; we found empirically that \ac{LOOCV} produced better-calibrated output.
For \ac{SPRE} we took $k$ to be the white noise kernel $k(\mathbf{x},\mathbf{x}') = \sigma^2 \delta_{\mathbf{x},\mathbf{x}'}$ where the amplitude parameter $\sigma>0$ was learned using \ac{LOOCV} as described in \Cref{sec: UQ}.

For the assessment of convergence acceleration in \Cref{fig: convergence} we used the same protocol, now varying $d \in \{1,2,3\}$ and $s \in \{0,1\}$.
However, in this case we compared the performance of \ac{MRE}, \ac{GRE} and \ac{SPRE} based on the same inputs $X_n$, so their accuracy can be directly compared.
The design sets were defined as $X_n = \frac{1}{2} \bar{X}_n$ where, respectively for $d \in \{1,2,3\}$, 
\begin{align*} 
\bar{X}_n & = \textstyle \left\{1 , \frac{1}{2} , \frac{1}{3}, \frac{1}{4} \right\} , \\
\bar{X}_n & = \textstyle \left\{ (1,1) , \left( 1 , \frac{1}{2} \right) ,  \left( \frac{1}{2} , 1 \right) , \left( \frac{1}{2} , \frac{1}{2} \right) , \left( 1 , \frac{1}{3} \right) , \left( \frac{1}{3} , 1 \right) \right\} , \\
\bar{X}_n & = \textstyle \left\{ (1,1,1) , \left( 1, 1 , \frac{1}{2} \right) , \left( 1, \frac{1}{2} ,1 \right) , \left( \frac{1}{2} ,1,1\right) , \left( 1, \frac{1}{2} , \frac{1}{2} \right) , \left( \frac{1}{2}, 1 , \frac{1}{2} \right) , \left( \frac{1}{2}, \frac{1}{2} , 1 \right) , \left( \frac{1}{2} , \frac{1}{2} , \frac{1}{2} \right)   \right\} .
\end{align*}
For \ac{MRE} and \ac{SPRE} the index set $A$ was again set according to the $s$-dependent expansion in \eqref{eq: Riemann Taylor}.
For \ac{GRE} we again used $\mathrm{Lead}(A)$ as explained in \Cref{sec: GRE}.
Since the number of data that can be used in \ac{MRE} is $\mathrm{dim}(A)$, we selected the elements of $X_n$ that were closest to $\mathbf{0}$ for extrapolation using \ac{MRE}.
The scaled inputs $X_n^h$ were multiples $h X_n$ with $h = (\frac{1}{2})^m$ for $m \in \mathbb{N}_0$.

\subsection{Experimental Design}
\label{app: ED detail}

This appendix contains full details to reproduce the experimental design experiment reported in \Cref{sec: UQ}.
As this is a proof-of-concept, we used a simple data-generating function 
\begin{align}
    f(\mathbf{x}) = \underbrace{ 1 + x_1 - 2 x_2 + 3 x_1^2 }_{p(\mathbf{x})} \; + \; 0.0001 x_1^2 x_2^2 \epsilon(\mathbf{x}) \label{eq: ED data gen}
\end{align}
where $\epsilon(\mathbf{x})$ is a standard white noise process on $[\mathbf{0},\mathbf{1}]$.
The simple form of $f$ allows us to deduce that $f(\mathbf{0}) = 1$, and that the best polynomial approximation is $p$ in \eqref{eq: ED data gen}, an element of $\mathcal{P}_A$ for the index set $A = \{(0,0), (1,0), (0,1), (2,0)\}$. 
For experimental design, stochastic search was used to numerically solve \eqref{eq: constrained opt}.
Specifically, we constructed a total of 1,000 candidate designs $X_m^*$ and the best design was selected.
To construct each candidate design $X_m^*$ we sequentially sampled elements $\mathbf{x}_i^*$ from the uniform distribution over $[\mathbf{0},\mathbf{1}]$, stopping one iteration before the computational cost $\sum_{i=1}^m c(\mathbf{x}_i^*)$ exceeded the computational budget $C$.

\subsection{Implementation of Case Studies}
\label{app:3d-models-detail}

This appendix summarises the most important implementational aspects of the three case studies that we report.

\paragraph{Implementation of \ac{MRE}}

The traditional approach to \ac{MRE} relies on numerical analysis to select an appropriate index set $A$.
However, for the three case studies that we report conducting a meaningful analysis is extremely difficult.
For the simulations we report, we take $A = \{\bm{\alpha} : |\bm{\alpha}| \leq 1 , \, \bm{\alpha} \in \mathbb{N}_0^d \}$ for \ac{MRE}.
This represents a conservative choice; including more terms in $A$ could in principle further accelerate convergence if a higher-order expansion \eqref{eq: error} holds, but for such complex numerical simulations we do not expect such a higher-order expansion to hold.
Since the number of data required for \ac{MRE} in this case is $n = 1 + d$, which is less than $n$ in each of the three case studies, only the $n$ data points for which $\|\mathbf{x}_i\|$ is smallest were used to train \ac{MRE}.

\paragraph{Implementation of \ac{GRE}}

For the experiments that we report, \ac{GRE} was based on the white noise kernel $k$, so that the covariance between $f(\mathbf{x})$ and $f(\mathbf{x}')$ was modelled as $\sigma^2 \epsilon(\mathbf{x}) \epsilon(\mathbf{x}') \delta_{\mathbf{x},\mathbf{x}'}$, where $\epsilon(\mathbf{x})$ is derived from the index set $A$ as explained in \eqref{eq: b}.
This kernel represents the weakest regularity assumptions that could be made, which we considered to be a reasonable choice due to the difficulty of arguing in favour of stronger regularity in the three case studies that we considered.
Being rooted in a statistical model, data-driven selection of $A$ is also possible for \ac{GRE}.
To ensure a fair comparison between \ac{GRE} and \ac{SPRE}, the stepwise selection method described in \Cref{sec: UQ} was also used to select $A$ for \ac{GRE}.
That is, we initialise $A = \{\mathbf{0}\}$ and consider adding terms to this set one at a time, keeping those which lead to a reduction in error as measured with \ac{LOOCV}.

\paragraph{Ground Truth}

Since in all case studies the continuum limit is intractable, for assessment purposes we actually consider $f(\mathbf{x} + \mathbf{x}_0)$ for a small offset $\mathbf{x}_0$, so that with sufficient computational effort the ground truth $f(\mathbf{x}_0)$ can be computed.
The offset used for the two-spheres and five-shapes case studies was $\mathbf{x}_0=(10^{-5},0.02,0.001)$, while for the agent-based flocking case study the offset was $\mathbf{x}_0=(0.1,10^{-15},0)$.
These offsets were selected such that the ground truth could be computed within a computationally feasible time frame, making the study practical.

\paragraph{Design Points}

For each case study, we considered a design set of size $n = 2^d$.
Since each case study involves $d = 3$ discretisation parameters, we employed the same reference design $X_n$ of size $n = 2^3 = 8$ throughout:
\[
X_n =
\left\{ \begin{array}{rcl}
(0.062, & 0.812, & 0.437), \\
(0.187, & 0.312, & 0.937), \\
(0.312, & 0.937, & 0.187), \\
(0.437, & 0.062, & 0.687), \\
(0.562, & 0.687, & 0.062), \\
(0.687, & 0.187, & 0.562), \\
(0.812, & 0.562, & 0.312), \\
(0.937, & 0.437, & 0.812)
\end{array} \right\}
\]
This design was numerically confirmed to be $\mathcal{P}_A$-unisolvent for all the index sets $A$ of cardinality at most $n$ which we considered.
This reference design was then scaled to obtain $X_n^h = \{h\mathbf{x} : \mathbf{x} \in X_n\}$ as described in the main text.

\paragraph{Computational Requirement}

All computations involving \ac{SPRE} for a given parametric kernel (i.e. computing each line for \ac{SPRE} in \Cref{fig:two-spheres-abs}) typically required just over one hour of computation on the high-performance computing service, called Comet, at Newcastle University (ran in serial).
This is negligible compared to the time required to simulate the raw data on which extrapolation was based.

\subsubsection{Two Spheres and Five Shapes 3D Models}

Here we report implementational details pertaining to the two-spheres and five-shapes case studies.

\paragraph{Physics Engine}

All 3D simulations were performed using \emph{Multi-Joint dynamics with Contact} (\acs{MuJoCo}) in Python \citep{todorov2012mujoco}, which is available at \url{https://github.com/google-deepmind/mujoco}.
\ac{MuJoCo} is a high-performance rigid-body dynamics simulator designed for modelling articulated systems with contact-rich interactions. The models were executed through the official Python API, which provides programmatic access to the \ac{MuJoCo} simulation pipeline and solver configuration.
Each simulation scenario is specified by an XML model file defining the physical environment, including bodies, geometries, contact parameters, solver settings, and visualisation options. 
All model files are available on the GitHub repository associated with this manuscript.

\paragraph{Temporal Interpolation}

During each run, the discretisation parameters are inserted into the XML configuration before the model is instantiated. 
The resulting model is then simulated by repeatedly advancing the system dynamics using the \ac{MuJoCo} stepping routine.
Since the desired quantities of interest do not in general coincide exactly with a discrete time step, linear interpolation is performed to estimate these quantities at the precise time required. 

\paragraph{Solver Settings}

For the three case studies that we report, the numerical integrator is set to explicit Euler and the constraint solver to the Newton method. 
The constraint solver is configured with $500$ iterations, $10$ no--slip iterations, and a tolerance of $10^{-12}$. 
Each geometry is assigned contact regularisation parameters that control how contact forces respond to constraint violations, determining both the reference dynamics and the effective stiffness of the contact interactions.

\subsubsection{Agent-Based Flocking Model}
\label{app:mult-agent-detail}

Here we report implementational details pertaining to experiments involving the agent-based flocking model.

\paragraph{Physics Engine}

The multi-agent flocking model is implemented using the \emph{Mesa} Python library \citep{Mesa}. 
Agents interact in a continuous two-dimensional domain $\Omega = [0,10] \times [0,10]$ with periodic (toroidal) boundaries. Each agent repels others at very short distances to prevent overlap, using a repulsion radius to define a short-range exclusion zone. At intermediate distances, agents attract one another to maintain group cohesion up to the interaction radius, with interaction strength smoothly tapering off near the limit. This smooth tailing avoids abrupt force cutoffs, ensuring stable dynamics.

\paragraph{Flocking Model}
The system contains 60 agents with random initial positions (generated using a fixed random seed, so the model is in fact deterministic). 
Each agent $i$ is characterised by its position $\mathbf{u}_i(t) \in \Omega$ and velocity $\mathbf{v}_i(t) \in \mathbb{R}^2$ at time $t$.

\paragraph{Numerical Methods}
Numerical accuracy in \emph{Mesa} is controlled by three discretisation parameters, a time step $x_1$, a repulsion softening parameter $x_2$, and a cutoff width $x_3$.
For two agents $i$ and $j$, the displacement vector is $\mathbf{d}_{ij} = \mathbf{u}_j - \mathbf{u}_i$, with periodic corrections applied to each coordinate to respect toroidal boundaries. 
The inter-agent distance is then $r_{ij} = \lVert \mathbf{d}_{ij} \rVert$, and the unit vector pointing from $i$ to $j$ is $\hat{\mathbf{d}}_{ij} = \mathbf{d}_{ij} / r_{ij}$.
The pairwise force from agent $j$ on agent $i$ has repulsive and attractive components:
\begin{equation*}
F^{\mathrm{rep}}_{ij} =
\max\left(0, -\frac{R_r - r_{ij}}{r_{ij} + x_2 }\right), \qquad F^{\mathrm{att}}_{ij} =
\max\left(0, r_{ij} - R_r\right),
\end{equation*}
where $R_r$ is the \emph{repulsion radius} (in our simulations, $R_r=0.5$) and $x_2$ is the repulsion softening parameter. The total pairwise force vector is $\mathbf{f}_{ij} = \left(F^{\mathrm{rep}}_{ij} + F^{\mathrm{att}}_{ij}\right)\hat{\mathbf{d}}_{ij}$.
Interactions are modulated by a smooth cutoff function
\begin{equation*}
w(r) =
\begin{cases}
1, & x_3 = 0 \text{ and } r < R, \\
0, & x_3 = 0 \text{ and } r \geq R, \\
\dfrac{1}{2}\left(1 - \tanh\left(\dfrac{r - R}{x_3}\right)\right), & x_3 > 0,
\end{cases}
\end{equation*}
where $R$ is the \emph{interaction radius} (in our simulations, $R=2$) and $x_3$ is the cutoff width parameter. This ensures a continuous decay of interaction strength near the cutoff distance.
The net force on agent $i$ is then
\begin{equation*}
\mathbf{F}_i = \sum_{j \neq i} w(r_{ij}) \, \mathbf{f}_{ij}.
\end{equation*}
Agents are updated at each time step by incrementing their velocities and positions according to the computed forces with time increment $x_1$:
\begin{align*}
\mathbf{v}_i(t+x_1) &= \mathbf{v}_i(t) + x_1 \, \mathbf{F}_i(t), \\
\mathbf{u}_i(t+x_1) &= \mathbf{u}_i(t) + x_1 \, \mathbf{v}_i(t+x_1),
\end{align*}
with positions wrapped to remain within the toroidal domain $\Omega$.

\paragraph{Temporal Interpolation}

Since the desired quantities of interest do not in general coincide exactly with a discrete time step, linear interpolation is performed to estimate these quantities at the precise time required.

\section{Additional Empirical Results}
\label{app: additional}

This appendix contains additional empirical results relating to the three case studies in the main text.
\Cref{fig: two spheres sparsity matern1/2,fig: two spheres sparsity matern3/2,fig: two spheres sparsity gaussian} concern the estimated index set $A$ for the two-spheres case study when different kernels are used.
\Cref{fig:five-shapes-errorbars} reports the predictive uncertainty of \ac{SPRE} for the five-shapes case study, and additionally the estimated index sets $A$ are contained in \Cref{fig: five shapes sparsity,fig: five shapes sparsity matern1/2,fig: five shapes sparsity matern3/2,fig: five shapes sparsity gaussian}.
\Cref{fig:flock-errorbars} reports the predictive uncertainty of \ac{SPRE} for the agent-based flocking model, and additionally the estimated index sets $A$ are contained in \Cref{fig: flocking sparsity matern1/2,fig: flocking sparsity matern3/2,fig: flocking sparsity gaussian}.

\begin{figure}[h!]
    \centering
    \includegraphics[width=\textwidth]{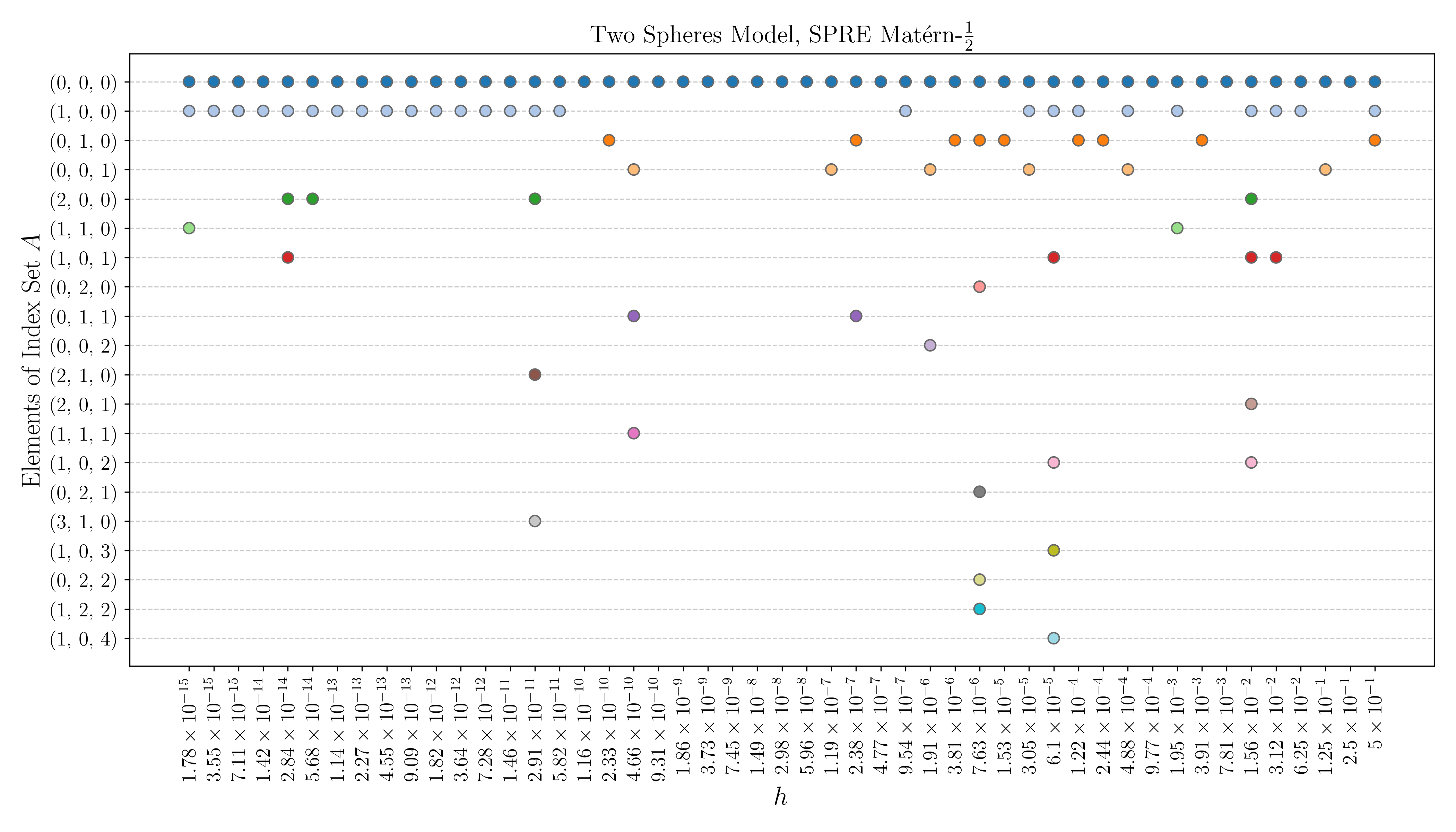}
    \caption{\emph{Two Spheres 3D Model}.  
    The index set $A$ was estimated based on a design of size $n = 2^d$, for a range of values of the scaling factor $h$ controlling the design set $X_n^h$.
    Here the Mat\'{e}rn-$\frac{1}{2}$ kernel was used.
    \alttext{A figure displaying the estimated index set $A$ as the global scaling parameter $h$ is varied.}
    }
    \label{fig: two spheres sparsity matern1/2}
\end{figure}

\begin{figure}[h!]
    \centering
    \includegraphics[width=\textwidth]{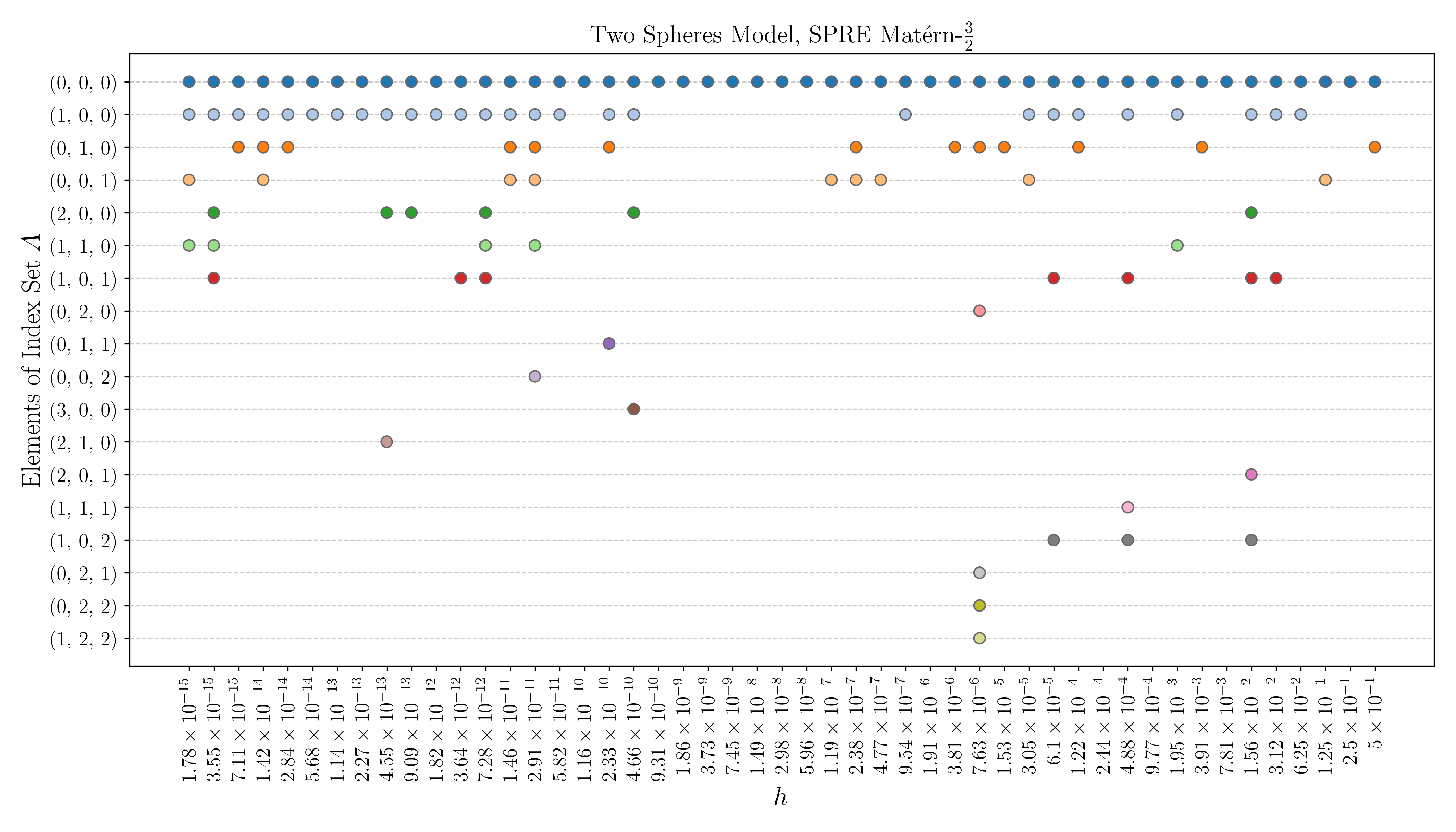}
    \caption{\emph{Two Spheres 3D Model}.  
    The index set $A$ was estimated based on a design of size $n = 2^d$, for a range of values of the scaling factor $h$ controlling the design set $X_n^h$.
    Here the Mat\'{e}rn-$\frac{3}{2}$ kernel was used.
    \alttext{A figure displaying the estimated index set $A$ as the global scaling parameter $h$ is varied.}
    }
    \label{fig: two spheres sparsity matern3/2}
\end{figure}

\begin{figure}[h!]
    \centering
    \includegraphics[width=\textwidth]{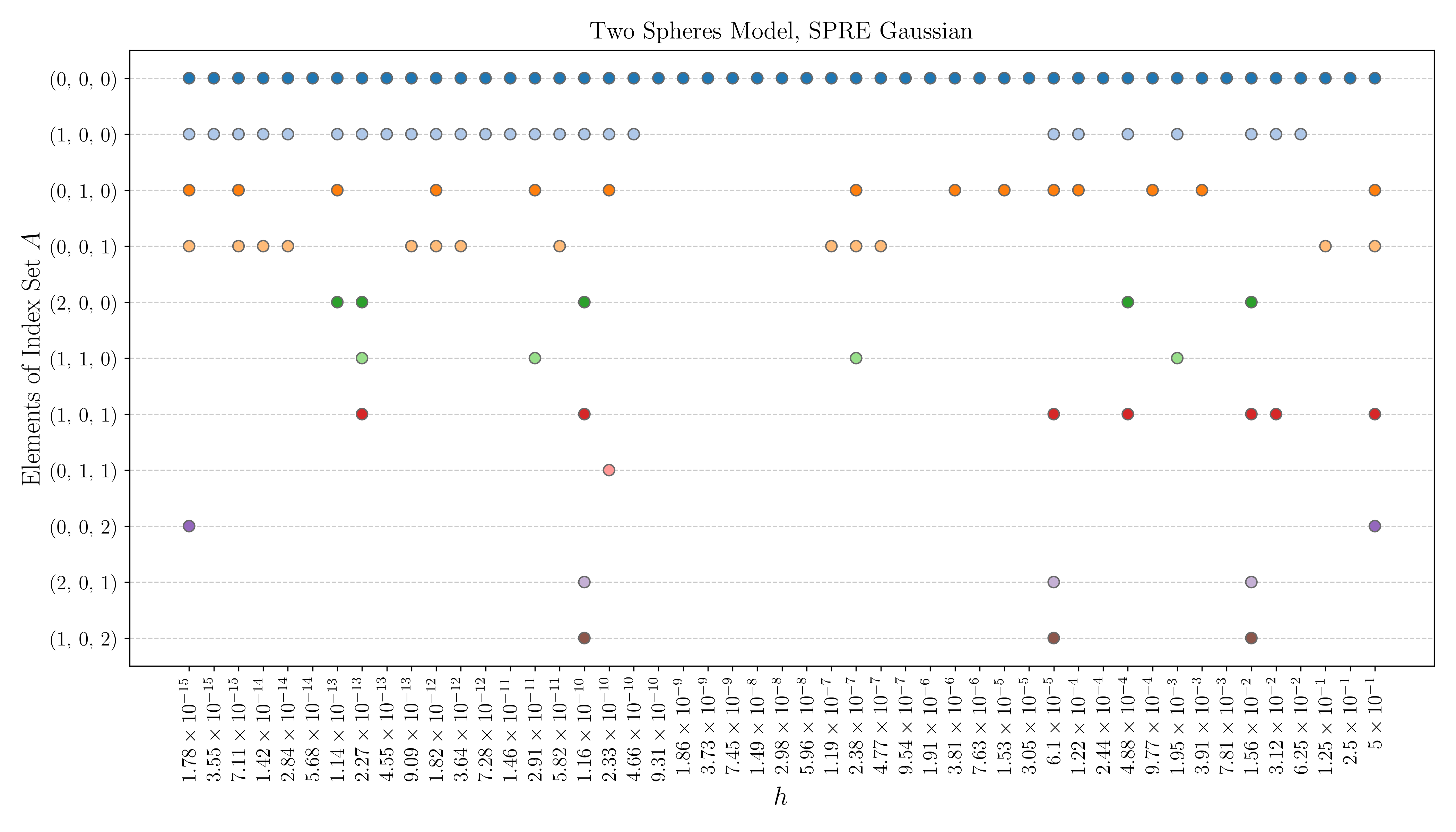}
    \caption{\emph{Two Spheres 3D Model}.  
    The index set $A$ was estimated based on a design of size $n = 2^d$, for a range of values of the scaling factor $h$ controlling the design set $X_n^h$.
    Here the Gaussian kernel was used.
    \alttext{A figure containing 4 panels, each of which shows the predictive error bars produced by SPRE as a function of the global scaling parameter $h$.
Each panel contains results for a different type of kernel.
\alttext{A figure displaying the estimated index set $A$ as the global scaling parameter $h$ is varied.}
}
    }
    \label{fig: two spheres sparsity gaussian}
\end{figure}

\begin{figure}[t!]
\centerline{\includegraphics[width=\textwidth]{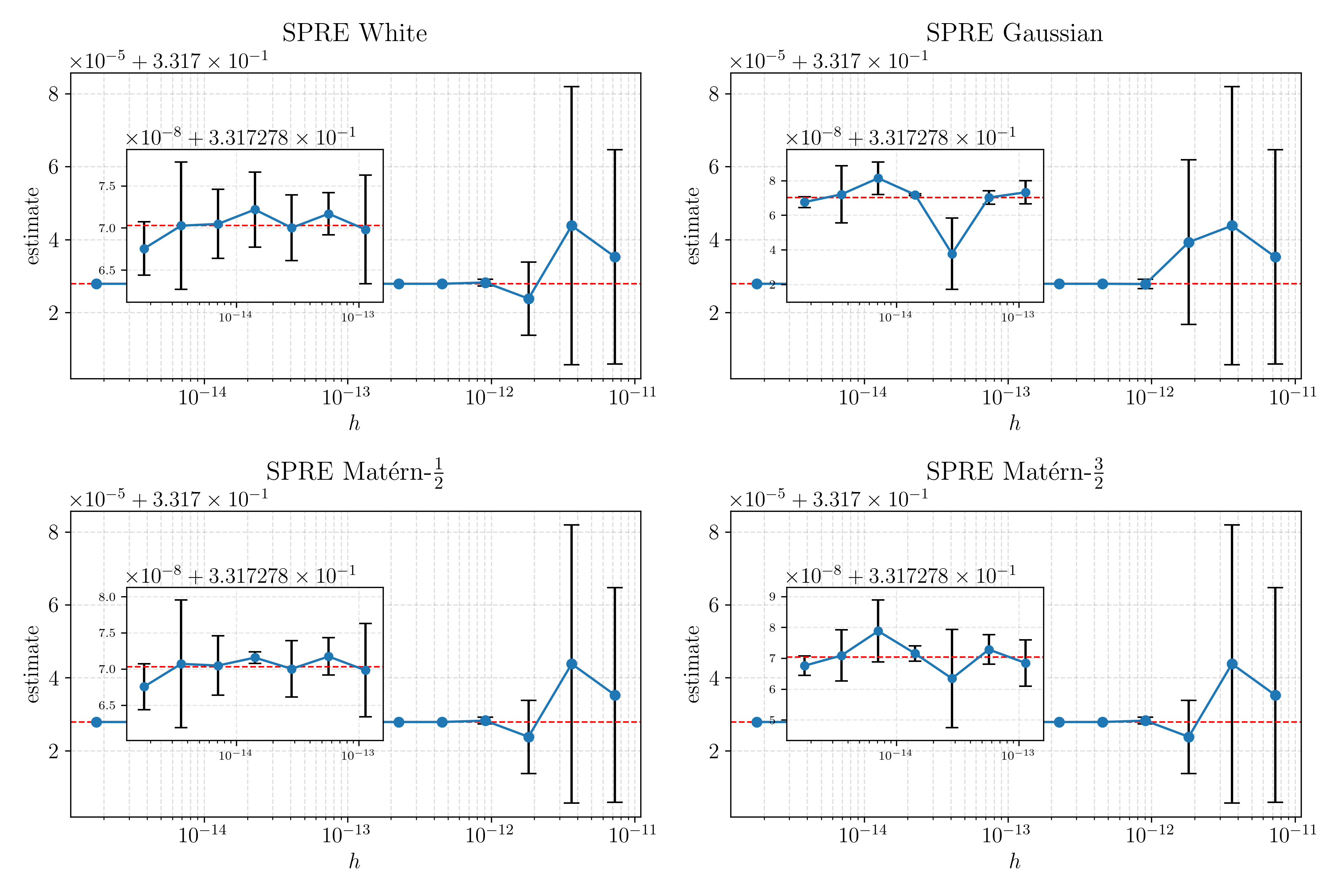}}
\caption{\textit{Five Shapes 3D Model}.
The two standard deviation predictive intervals from \ac{SPRE} are displayed as a function of the scaling factor $h$ controlling the design set $X_n^h$.
The true value $f(\mathbf{0})$ is shown as a red line, dashed.
Top left to bottom right: white noise kernel, Gaussian kernel, Mat\'{e}rn-$\frac{1}{2}$ kernel, and Mat\'{e}rn-$\frac{3}{2}$.
}
\label{fig:five-shapes-errorbars}
\end{figure}

\begin{figure}[h!]
    \centering
    \includegraphics[width=\textwidth]{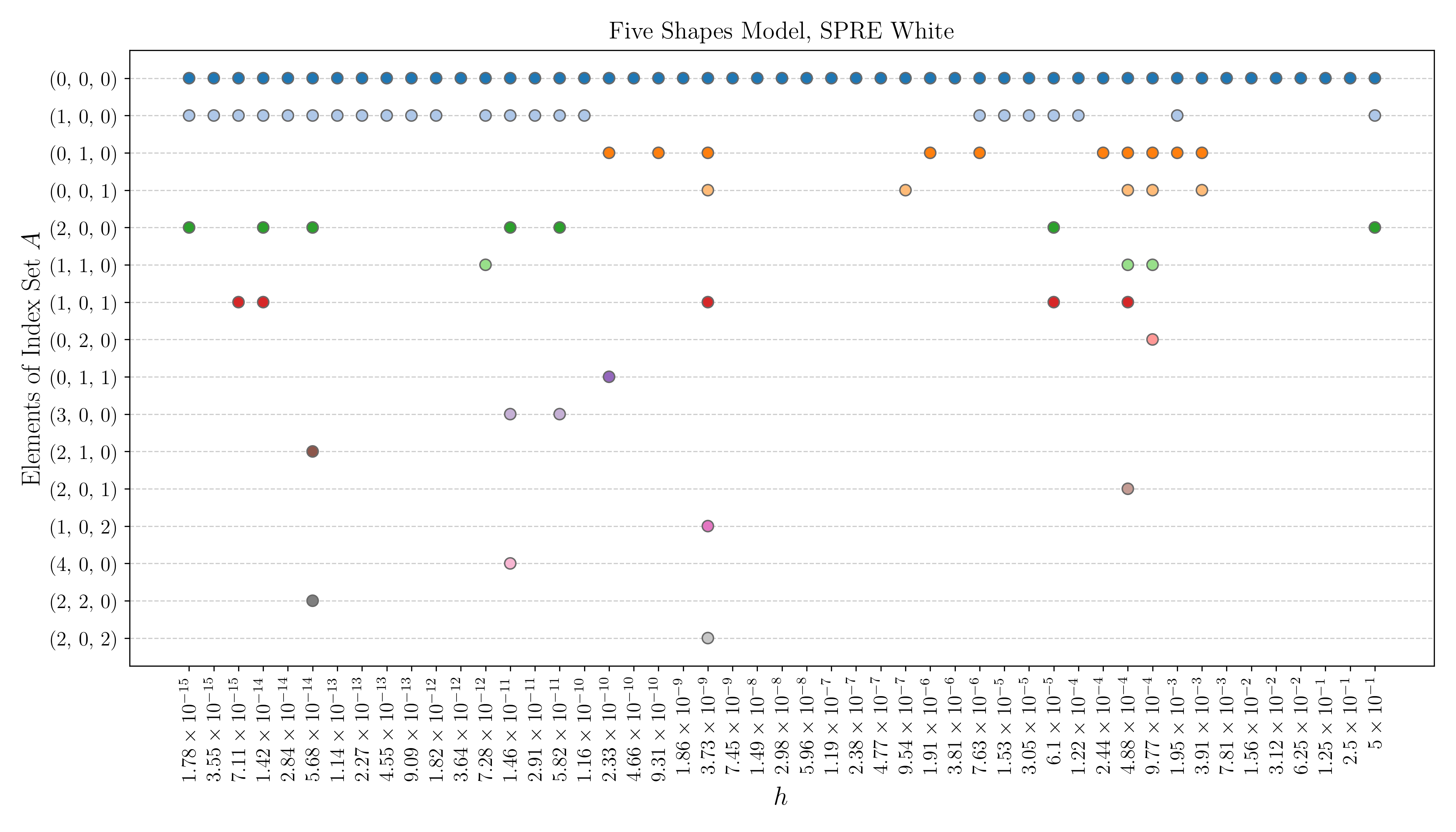}
    \caption{\emph{Five Shapes 3D Model}.  
    The index set $A$ was estimated based on a design of size $n = 2^d$, for a range of values of the scaling factor $h$ controlling the design set $X_n^h$.
    Here the white noise kernel was used.
    \alttext{A figure displaying the estimated index set $A$ as the global scaling parameter $h$ is varied.}
    }
    \label{fig: five shapes sparsity}
\end{figure}

\begin{figure}[h!]
    \centering
    \includegraphics[width=\textwidth]{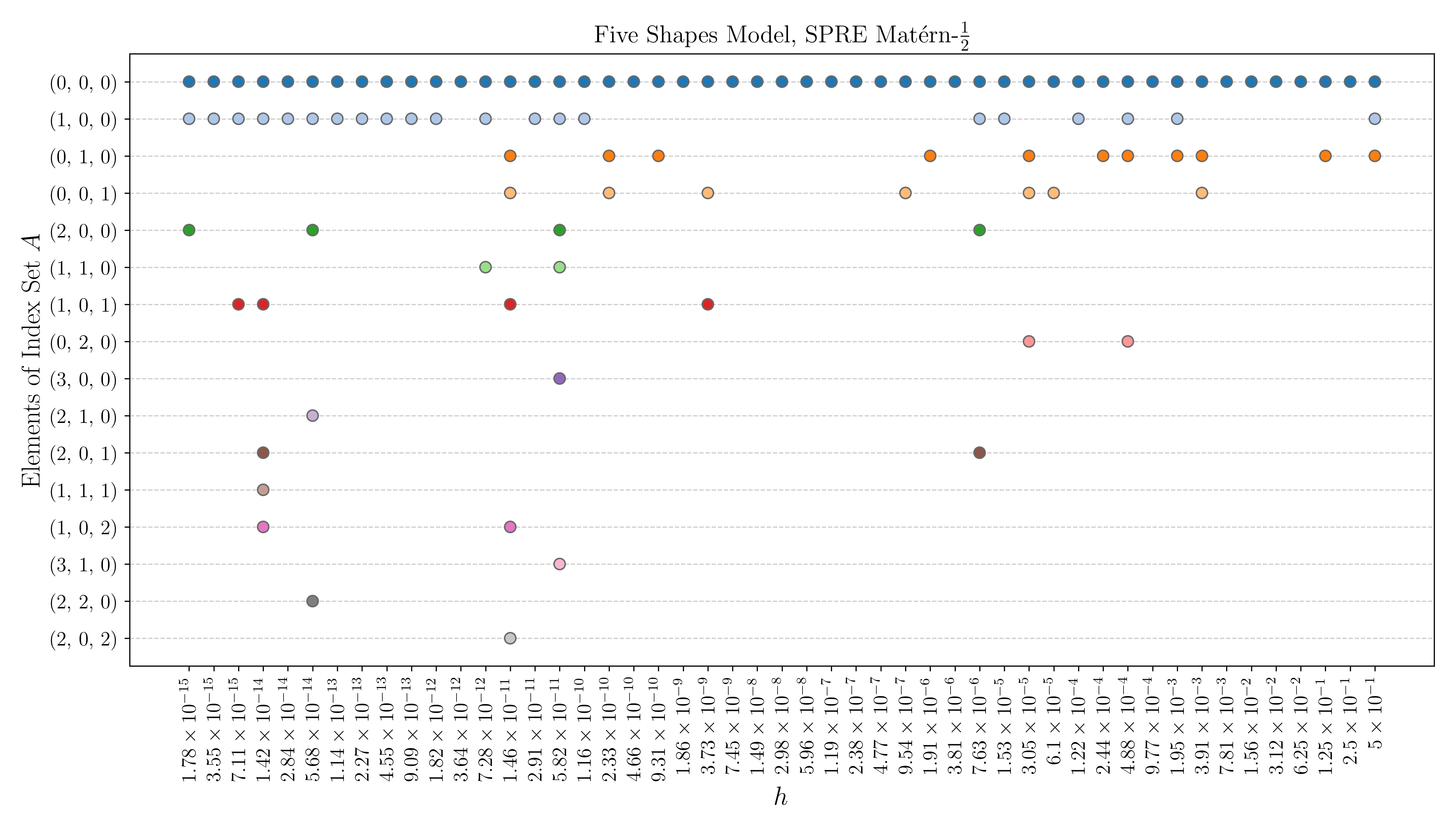}
    \caption{\emph{Five Shapes 3D Model}.  
    The index set $A$ was estimated based on a design of size $n = 2^d$, for a range of values of the scaling factor $h$ controlling the design set $X_n^h$.
    Here the Mat\'{e}rn-$\frac{1}{2}$ kernel was used.
    \alttext{A figure displaying the estimated index set $A$ as the global scaling parameter $h$ is varied.}
    }
    \label{fig: five shapes sparsity matern1/2}
\end{figure}

\begin{figure}[h!]
    \centering
    \includegraphics[width=\textwidth]{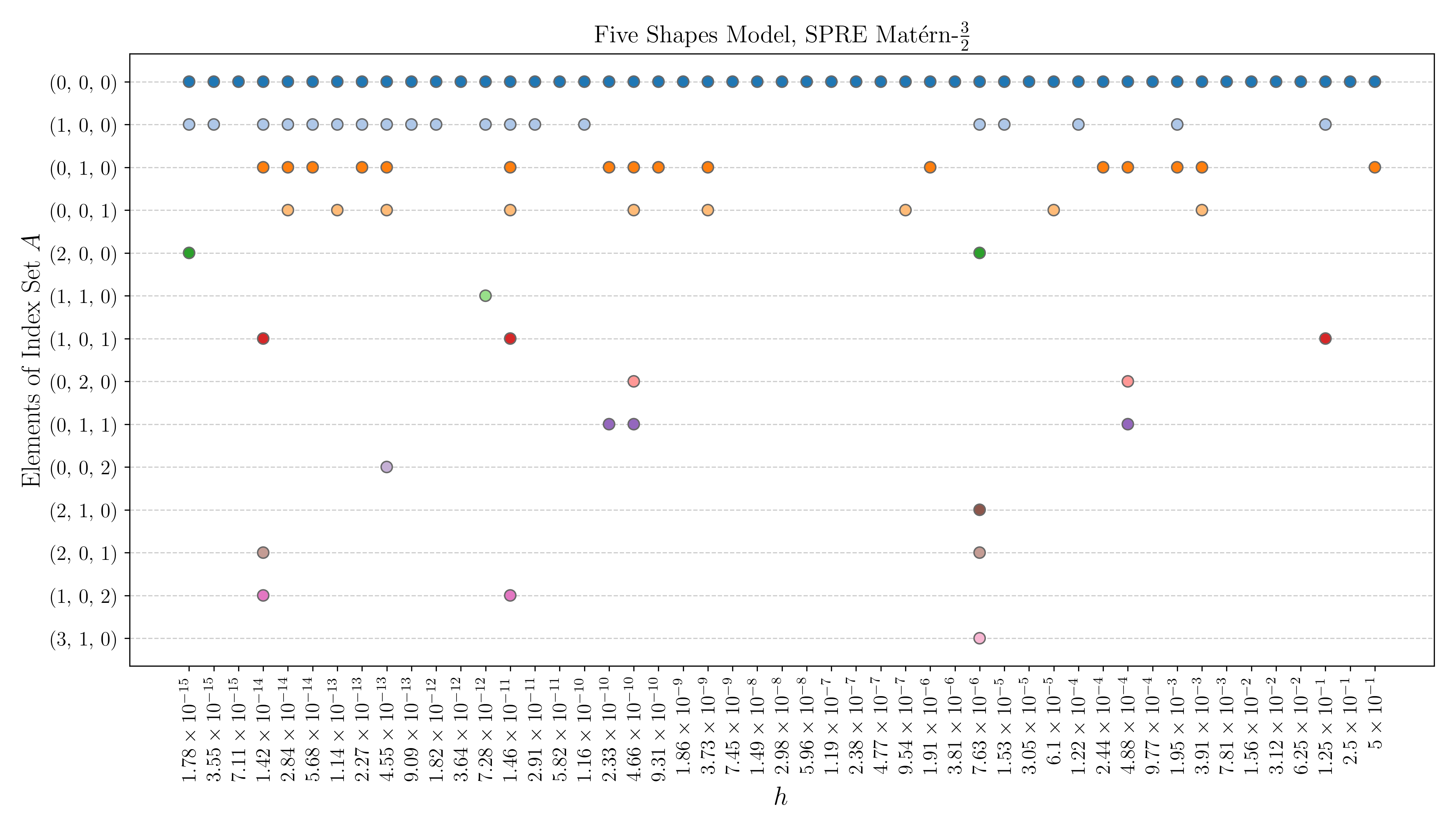}
    \caption{\emph{Five Shapes 3D Model}.  
    The index set $A$ was estimated based on a design of size $n = 2^d$, for a range of values of the scaling factor $h$ controlling the design set $X_n^h$.
    Here the Mat\'{e}rn-$\frac{3}{2}$ kernel was used.
    \alttext{A figure displaying the estimated index set $A$ as the global scaling parameter $h$ is varied.}
    }
    \label{fig: five shapes sparsity matern3/2}
\end{figure}

\begin{figure}[h!]
    \centering
    \includegraphics[width=\textwidth]{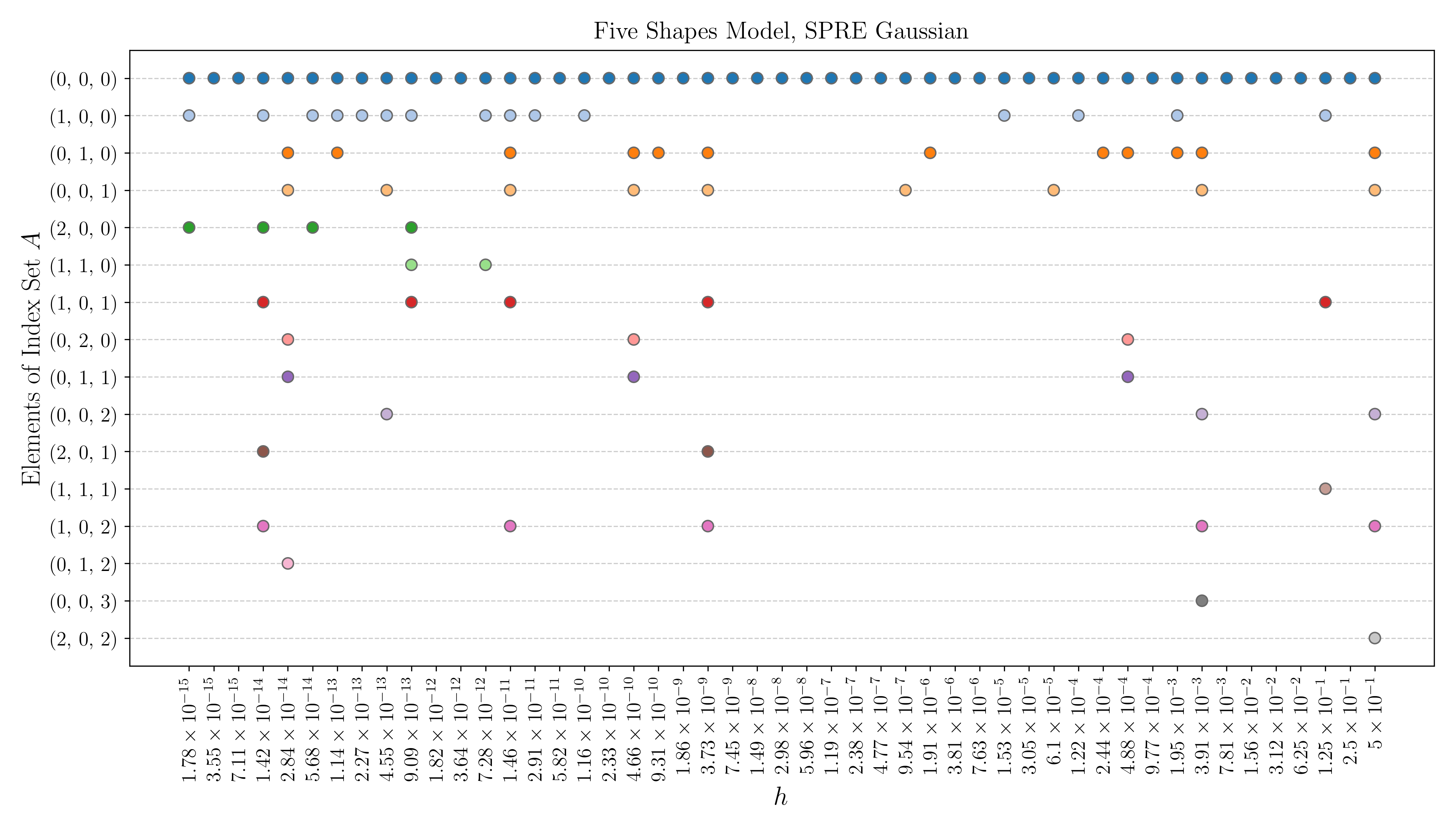}
    \caption{\emph{Five Shapes 3D Model}.  
    The index set $A$ was estimated based on a design of size $n = 2^d$, for a range of values of the scaling factor $h$ controlling the design set $X_n^h$.
    Here the Gaussian kernel was used.
    \alttext{A figure containing 4 panels, each of which shows the predictive error bars produced by SPRE as a function of the global scaling parameter $h$.
Each panel contains results for a different type of kernel.
\alttext{A figure displaying the estimated index set $A$ as the global scaling parameter $h$ is varied.}
}
    }
    \label{fig: five shapes sparsity gaussian}
\end{figure}

\begin{figure}[t!]
\centerline{\includegraphics[width=\textwidth]{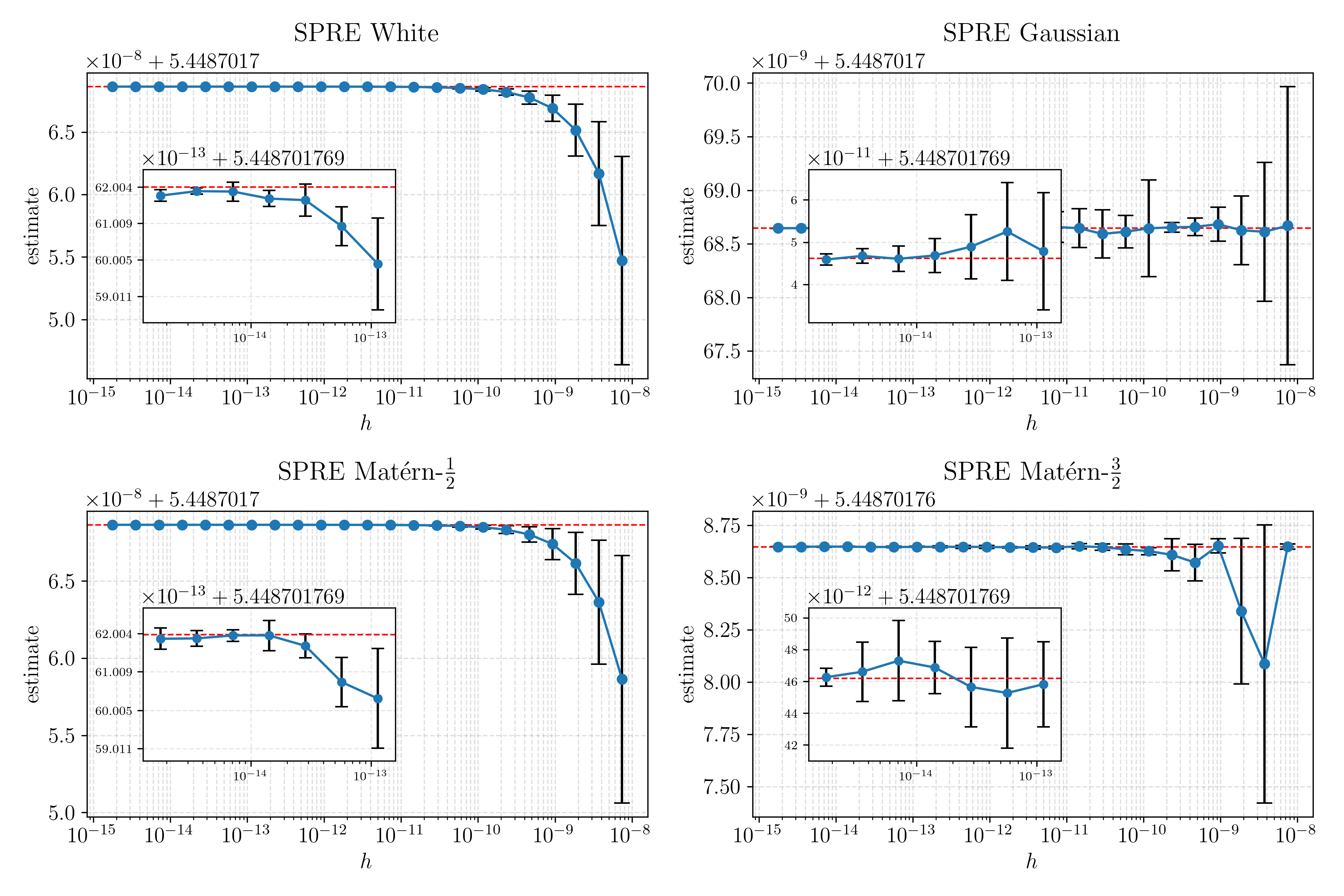}}
\caption{\textit{Agent-Based Flocking Model}.
The two standard deviation predictive intervals from \ac{SPRE} are displayed as a function of the scaling factor $h$ controlling the design set $X_n^h$.  
The true value $f(\mathbf{0})$ is shown as a red line, dashed. 
Top left to bottom right: white noise kernel, Gaussian kernel, Mat\'{e}rn-$\frac{1}{2}$ kernel, and Mat\'{e}rn-$\frac{3}{2}$.
}

\label{fig:flock-errorbars}
\end{figure}

\begin{figure}[h!]
    \centering
    \includegraphics[width=\textwidth]{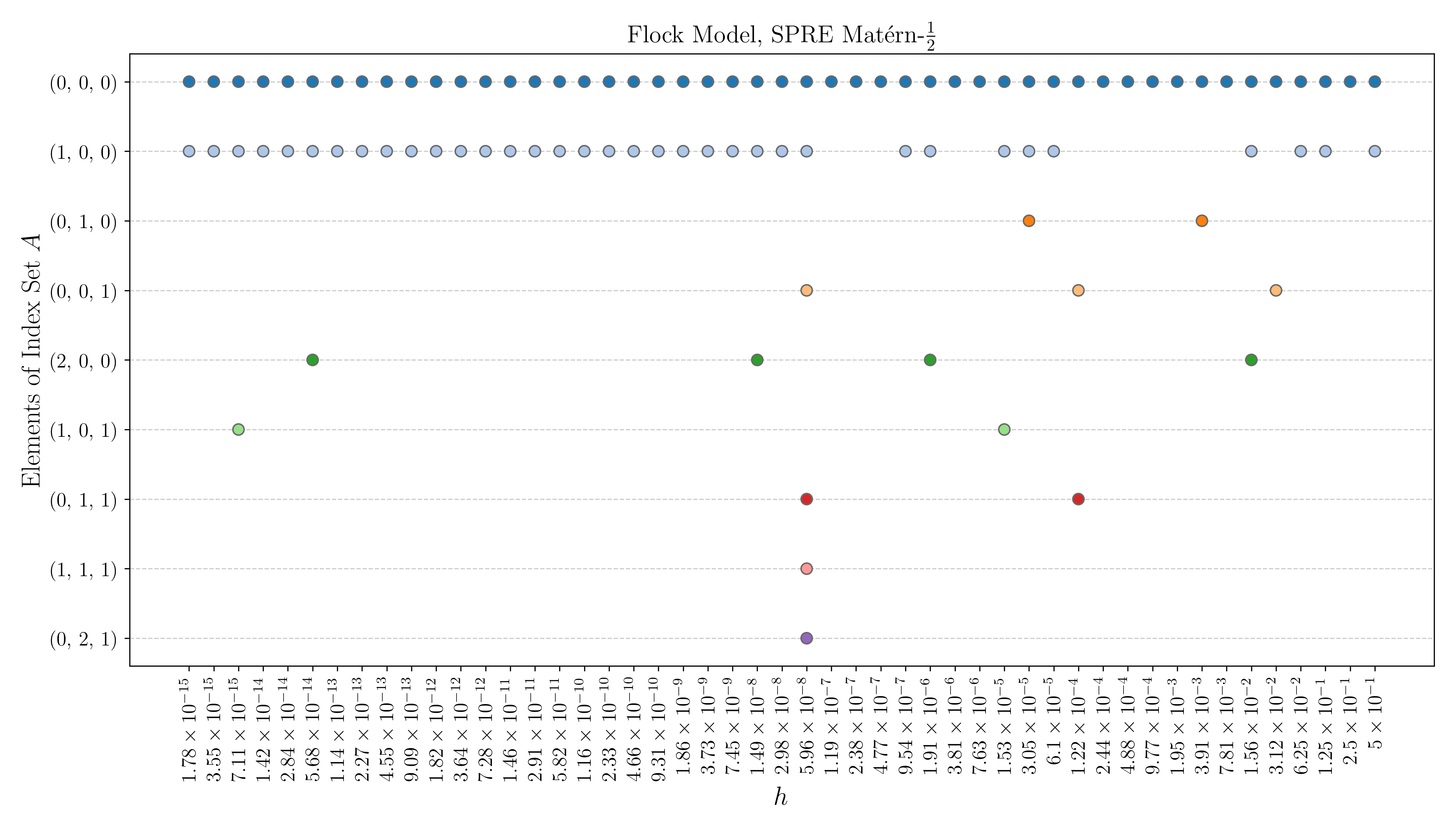}
    \caption{\emph{Agent-Based Flocking Model}.  
    The index set $A$ was estimated based on a design of size $n = 2^d$, for a range of values of the scaling factor $h$ controlling the design set $X_n^h$.
    Here the Mat\'{e}rn-$\frac{1}{2}$ kernel was used.
    \alttext{A figure displaying the estimated index set $A$ as the global scaling parameter $h$ is varied.}
    }
    \label{fig: flocking sparsity matern1/2}
\end{figure}

\begin{figure}[h!]
    \centering
    \includegraphics[width=\textwidth]{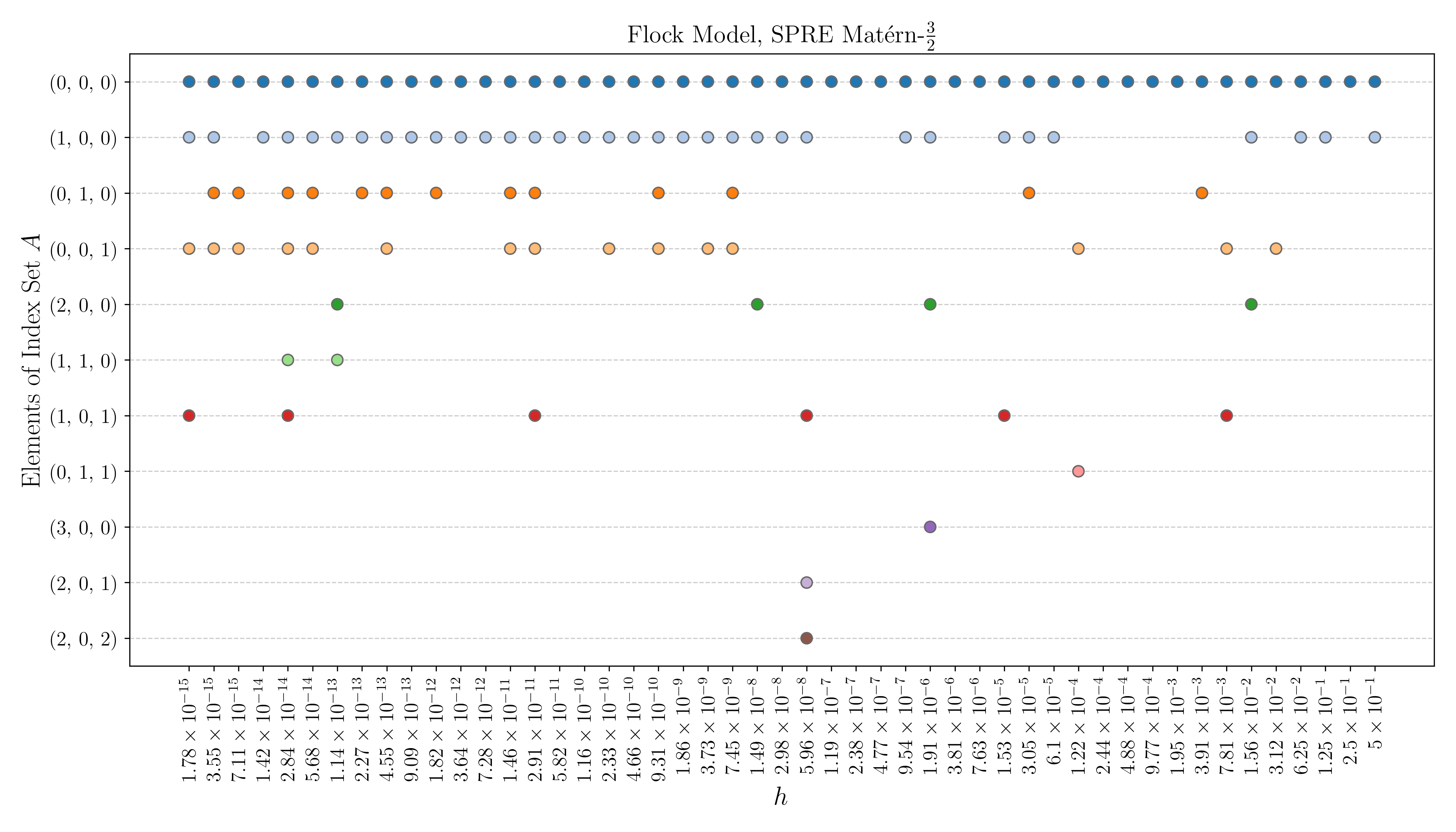}
    \caption{\emph{Agent-Based Flocking Model}.  
    The index set $A$ was estimated based on a design of size $n = 2^d$, for a range of values of the scaling factor $h$ controlling the design set $X_n^h$.
    Here the Mat\'{e}rn-$\frac{3}{2}$ kernel was used.
    \alttext{A figure displaying the estimated index set $A$ as the global scaling parameter $h$ is varied.}
    }
    \label{fig: flocking sparsity matern3/2}
\end{figure}

\begin{figure}[h!]
    \centering
    \includegraphics[width=\textwidth]{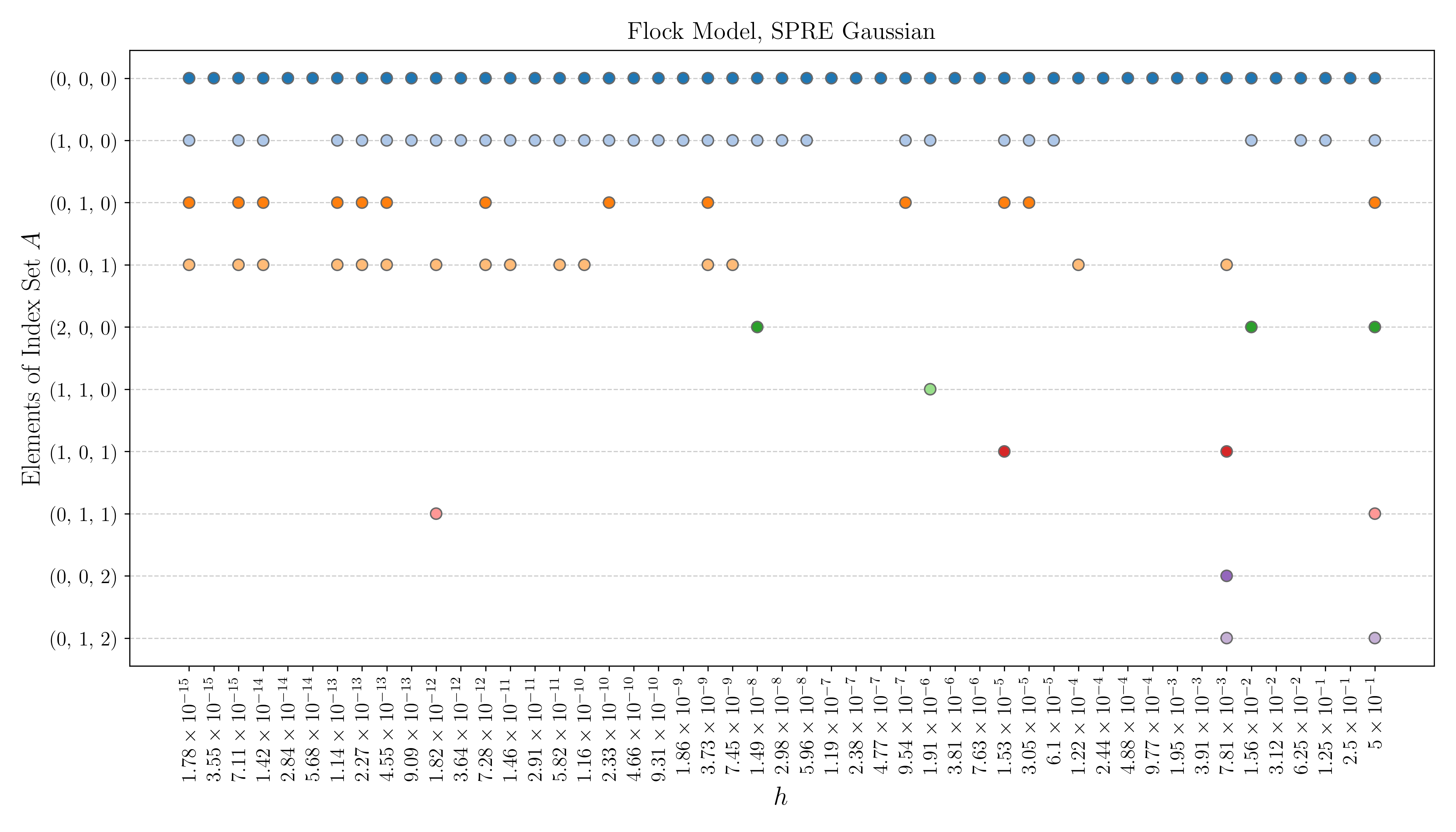}
    \caption{\emph{Agent-Based Flocking Model}.  
    The index set $A$ was estimated based on a design of size $n = 2^d$, for a range of values of the scaling factor $h$ controlling the design set $X_n^h$.
    Here the Gaussian kernel was used.
    \alttext{A figure displaying the estimated index set $A$ as the global scaling parameter $h$ is varied.}
    }
    \label{fig: flocking sparsity gaussian}
\end{figure}

\end{document}